\documentclass{article}

\usepackage{PRIMEarxiv}

\usepackage[utf8]{inputenc}
\usepackage[T1]{fontenc}
\usepackage{multirow}

\usepackage{amsmath}
\usepackage{amssymb}
\usepackage{amsfonts}
\usepackage{amsthm}

\usepackage[round]{natbib}

\usepackage{hyperref}
\usepackage{placeins}
\usepackage{url}
\usepackage{mathtools}

\usepackage{booktabs}
\usepackage{nicefrac}
\usepackage{microtype}

\usepackage{graphicx}
\usepackage{subcaption}
\usepackage{caption}
\usepackage{float}

\usepackage{xcolor}
\usepackage[english]{babel}

\usepackage{algorithm}
\usepackage[noend]{algpseudocode}
\algrenewtext{EndFor}{\textbf{end}}

\graphicspath{{WSD/}{images/}{figures/}}

\usepackage{fancyhdr}
\theoremstyle{definition}
\newtheorem{definition}{Definition}[section]
\theoremstyle{plain}
\newtheorem{theorem}{Theorem}[section]
\newtheorem{lemma}[theorem]{Lemma}
\newtheorem{proposition}[theorem]{Proposition}
\newtheorem{corollary}{Corollary}
\theoremstyle{remark}
\newtheorem{remark}{Remark}

\newcommand{\R}{\mathbb{R}}

\newcommand{\bx}{\mathbf{x}}
\newcommand{\calC}{\mathcal{C}}
\newcommand{\bI}{\boldsymbol{I}}
\newcommand{\calB}{\mathcal{B}}
\newcommand{\bP}{\mathbf{P}}

\title{Relative Wasserstein Spatial Depth for Cluster Number Selection
in Distributional Data}

\author{
  Chenqin Xu \\
  Columbia University \\
  \texttt{cx2328@columbia.edu} \\
  \And
  Yisha Yao\thanks{Corresponding author.} \\
  Columbia University \\
  \texttt{yy3381@columbia.edu} \\
}
\begin{document}
\maketitle

\begin{abstract}
%Modern datasets are increasingly represented as probability distributions, motivating clustering methods that operate in Wasserstein space. While recent advances extend classical clustering to distribution-valued data, most existing approaches assume that the number of clusters is known a priori, an assumption that rarely holds in practice. In this work, we address the problem of cluster number selection and robust clustering in Wasserstein space. We propose a new criterion, Relative Wasserstein Spatial Depth (RWSD), which leverages Wasserstein geometry and depth-based centrality to provide a robust and interpretable measure of cluster separation tailored to distributional data.  In addition, we develop a Wasserstein K-median clustering algorithm that improves robustness to heavy-tailed variability and outliers.  Empirical results demonstrate that the proposed method consistently outperforms classical distance-based criteria, including the silhouette score and Davies–Bouldin index, particularly in challenging settings with outliers, overlapping clusters, and heterogeneous cluster structures.

Contemporary data in many scientific domains, such as images, media streams, biomedical omics, are naturally modeled as probability distributions in Wasserstein space instead of points in Euclidean space.  Consequently, clustering methods for distributions are in high demand, as they provide exploratory frameworks to group objects with similarity. 
Clustering distribution-valued data in Wasserstein space often requires choosing the number of clusters in advance. We propose Relative Wasserstein Spatial Depth (RWSD) as a selection criterion for optimal number of clusters.  It compares the depth of each distribution in its assigned cluster with its depth in competing clusters. We pair this criterion with Wasserstein \(K\)-means or a proposed \(K\)-medians algorithm. Under suitable assumptions, we establish consistency of the cluster centers and of the selected optimal number of clusters. We also give an error rate under two-stage sampling and characterize the robustness of RWSD under any fixed partition. In extensive simulations, RWSD selects the intended number of clusters under unequal separations, heterogeneous dispersion, heavy tails, and outliers, where silhouette score or the Davies–Bouldin Index often fails. Applications to MNIST images and flow cytometry data illustrate the practical value of RWSD. 
In some cases where a cluster is not geodesically convex and lacks a meaningful depth center, the RWSD can fail.
%The classical $K$-mean clustering, which groups the data points based on their relative distance, seems to have a natural extension to Wasserstein space via replacing the Euclidean distance by Wasserstein distance. 
%The curvature inherent to Wasserstein geometry and lack of linear structure complicate extensions of Euclidean $K$-mean methods, both in defining cluster centers and in selecting the number of clusters. Existing works on Wasserstein $K$-mean clustering assume that the number of clusters $K$ is known, which is rarely the case in many applications. In this paper, we develop a criterion to determine the optimal $K$, which is based on the recently proposed Wasserstein Spatial Depth (WSD). 
\end{abstract}

\section{Introduction}
Contemporary data are increasingly complex and multi-faceted.
In many applications, each data object is more naturally represented as a probability distribution rather than a vector in Euclidean space, such as distributions of gene expression levels across spatial locations, connectivity profiles across brain regions, high-resolution images, and media streams. 
Such representation captures the intrinsic variability and geometric structure of the data, while providing a low-dimensional embedding in the Wasserstein space \citep{panaretos2019statistical,bachoc2024wasserstein}.

As an important exploratory analysis technique, clustering generally groups objects based on their similarity. 
When data objects are distributions, classical clustering methods based on Euclidean geometry may fail to capture meaningful differences in shape, dispersion, or support.
Clustering techniques adapting to a space of probability distributions are in high demand, which respect the Wasserstein geometry.
%Let $\mathbf{P}\in\mathcal{P}(\mathcal{P}_2(\mathbb{R}^d))$ denote a \emph{distribution of distributions},  that is, a probability measure whose elements are themselves distributions in the Wasserstein space.  Our objective is to analyze the structure of $\mathbf{P}$ by identifying coherent clusters among its constituent distributions.

\paragraph{Related work}
Several efforts have been made to devise clustering techniques for distribution-valued data. 
Trimmed $K$-barycenters method~\citep{del2019robust} improves stability and robustness by trimming the most discrepant distributions. 
A modified Wasserstein distance proposed in~\citep{Verdinelli2019hybrid} is computationally more tractable and accelerates various clustering procedures.
\citep{10.1214/21-EJS1834} take a different approach: they represent
each distribution as a finite-dimensional vector and then apply
classical clustering methods.
The Wasserstein $K$-means clustering framework \citep{zhuang2022wasserstein} generalizes classical $K$-means to Wasserstein space through a distance-based formulation
\begin{equation*}%\label{eq:w2-kmeans}
\min_{ \{\calC_k\}_{k=1}^K} \Big\{
\sum_{k=1}^{K} \frac{1}{|\calC_k|}\sum_{\mu_i, \mu_j \in \calC_k}
\mathcal{W}_2^2(\mu_i, \mu_j), \ \cup_{k+1}^K \calC_k = [n] \Big\},
\end{equation*}
and proposes a semidefinite program relaxation to approximate the solution.
%This is because the Wasserstein barycenter function is neither geodesically convex nor continuous in Wasserstein space. 
Spectral clustering is also generalized to Wasserstein space by constructing an adjacency matrix of pairwise Wasserstein distances~\citep{wang2024spectral}.
Together, these works open a new direction for unsupervised pattern discovery in Wasserstein space.

\paragraph{Motivation}
All these clustering methods assume that the number of clusters $K$ is known \emph{a priori}, which is not realistic in many applications. Selecting a suitable $K$ remains a challenge in practice \citep{still2004many}.

For clustering in Euclidean space, several measurements (statistics) have been proposed to pick $K$ and assess how statistically meaningful a partition is. 
\emph{Silhouette Score} \citep{rousseeuw1987silhouettes} is among the most widely used, which evaluates how similar each point is to its own cluster compared to other clusters. 
The Wasserstein version of silhouette score can be defined as follows.
Given a particular clustering scheme $\{ \calC_1, \ldots, \calC_K \}$, a distribution $\mu_i$ assigned to $\calC_{k_i}$ has silhouette score being the difference between its within-cluster dissimilarity and minimum out-of-cluster dissimilarity,
\begin{equation}\label{eq:silhouette}
s_i = \frac{b_i - a_i}{\max(a_i, b_i)},
\end{equation}
where $a_i = \frac{1}{|\calC_{k_i}|-1}\sum_{\mu_j \in \calC_{k_i}} \mathcal{W}_2(\mu_i,\mu_j)$
is the average dissimilarity of $\mu_i$ to the other members of its own
cluster and $b_i = \min_{\ell \ne k_i} \frac{1}{|\calC_\ell|}\sum_{\mu_j \in \calC_\ell} \mathcal{W}_2(\mu_i,\mu_j)$
is its smallest average dissimilarity to any other cluster.
%The silhouette score of $\mu_i$ is then given by $s_i = (b_i - a_i)/\max(a_i,b_i)$, which satisfies $s_i \in [-1,1]$.
A larger $s_i$ indicates better separation of $\calC_{k_i}$ from other clusters. 
In practice, one computes the average silhouette score
$\bar{s}(K) = \tfrac{1}{n}\sum_{i=1}^n s_i(K)$ for different $K$'s, and selects $\widehat{K} = \arg\max_K \ \bar{s}(K)$.
The silhouette score is sensitive to overlapping or 
nonconvex clusters, as it relies on averages of pairwise distances \citep{liu2010understanding}.
Another widely used measurement is the \emph{Davies--Bouldin Index (DBI)} \citep{davies1979cluster}, which evaluates the average of ratios between within-cluster scattering and between-cluster separation.  The Wasserstein version of DBI is defined as 
\begin{equation}\label{eq:dbi}
\begin{aligned}
\mathrm{DBI}(K) &= \frac{1}{K}\sum_{k=1}^{K}\ \max_{\ell \ne k}\
  \frac{R_k + R_\ell}{\mathcal{W}_2(\bar{\mu}_k,\, \bar{\mu}_\ell)}, \\
R_k &= \frac{1}{|\calC_k|}\sum_{\mu_i \in \calC_k} \mathcal{W}_2(\mu_i,\, \bar{\mu}_k),
\end{aligned}
\end{equation}
where $R_k$ quantifies the within-cluster scattering of $\calC_k$ and $\bar{\mu}_k$ is the representative of $\calC_k$ (e.g. its Wasserstein barycenter). 
A smaller DBI indicates tighter and more separated clusters, and the best $K$ is the one that minimizes DBI.   
DBI is sensitive to outliers and blurred cluster boundaries \citep{liu2010understanding,zaki2014data}, and it can be unstable when the cluster scale changes \citep{halkidi2001clustering}.
In Euclidean space, \citet{jornsten2002robust} proposed the relative $L_1$-depth 
for cluster validation.

Although these measurements can be extended to Wasserstein space by replacing Euclidean distances with $\mathcal{W}_2$, they suffer from limited robustness (being sensitive to outliers) and may neglect the rich geometry of Wasserstein space. These limitations motivate the development of a robust and interpretable measure tailored to Wasserstein space.

\paragraph{Our contribution.}
We introduce a new criterion, \emph{Relative Wasserstein spatial depth}  (RWSD), for selecting the number of clusters $K$ when performing Wasserstein $K$-center clustering.   
We also develop a \emph{Wasserstein $K$-median clustering} algorithm that is robust to outliers and clusters with heavy tailed dispersion.

%\textcolor{blue}{We devise a robust unsupervised clustering procedure for probability distributions lying within Wasserstein space. Each cluster is represented by the Wasserstein median instead of the barycenter. }

\section{Methodology} 
Throughout the paper, let $\mathcal{P}_2(\mathbb{R}^d)$ be the collection of probability distributions on $\mathbb{R}^d$ with finite second moment, among which the subset of absolutely continuous distributions is denoted as
$\mathcal{P}_2^{a.c}(\mathbb{R}^d)$. Let
$\mathcal{P}(\mathcal{P}_2(\mathbb{R}^d))$ be the collection of distributions of distributions in $\mathcal{P}_2(\mathbb{R}^d)$. 

\subsection{Wasserstein spatial depth}\label{sec:intro_to_WSD}
Classical statistical depth can be viewed as a multivariate analogue of the classical quantiles or ranks for univariate data. It provides a robust measure of centrality and outlyingness for multivariate data or functional observations \citep{zuo2000general}. Unlike distance-based measures, depth functions are affine-invariant and resistant to outliers, making them particularly suitable for analyzing heterogeneous data structures.
A new notion of depth defined on Wasserstein space is recently proposed in \citep{bachoc2024wasserstein}, named as {\it Wasserstein spatial depth} (WSD). 
The WSD of a distribution $Q \in \mathcal{P}_2^{a.c}(\mathbb{R}^d)$ with respect to a population of distributions $\mathbf{P} \in \mathcal{P}(\mathcal{P}_2(\mathbb{R}^d))$ is defined as
\begin{definition}[Wasserstein spatial depth  \citep{bachoc2024wasserstein}]\label{def:WSD}
\begin{equation*}
    \mathrm{WSD}(Q;\mathbf{P}) := 1 - \left( \int \left\| 
\mathbb{E}_{P \sim \mathbf{P}} \!\left[ 
\frac{x - T_{Q,P}(x)}{\mathcal{W}_2(Q,P)} \right] 
\right\|^2 dQ(x) \right)^{1/2}.
\end{equation*}
When $\mathcal{W}_2(P,Q)=0$, we set $\frac{x - T_{Q,P}(x)}{\mathcal{W}_2(P,Q)}=0$ by default.
\end{definition}
It provides a way to rank and order distributions, enabling the development of rank-based clustering techniques and inferential tools \citep{yaowasserstein} for Wasserstein space.  WSD is shown to preserve critical properties of conventional statistical depths, notably, ranging within $[0,1]$, transformation and geodesic invariance, vanishing at infinity, reaching a maximum at the geometric median, and continuity. 
Additionally, WSD has a straightforward plug-in estimator based on sampled empirical distributions, which is consistent and asymptotically normal.

In practice, we only have access to empirical distributions
$\hat\mu_{1,m}, \ldots, \hat\mu_{n,m}$ obtained by two-stage sampling: first
draw $\mu_1, \ldots, \mu_n \stackrel{iid}{\sim} \mathbf{P}$, and then
draw $m$ data points $X_{i,1}, \ldots, X_{i,m}$ from each $\mu_i$.
For $\ell \neq i$, let $T_{i\ell} := T_{\hat\mu_{i,m},\, \hat\mu_{\ell,m}}$
be the empirical ``optimal transport map'' from $\hat\mu_{i,m}$ to
$\hat\mu_{\ell,m}$ (the barycentric projection of the exact discrete
optimal plan; implementation details are deferred to
Appendix~\ref{app:wsd-impl}), write
$W_{i\ell} := \mathcal{W}_2(\hat\mu_{i,m}, \hat\mu_{\ell,m})$, and let
$\bar{u}_{i,j} := \frac{1}{n}\sum_{\ell \neq i}
\big(X_{i,j} - T_{i\ell}(X_{i,j})\big)/W_{i\ell}$
be the averaged normalized transport displacement of the $j$-th atom
of $\hat\mu_{i,m}$. The empirical WSD is then computed as
\begin{equation}\label{eq:empirical-WSD}
\mathrm{WSD}(\hat\mu_{i,m};\, \hat{\mathbf{P}}_{n,m})
= 1 - \Big(\frac{1}{m}\sum_{j=1}^{m} \|\bar{u}_{i,j}\|^2\Big)^{1/2}.
\end{equation}

\subsection{Relative Wasserstein spatial depth}\label{sec:propose_RWSD}
Building on the concept of WSD, we propose \emph{Relative Wasserstein Spatial Depth} (RWSD) as a robust index for selecting an optimal number of clusters $K$. 
The RWSD of a particular distribution $Q$ is defined as 
\begin{equation}\label{eq:relative-depth}
\mathrm{RWSD}(Q) = \mathrm{WSD}(Q;\mathcal{C}_{k_Q})-
    {\displaystyle \max_{j\neq k_Q}\mathrm{WSD}(Q;\mathcal{C}_j)},
\end{equation}
where $k_Q$ indicates which cluster $Q$ belongs to.  
RWSD compares within- and between-cluster depths, emphasizing centrality and separation rather than distance or variance. 
A larger $\mathrm{RWSD}(Q)$ indicates that $Q$ is considerably deeper within its own cluster than in others, implying a confident assignment.

Let $Q_1, Q_2, \cdots, Q_n$ be the set of distributions to be clustered. 
The optimal number of clusters $K^\ast$ can be determined by maximizing the average RWSD
\begin{equation}
\widehat K = \arg\max_K
\frac{1}{n}\sum_{i=1}^{n}\mathrm{RWSD}_K(Q_i),
\label{eq:optK-ReD}
\end{equation}
where $\mathrm{RWSD}_K(Q_i)$ denotes the relative WSD of $Q_i$ under the clustering solution with $K$ clusters.
This criterion favors partitions that are both internally cohesive and externally well separated in terms of Wasserstein geometry.
Compared to classical distance-based indices (silhouette score and DBI), RWSD respects Wasserstein geometry and is invariant under common transformations (\textit{e.g.}, scaling). Additionally, it is robust to unbalanced dispersions or separations, and overlapping clusters (shown in Section~\ref{sec:simulations}). The empirical version of RWSD is computed by simply plugging in the empirical WSD. 

To implement RWSD, we run Algorithm \ref{alg:selectK}.
\begin{algorithm}[H]
\caption{Select Optimal Number of Clusters $K$ by RWSD}
\label{alg:selectK}
\begin{algorithmic}[1]
  \State \textbf{Input:} a collection of probability $\{\mu_1,\dots,\mu_n\}\subset\mathcal P_2(\R^d)$, a candidate set $\mathcal{K}$ containing possible numbers of clusters
  \For{each $K\in\mathcal{K}$}
    \State Run Wasserstein $K$-centers to partition $\{\mu_i\}_{i=1}^n$ into $\calC_1, \calC_2, \ldots, \calC_K$. 
    \For{$i = 1$ to $n$}
      \State $\mathrm{RWSD}_K(\mu_i) \gets \mathrm{WSD}(\mu_i; \calC_{k_i}) - {\displaystyle \max_{j\neq k(\mu_i)}\mathrm{WSD}(\mu_i;\calC_j)}$
    \EndFor
    \State $\overline{\mathrm{RWSD}}(K) \gets \dfrac{1}{n}\sum_{i=1}^n \mathrm{RWSD}_K(\mu_i)$
  \EndFor
  \State $\widehat K \gets \arg\max_{K\in\mathcal{K}}\;\overline{\mathrm{RWSD}}(K)$
    \State \textbf{Output:} the optimal cluster count $\widehat K$
\end{algorithmic}
\end{algorithm}
In the above procedure, the ``Wasserstein $K$-centers step'' can be either Wasserstein $K$-means or Wasserstein $K$-medians. The details are deferred to Appendix~\ref{app:methods}. The complexity of Algorithm~\ref{alg:selectK} is described in Appendix~\ref{app:complexity}.

\section{Theoretical Guarantee}\label{sec:theory}
%The empirical $\mathrm{RWSD}_K$ is a function of the empirical $K$-partition and the empirical WSD. Hence, the  deviation of empirical RWSD from its population version comes from the randomness in the empirical Wasserstein $K$-means clustering and empirical WSD. Establishing the consistency of empirical RWSD therefore requires two ingredients: (a) consistency of the empirical WSD, which is established in \citet[Theorem~6.3]{bachoc2024wasserstein}; (b) consistency of Wasserstein $K$-means clustering, \textit{i.e.}, convergence of the empirical cluster centers to the well-defined population analogies as $n\rightarrow\infty$, a strong consistency in the sense of \citet{pollard1981strong}. Theorem~\ref{thm:W2-kmeans} establishes the consistency of Wasserstein $K$-means.  Theorem~\ref{thm:rwsd-abstract} shows that the empirical RWSD is a consistent estimator of its population counterpart if the clustering is consistent.  
The empirical $\mathrm{RWSD}_K$ depends on both the estimated $K$-partition and the empirical WSD. We first establish strong consistency of the cluster centers for Wasserstein $K$-means (Theorem~\ref{thm:W2-kmeans}) and $K$-medians (Theorem~\ref{thm:W1-kmedians}). Combining consistent clustering with the consistency of empirical WSD established by \citet[Theorem~6.3]{bachoc2024wasserstein}, we show that empirical RWSD converges to its population counterpart (Theorem~\ref{thm:rwsd-abstract}). Under a unique population maximizer and a positive separation margin, this yields consistency of the selected number of clusters. We then quantify the non-asymptotic error due to sampling $n$ distributions from the population $\mathbf{P}$ and approximating each distribution by $m$ data points. We also characterize the robustness of RWSD is via bounding its influence function. 

Let $(X,d)$ be a Polish metric space and let $\mathcal{P}_2(X)$ denote the space of 
Borel probability measures on $X$ with finite second moment, endowed with the $2$-Wasserstein distance.
%\begin{equation}
%\mathcal{W}_2(\mu,\nu) := \left(\inf_{\pi\in\Pi(\mu,\nu)} \int d(x,y)^2\, d\pi(x,y) \right)^{1/2},
%\end{equation}
%where $\Pi(\mu,\nu)$ denotes the set of all couplings between $\mu$ and $\nu$.
We observe i.i.d.\ random objects taking values in $\mathcal{P}_2(X)$
\[
\mu_1,\mu_2,\ldots, \mu_n \overset{iid}{\sim} \mathbf{P}, \quad\mathbf{P}\in \mathcal{P}(\mathcal{P}_2(X))
\]
with the empirical version of $\mathbf{P}$ denoted as $\mathbf{P}_n := \frac{1}{n}\sum_{i=1}^n \delta_{\mu_i}$. 

Given a finite set $A\subset\mathcal{P}_2(X)$ and a distribution of distributions 
$\mathbf{P}\in \mathcal{P}(\mathcal{P}_2(X))$, consider the function defined in \citep{pollard1981strong}
\begin{equation}\label{eq:Phi-def}
\Phi(A,\mathbf{P})
   := \int \min_{\nu\in A} \mathcal{W}_2^2(\mu,\nu)\, \mathbf{P}(d\mu).
\end{equation}
For $K \ge 1$, let $m_K(\mathbf{P}) := \inf_{|A|\le K}\Phi(A,\mathbf{P})$
denote the minimal value of $\Phi(\,\cdot\,,\mathbf{P})$ over sets of
cardinality at most $K$, and let
$ A_K^\star \in \arg\min_{|A|\le K}\Phi(A,\mathbf{P})$ be a population
minimizer. Their empirical versions $m_K(\mathbf{P}_n)$ and
$\widehat A_{n,K} \in \arg\min_{|A|\le K}\Phi(A,\mathbf{P}_n)$ are
defined in the same way with $\mathbf{P}_n$ in place of $\mathbf{P}$.

Throughout this section, we assume the base space $(X,d)$ is compact, and hence, the Wasserstein space $\mathcal{P}_2(X)$ is itself compact under $\mathcal{W}_2$. It further implies that the above
population and empirical minimizers exist.

\begin{theorem}[Consistency of Wasserstein $k$-means]\label{thm:W2-kmeans}
Assume:
\begin{enumerate}
%\item[(i)] $(X,d)$ is a compact metric space. Hence $\mathcal{P}_2(X)$ is compact under the $\mathcal{\mathcal{W}_2}$ metric.

\item[(i)]
For some (and therefore every) $\mu_0\in\mathcal{P}_2(X)$,
\begin{equation*}
    \int_{\mathcal{P}_2(X)} \mathcal{W}_2(\mu,\mu_0)^2\,\bP(d\mu) < \infty.
\end{equation*}

\item[(ii)]
For each $k=1,\ldots,K$, the population minimizer $A^\star_k$ satisfying $\Phi(A^\star_k,\bP) = m_k(\bP)$ is unique up to permutation.

\item[(iii)]
The population $k$-means values are strictly separated across consecutive model sizes:
\begin{equation*}
    m_{k-1}(\bP) > m_k(\bP),
    \qquad k=2,\ldots,K .
\end{equation*}
\end{enumerate}
Then the empirical $k$-means centers are strongly consistent:
\[
\widehat A_{n,k} \longrightarrow A^\star_k
\quad\text{a.s.\ in the Hausdorff metric},
\]
and the empirical optimal values satisfy
\[
\Phi(\widehat A_{n,k}, \bP_n)\to m_k(\bP)
\qquad\text{a.s.}
\]
\end{theorem}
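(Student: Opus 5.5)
The plan is to follow Pollard's (1981) strong-consistency argument, but to replace its use of compact balls in $\mathbb{R}^d$ by a Wasserstein-type convergence of $\bP_n$ to $\bP$ on the Polish space $(\mathcal{P}_2(X),\mathcal{W}_2)$. The argument proceeds by induction on $k=1,\dots,K$.

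\textbf{Step 1: almost-sure convergence of $\bP_n$.} Write $\mathbb{W}_2$ for the $2$-Wasserstein distance between elements of $\mathcal{P}(\mathcal{P}_2(X))$, built on the ground metric $\mathcal{W}_2$. By Varadarajan's theorem, $\bP_n\to\bP$ weakly almost surely. By the strong law of large numbers applied under assumption (i),
\[
\int \mathcal{W}_2^2(\mu,\mu_0)\,\bP_n(d\mu)\to\int \mathcal{W}_2^2(\mu,\mu_0)\,\bP(d\mu)
\quad\text{a.s.}
\]
Together these give $\mathbb{W}_2(\bP_n,\bP)\to0$ almost surely.

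\textbf{Step 2: Lipschitz control of $\Phi$.} For a finite set $A$, write $\mathcal{W}_2(\mu,A):=\min_{\nu\in A}\mathcal{W}_2(\mu,\nu)$. This gives the elementary bound
\[
|\mathcal{W}_2^2(\mu,A)-\mathcal{W}_2^2(\mu',A)|\le \mathcal{W}_2(\mu,\mu')\big(\mathcal{W}_2(\mu,A)+\mathcal{W}_2(\mu',A)\big).
\]
Integrate this against an optimal coupling of $(\bP_n,\bP)$ and apply Cauchy--Schwarz. For every $A$ contained in the ball $B(\mu_0,R)$ this yields
\[
|\Phi(A,\bP_n)-\Phi(A,\bP)|\le \mathbb{W}_2(\bP_n,\bP)\,\big(2R+M_n+M\big),
\]
where $M_n$ and $M$ are the square roots of the second moments of $\bP_n$ and $\bP$ about $\mu_0$. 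Hence $\Phi(\cdot,\bP_n)\to\Phi(\cdot,\bP)$ uniformly over such $A$, almost surely.

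\textbf{Step 3: a priori boundedness of the empirical centers.} This is Pollard's induction step, and it uses assumption (iii). Suppose some center of $\widehat A_{n,k}$ escapes to infinity along a subsequence. Deleting it costs asymptotically nothing, because of the uniform integrability of $\mathcal{W}_2^2(\cdot,\mu_0)$ under $\bP_n$ supplied by Step 1. Then
\[
\limsup_n m_k(\bP_n)\ge m_{k-1}(\bP)>m_k(\bP).
\]
On the other hand, $m_k(\bP_n)\le \Phi(A^\star_k,\bP_n)\to m_k(\bP)$, which is a contradiction. So there is a deterministic $R$ such that, almost surely, $\widehat A_{n,k}\subset B(\mu_0,R)$ for all large $n$. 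Step 2 then gives
\[
\Phi(\widehat A_{n,k},\bP)-m_k(\bP)\to0
\quad\text{and}\quad
\Phi(\widehat A_{n,k},\bP_n)=m_k(\bP_n)\to m_k(\bP),
\]
the latter being the second claim of the theorem.

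\textbf{Step 4: minimizing sequences converge (main obstacle).} The remaining claim is that every minimizing sequence of $\Phi(\cdot,\bP)$ lying in $B(\mu_0,R)$ converges to $A^\star_k$ in the Hausdorff metric. Uniqueness (ii) alone gives this when balls are compact, but not in a general Polish space. I would first prove that $\{\widehat A_{n,k}\}$ is relatively compact. Each empirical center is a Fréchet mean of the restriction of $\bP_n$ to its Voronoi cell. By (iii), every cell carries $\bP_n$-mass bounded below, since otherwise a center could be dropped. These restricted measures are therefore dominated by $c^{-1}\bP_n$, and by Step 1 they form a $\mathbb{W}_2$-relatively compact family. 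I would then invoke a stability result for Wasserstein barycenters (of Le Gouic--Loubes type): Fréchet means of a $\mathbb{W}_2$-relatively compact family of measures are relatively compact, and limits of barycenters are barycenters of the limit. With this, any limit point $A$ of $\widehat A_{n,k}$ exists. By lower semicontinuity of $\Phi(\cdot,\bP)$, which follows from Fatou's lemma, $A$ minimizes $\Phi(\cdot,\bP)$. By (ii), $A=A^\star_k$, so the whole sequence converges almost surely. Establishing this relative-compactness step is where I expect the main difficulty; if the barycenter-stability result is unavailable in the required generality, I would have to add a local compactness hypothesis on the support of $\bP$ at exactly this point.
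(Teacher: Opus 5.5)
\textbf{The gap is your Step~4, which you flagged yourself.} In a general Polish base space, nothing in your Steps~1--3 makes $\{\widehat A_{n,k}\}$ relatively compact in the Hausdorff metric. So you cannot extract a limit point to which uniqueness~(ii) applies. The route you sketch does not close this either. The Le~Gouic--Loubes stability theorem for barycenters is proved for separable, locally compact \emph{geodesic} base spaces, and the statement does not give you that. In that generality even the existence of $\widehat A_{n,k}$, and of the cell barycenters you invoke, is not automatic. As written, the proposal therefore gives $\Phi(\widehat A_{n,k},\mathbf{P}_n)\to m_k(\mathbf{P})$ and $\Phi(\widehat A_{n,k},\mathbf{P})\to m_k(\mathbf{P})$, but not Hausdorff convergence of the centers. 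A smaller point: deleting an escaping center in Step~3 also needs another center that provably stays bounded, to absorb the reassigned mass. This is the first half of Pollard's argument and should be stated.

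\textbf{The missing idea is compactness of the base space.} The paper assumes throughout Section~3 that $(X,d)$ is compact, and its proof relies on this.
\begin{itemize}
\item $\mathcal{P}_2(X)=\mathcal{P}(X)$ is weakly compact by Prokhorov, and $\mathcal{W}_2$ metrizes weak convergence on a compact base. So $(\mathcal{P}_2(X),\mathcal{W}_2)$ is compact with finite diameter $D$.
\item By the Blaschke selection theorem, the family $\mathcal{A}_k$ of center sets with at most $k$ points is compact under $H$.
\item The bound $|\Phi(A,\mathbf{Q})-\Phi(B,\mathbf{Q})|\le(2D+H(A,B))\,H(A,B)$, a finite $H$-net of $\mathcal{A}_k$, and the SLLN for integrands bounded by $D^2$ together give $\sup_{A\in\mathcal{A}_k}|\Phi(A,\mathbf{P}_n)-\Phi(A,\mathbf{P})|\to0$ almost surely.
\item Every subsequence of $\widehat A_{n,k}$ then has an $H$-convergent further subsequence. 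Its limit attains $m_k(\mathbf{P})$, so it equals $A^\star_k$ by~(ii).
\end{itemize}

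\textbf{Your Steps~1--3 are correct but heavier than needed here.} Uniform convergence over all of $\mathcal{A}_k$ replaces the $\mathbb{W}_2$-convergence of $\mathbf{P}_n$, the uniform integrability, and the induction on $k$. It yields $m_{k-1}(\mathbf{P}_n)\to m_{k-1}(\mathbf{P})$ directly for the level-separation step, and the boundedness step becomes essentially vacuous. Your machinery would pay off in a genuinely noncompact setting such as $X=\mathbb{R}^d$, where the Le~Gouic--Loubes hypotheses hold and your Step~4 sketch could plausibly be completed.
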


The argument of Theorem~\ref{thm:W2-kmeans} can be naturally extended to establish the consistency of Wasserstein $K$-medians, whose objective is
\[
\Phi_1(A, \mathbf{P}) := \int \min_{\nu \in A} \mathcal{W}_2(\mu, \nu)\,d\mathbf{P}(\mu).
\]
Let $m^{(1)}_k(\mathbf{P}) := \inf_{|A| \le k} \Phi_1(A, \mathbf{P})$,
and let $A^{(1)}_k$ and $A^{(1)}_{n,k}$ be population and empirical
minimizers of $\Phi_1$ over sets of cardinality at most $k$.

\begin{theorem}[Consistency of Wasserstein $K$-medians] \label{thm:W1-kmedians}
Suppose the Wasserstein median of any cluster under consideration is unique.
Under the same set of assumptions in Theorem~\ref{thm:W2-kmeans} except replacing $\Phi(A, \mathbf{P})$ by $\Phi_1(A, \mathbf{P})$, 
%Assume that, for each $k \le K$, the population $K$-medians configuration $A^{(1)}_k$ is unique up to permutation. Further assuming $m^{(1)}_{k-1}(\mathbf{P}) > m^{(1)}_k(\mathbf{P})$ for $k = 2, \ldots, K$,
it holds that $A^{(1)}_{n,k} \to A^{(1)}_k$ almost surely in the Hausdorff metric, and
$\Phi_1(A^{(1)}_{n,k}, \mathbf{P}_n) \to m^{(1)}_k(\mathbf{P})$ almost surely. 
%Consequently, Theorem~\ref{thm:rwsd-abstract} and Corollary~\ref{cor:Kstar-consistency} hold for the $K$-medians pipeline under the same uniqueness assumption.
\end{theorem}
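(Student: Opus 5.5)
We follow Pollard's strong-consistency argument for $k$-means, adapted to the metric space $(\mathcal{P}_2(X),\mathcal{W}_2)$ and to the loss $\phi(t)=t$ in place of $t^2$. The key structural fact is that $\mathcal{P}_2(X)$ is compact, so $D:=\operatorname{diam}\mathcal{P}_2(X)<\infty$. The class of functions
\[
\mathcal{F}_k=\Big\{\mu\mapsto \min_{\nu\in A}\mathcal{W}_2(\mu,\nu):\ |A|\le k\Big\}
\]
is then uniformly bounded by $D$. It is also uniformly Lipschitz in the parameter: for finite sets $A,B$ with $|A|,|B|\le k$ and Hausdorff distance $d_H(A,B)\le\delta$, we have $|\min_{A}\mathcal{W}_2(\mu,\cdot)-\min_{B}\mathcal{W}_2(\mu,\cdot)|\le\delta$ by the triangle inequality.

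\textbf{Step 1: uniform strong law.} We prove
\[
\sup_{|A|\le k}|\Phi_1(A,\mathbf{P}_n)-\Phi_1(A,\mathbf{P})|\to0 \quad\text{a.s.}
\]
The parameter space of $k$-point sets is compact in the Hausdorff metric, being a continuous image of $\mathcal{P}_2(X)^k$. Given $\varepsilon>0$, we pick a finite $\varepsilon$-net $A^{(1)},\dots,A^{(N)}$ of this space and apply the ordinary SLLN to each bounded function $\min_{A^{(j)}}\mathcal{W}_2(\mu,\cdot)$. The Lipschitz bound then transfers the convergence to all $A$ with error at most $2\varepsilon$. Intersecting countably many null sets over $\varepsilon=1/r$ gives the claim. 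Because $\inf$ is $1$-Lipschitz under sup-norm perturbations, this immediately yields
\[
m^{(1)}_k(\mathbf{P}_n)=\Phi_1(A^{(1)}_{n,k},\mathbf{P}_n)\to m^{(1)}_k(\mathbf{P}) \quad\text{a.s.},
\]
which is the second conclusion. Assumption (i) is automatic here, but we note that it is what replaces the compactness in Pollard's original argument.

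\textbf{Step 2: consistency of centers.} We work on the almost-sure event of Step 1. Combining Step 1 with the convergence of the optimal values gives
\[
\Phi_1(A^{(1)}_{n,k},\mathbf{P})\le \Phi_1(A^{(1)}_{n,k},\mathbf{P}_n)+o(1)=m^{(1)}_k(\mathbf{P})+o(1),
\]
so $A^{(1)}_{n,k}$ is an asymptotically minimizing sequence for $\Phi_1(\cdot,\mathbf{P})$. By compactness, every subsequence has a further subsequence converging in Hausdorff distance to some $A'$ with $|A'|\le k$. By continuity of $\Phi_1(\cdot,\mathbf{P})$, which follows from the Lipschitz bound, we get $\Phi_1(A',\mathbf{P})=m^{(1)}_k(\mathbf{P})$.

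The main obstacle is the possible degeneration of the limit, namely that $A'$ has fewer than $k$ distinct points because two empirical centers merge. We exclude this using the strict separation $m^{(1)}_{k-1}(\mathbf{P})>m^{(1)}_k(\mathbf{P})$. If $|A'|\le k-1$, then
\[
\Phi_1(A',\mathbf{P})\ge m^{(1)}_{k-1}(\mathbf{P})>m^{(1)}_k(\mathbf{P}),
\]
a contradiction. Hence $A'$ has exactly $k$ points and is a population minimizer. Uniqueness up to permutation, assumption (ii), forces $A'=A^{(1)}_k$. Since every subsequence has a further subsequence with the same limit, the whole sequence converges: $A^{(1)}_{n,k}\to A^{(1)}_k$ a.s. in the Hausdorff metric.

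\textbf{Role of median uniqueness.} The assumed uniqueness of the Wasserstein median of each cluster is what makes (ii) meaningful for $\Phi_1$. This matters because $\mathcal{W}_2(\mu,\cdot)$, unlike $\mathcal{W}_2^2$, is not strictly convex along generalized geodesics, so nonuniqueness of minimizers is a genuine possibility for the $K$-medians objective. With this in place the argument is identical to the one for Theorem~\ref{thm:W2-kmeans}, with the squared loss replaced by the unsquared one.
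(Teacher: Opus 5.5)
Your proof is correct. Its core matches the paper's proof:
\begin{itemize}
\item the bound $|g^{(1)}_A(\mu)-g^{(1)}_B(\mu)|\le H(A,B)$;
\item a uniform strong law from a finite Hausdorff net plus the ordinary SLLN;
\item a subsequence-and-compactness identification step closed off by the uniqueness assumption (ii).
\end{itemize}

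\textbf{Where the routes differ.} The paper transplants all three steps of Pollard's argument. Steps~1--2 first localize every empirical center in a fixed ball $\overline{B}$ around an anchor $\mu_0\in\mathrm{supp}(\mathbf{P})$. Level separation (iii) is used in Step~2: deleting a far-away center would raise the cost by less than $m^{(1)}_{k-1}(\mathbf{P})-m^{(1)}_k(\mathbf{P})$. Only in Step~3 does the paper extract convergent subsequences, inside $\mathcal{A}_k^{\overline{B}}$.

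You observe, correctly, that this localization is superfluous here. $\mathcal{A}_k$ is already compact, so you can go straight to the identification step. Your compactness argument, via the continuous surjection $(\nu_1,\dots,\nu_k)\mapsto\{\nu_1,\dots,\nu_k\}$ from $\mathcal{P}_2(X)^k$, is also more elementary than the paper's Blaschke selection plus closedness argument.

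\textbf{Use of (iii).} You use (iii) only to exclude a degenerate limit with fewer than $k$ points. This is redundant if (ii) is read as in the paper, namely uniqueness among all sets with $|A|\le k$. It does make your argument robust to the weaker reading ``unique among $k$-point sets.'' The paper additionally notes that (iii) for $\Phi_1$ follows from (ii): if $m^{(1)}_{k-1}=m^{(1)}_k$, then a $(k-1)$-optimal $A$ and $A\cup\{\nu\}$ would be two distinct $k$-optimal sets. You simply take (iii) as given, which the statement permits.

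\textbf{What each route buys.} The paper's route keeps fidelity to Pollard's Euclidean template, where localization is indispensable. However, the paper's Step~3 needs $\overline{B}$ to be compact, and that holds only because $\mathcal{P}_2(X)$ is compact; closed $\mathcal{W}_2$-balls in $\mathcal{P}_2(\mathbb{R}^d)$ are not. So the paper's argument relies on global compactness exactly as yours does. Your shorter route therefore loses no generality.
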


\begin{remark}\label{rem:kmedians}
In general, the uniqueness of the Wasserstein median remains open. We list some special cases in Appendix~\ref{app:kmedians-proof}, where the Wasserstein median is indeed unique. 
%The uniqueness of the population $K$-medians configuration is an identification assumption. Unlike its $K$-means counterpart, which is supported by the uniqueness of Wasserstein barycenters \citep{agueh2011barycenters}, it has no general sufficient condition: uniqueness of a Wasserstein median is open in general and fails for two-member clusters. We verify it at the level of single clusters for location families and for nondegenerate Gaussians with commuting covariances, where the problem reduces to Euclidean $K$-medians (Appendix~\ref{app:kmedians-proof}); outside such families its validity is open.
\end{remark}

We next show that whenever the Wasserstein $K$-center is consistent in the sense of Theorem~\ref{thm:W2-kmeans} for $K$-mean and Theorem~\ref{thm:W1-kmedians} for $K$-median, the corresponding empirical RWSD is a consistent estimator of its population counterpart. %This abstracts the role of the clustering step and applies to any clustering procedure satisfying the consistency hypothesis below.
\begin{theorem}[Consistency of empirical RWSD under consistent clustering]
\label{thm:rwsd-abstract}
Let $\mathbf{P} \in \mathcal{P}(\mathcal{P}_2(\mathbb{R}^d))$ be atomless, and suppose 
the Wasserstein $K$-center clustering method produces empirical cluster centers 
$\{\widehat{A}_{n,k}\}_{k=1}^{K}$ that converge almost surely in the Hausdorff metric induced by $\mathcal{W}_2$ to a set of unique population centers $\{A_k^\star\}_{k=1}^{K}$. 
Assume further that
%the population partition $\{\mathcal{C}_k\}_{k=1}^{K}$ induced by $\{A_k^\star\}_{k=1}^{K}$ is well-separated, in the sense that 
$\mathbf{P}$-almost every $\mu$ has a unique nearest center in $A_k^\star$ for every $k=1, \ldots, K$. 
%for the partition $\{\hat{\mathcal{C}}_k^{(n,m)}\}_{k=1}^{K}$ induced by $\{\hat{A}_{n,k}\}_{k=1}^{K}$ and the corresponding 
Then the empirical RWSD satisfies
\[
\mathrm{RWSD}_K(\hat\mu_{i,m}) \;\xrightarrow{P}\; \mathrm{RWSD}_K(\mu_i)
\quad \text{as } n, m \to \infty,
\]
for $\mathbf{P}$-almost every $\mu_i$. 
\end{theorem}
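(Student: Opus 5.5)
We reduce the statement to two ingredients: stability of the cluster label of a fixed $\mu_i$ under the estimated centers, and consistency of the empirical WSD within each fixed cluster. First we fix the population partition. Let $\calC_1^\star,\dots,\calC_K^\star$ be the Voronoi cells induced by $A_K^\star=\{a_1,\dots,a_K\}$ in $(\mathcal{P}_2,\mathcal{W}_2)$. The population cluster populations are the conditional laws $\mathbf{P}_k:=\mathbf{P}(\,\cdot\mid \calC_k^\star)$, and $\mathrm{RWSD}_K(\mu)=\mathrm{WSD}(\mu;\mathbf{P}_{k_\mu})-\max_{j\ne k_\mu}\mathrm{WSD}(\mu;\mathbf{P}_j)$.

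The empirical partition $\widehat\calC_k$ is induced by $\widehat A_{n,K}$ and applied to the $\hat\mu_{\ell,m}$. We write the empirical RWSD as the same functional evaluated at $\hat\mu_{i,m}$, with $\mathbf{P}_k$ replaced by the empirical measures $\widehat{\mathbf{P}}_{k}:=|\widehat\calC_k|^{-1}\sum_{\ell\in\widehat\calC_k}\delta_{\hat\mu_{\ell,m}}$. Since $u\mapsto\max_j u_j$ and differences are continuous, the continuous mapping theorem reduces the claim to two facts. First, $\hat k_i=k_{\mu_i}$ with probability tending to one. Second, for each fixed $k$, $\mathrm{WSD}(\hat\mu_{i,m};\widehat{\mathbf{P}}_k)\to\mathrm{WSD}(\mu_i;\mathbf{P}_k)$ in probability.

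\textbf{Step 1 (labels).} By the Hausdorff convergence we can order $\widehat A_{n,K}=\{\hat a_1,\dots,\hat a_K\}$ so that $\max_k\mathcal{W}_2(\hat a_k,a_k)\to0$ a.s. By the unique-nearest-center assumption, $\mathbf{P}$-a.e.\ $\mu$ has a margin $\delta(\mu)=\min_{j\ne k_\mu}\mathcal{W}_2(\mu,a_j)-\mathcal{W}_2(\mu,a_{k_\mu})>0$. We also have $\mathcal{W}_2(\hat\mu_{\ell,m},\mu_\ell)\to0$ in probability as $m\to\infty$; on the compact base space $X$ this holds uniformly, with a rate depending only on $d$ and $\mathrm{diam}(X)$. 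The triangle inequality then shows that whenever the center error and the sampling error are both below $\delta(\mu)/4$, $\hat\mu_{\ell,m}$ receives the label $k_{\mu_\ell}$. Applied to $\mu_i$, this gives the first fact. Applied to all $\ell$, the fraction of misassigned $\ell$ is at most $\frac1n\#\{\ell:\delta(\mu_\ell)<\varepsilon\}+\frac1n\#\{\ell:\text{sampling error}_\ell>\varepsilon/4\}$. The first term tends a.s.\ to $\mathbf{P}(\delta<\varepsilon)$, which vanishes as $\varepsilon\downarrow0$; atomlessness and the a.e.\ uniqueness give $\mathbf{P}(\delta=0)=0$. The second term vanishes in expectation. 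Hence $\widehat{\mathbf{P}}_k$ differs from the correctly labelled empirical measure $\widetilde{\mathbf{P}}_k$, built from $\{\hat\mu_{\ell,m}:\mu_\ell\in\calC_k^\star\}$, by a vanishing fraction of atoms.

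\textbf{Step 2 (WSD within a cluster).} The WSD integrand $\frac{x-T(x)}{\mathcal{W}_2}$ has norm-$L^2(Q)$ at most one. Therefore, changing an $o_P(1)$ fraction of the atoms of the population measure perturbs the averaged displacement field by $o_P(1)$ in $L^2(\hat\mu_{i,m})$, so $|\mathrm{WSD}(\hat\mu_{i,m};\widehat{\mathbf{P}}_k)-\mathrm{WSD}(\hat\mu_{i,m};\widetilde{\mathbf{P}}_k)|=o_P(1)$. The same bound absorbs the discrepancy between the normalizations $1/n$ and $1/|\widehat\calC_k|$. Next, $\widetilde{\mathbf{P}}_k$ is exactly the two-stage empirical measure of an i.i.d.\ sample of $\mathbf{P}_k$, of size $\approx n\mathbf{P}(\calC_k^\star)\to\infty$. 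We then invoke the consistency of the empirical WSD \citep[Theorem~6.3]{bachoc2024wasserstein} with population $\mathbf{P}_k$, evaluated at $\hat\mu_{i,m}\to\mu_i$, to obtain the second fact. The point $\mu_i$ itself sits in the sample, but it contributes one atom of weight $O(1/n)$ and is excluded by the convention of setting $0$ when $\mathcal{W}_2=0$.

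\textbf{Main obstacle.} The delicate step is the boundary control in Step 1. Distributions near the Voronoi boundaries can be mislabelled, and we need this to perturb the WSD only by $o_P(1)$. The boundedness of the normalized displacements by $1$ is what makes the \emph{fraction} of mislabelled atoms the relevant quantity. Atomlessness plus a.e.\ unique nearest centers is what sends $\mathbf{P}(\delta<\varepsilon)$ to zero. A secondary technical point is checking that Theorem~6.3 of \citet{bachoc2024wasserstein} applies to the conditional populations $\mathbf{P}_k$ with a random evaluation point $\hat\mu_{i,m}$. We would handle this via the continuity of WSD in its first argument, which is among its listed properties.
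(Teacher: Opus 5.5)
Your proposal is correct and follows the same route as the paper's proof. The shared steps are: stability of the label of $\mu_i$ from Hausdorff convergence of the centers plus a.e.\ unique nearest centers, cluster-wise consistency of the two-stage empirical WSD from Bachoc et al., and the continuous mapping theorem applied to $a-\max_j b_j$. Your argument is more careful than the paper's Step~2, which only cites Bachoc et al.'s Lemma~6.3 for convergence of the data-driven cluster measures. You bound the sampling error explicitly in the assignment step, and you use the unit-norm bound on transport directions to reduce to the correctly labelled i.i.d.\ subsamples of $\mathbf{P}_k$ via the mislabelled fraction.
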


Combining Theorem~\ref{thm:W2-kmeans} and Theorem~\ref{thm:rwsd-abstract} immediately yields the consistency of empirical RWSD induced by Wasserstein $K$-means clustering.
% \textcolor{red}{Corollary 1 repeats similar result as Theorem 3.2?}

% \begin{corollary}[Consistency of empirical RWSD under Wasserstein $K$-means]
% \label{cor:rwsd-consistency}
% Under the conditions of Theorem~\ref{thm:W2-kmeans} and Theorem~\ref{thm:rwsd-abstract}, 
%suppose in addition that $\mathbf{P}$ is atomless and that the population partition induced by the unique $K$-means centers is well-separated in the sense of Theorem~\ref{thm:rwsd-abstract}. 
% the empirical RWSD satisfies
% \[
% \mathrm{RWSD}_K(\hat\mu_{i,m}) \;\xrightarrow{P}\; \mathrm{RWSD}_K(\mu_i)
% \quad \text{as } n, m \to \infty,
% \]
% for $\mathbf{P}$-almost every $\mu_i$.
% \end{corollary}
Let $\mathcal{K}$ be a finite set of candidate numbers of clusters with
$\min\mathcal{K} \ge 2$. For each $K \in \mathcal{K}$, the selection
procedure clusters the $n$ objects into $K$ groups, evaluates
the empirical relative depth of every object, and records the average
$\widehat{R}_{n,m}(K) := \frac{1}{n}\sum_{i=1}^{n}
\mathrm{RWSD}_K(\hat\mu_{i,m})$; the selected number of clusters is
$\widehat{K}_{n,m} := \arg\max_{K \in \mathcal{K}} \widehat{R}_{n,m}(K)$.
The population counterpart of the criterion is
$R(K) := \mathbb{E}_{\mu \sim \mathbf{P}}\big[\mathrm{RWSD}_K(\mu)\big]$, and  define
\[
\ K^\star := \arg\max_{K \in \mathcal{K}} R(K)
\]
as the population optimal number of clusters. Since $\mathcal{K}$ is
finite, applying Theorem~\ref{thm:rwsd-abstract} to every $K\in \mathcal K$, together with a margin condition on $R$, we establish the consistency of the selected $K$.

\begin{corollary}[Consistency of the selected number of clusters] \label{cor:Kstar-consistency}
Suppose that $K^\star$ is the unique maximizer of $R$ over $\mathcal{K}$, with margin
\begin{equation}\label{eq:margin}
\delta \;:=\; R(K^\star) - \max_{K^ \in \mathcal{K} \setminus \{K^\star\}} R(K)
\;>\; 0 .
\end{equation}
Under the conditions of Theorem~\ref{thm:W2-kmeans} (Theorem~\ref{thm:W1-kmedians}) and Theorem~\ref{thm:rwsd-abstract}, it holds
\[
\mathbb{P}\big(\widehat{K}_{n,m} = K^\star\big) \;\longrightarrow\; 1 
\]
as $n, m \to \infty$.
%so that the Wasserstein $K$-means (medians) satisfy the hypothesis of Theorem~\ref{thm:rwsd-abstract}; that $\mathbf{P}$ is atomless; and that the population partition induced at every level $K \in \mathcal{K}$ is well-separated in the sense of Theorem~\ref{thm:rwsd-abstract}.
\end{corollary}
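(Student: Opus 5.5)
The plan is the standard argmax-consistency argument over a finite candidate set. The key intermediate claim is that $\widehat R_{n,m}(K) \xrightarrow{P} R(K)$ for every $K\in\mathcal K$. Once this holds, consider the event
\[
E_{n,m} := \bigcap_{K\in\mathcal K}\Big\{ |\widehat R_{n,m}(K)-R(K)| < \delta/2 \Big\}.
\]
On $E_{n,m}$, every $K\neq K^\star$ satisfies
\[
\widehat R_{n,m}(K) < R(K)+\delta/2 \le R(K^\star)-\delta/2 < \widehat R_{n,m}(K^\star),
\]
so $\widehat K_{n,m}=K^\star$. Because $\mathcal K$ is finite, a union bound gives $\mathbb P(E_{n,m}^c)\le \sum_{K\in\mathcal K}\mathbb P(|\widehat R_{n,m}(K)-R(K)|\ge\delta/2)\to0$, and the conclusion follows.

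To prove the intermediate claim, fix $K$ and split the error as
\[
\widehat R_{n,m}(K)-R(K) = \frac1n\sum_{i=1}^n\big[\mathrm{RWSD}_K(\hat\mu_{i,m})-\mathrm{RWSD}_K(\mu_i)\big] + \Big[\frac1n\sum_{i=1}^n \mathrm{RWSD}_K(\mu_i) - R(K)\Big].
\]

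The second bracket is an average of i.i.d. terms. Each term is bounded, since $\mathrm{WSD}\in[0,1]$ implies $\mathrm{RWSD}_K\in[-1,1]$. Here $\mathrm{RWSD}_K(\mu_i)$ means the relative depth computed with respect to the population partition induced by $A_K^\star$, which is a fixed measurable function of $\mu_i$. The strong law of large numbers therefore sends this bracket to $0$ almost surely.

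The first bracket is the main obstacle. Theorem~\ref{thm:rwsd-abstract} only gives pointwise convergence in probability: for $\mathbf P$-a.e. $\mu_i$, the difference $D_i:=\mathrm{RWSD}_K(\hat\mu_{i,m})-\mathrm{RWSD}_K(\mu_i)$ tends to $0$ in probability, and here $D_i$ is computed from the full sample. A uniform bound is not available, so I would control the average in $L^1$. Since $|D_i|\le 2$, we have $\mathbb E|\frac1n\sum_i D_i|\le \frac1n\sum_i\mathbb E|D_i| = \mathbb E|D_1|$, using exchangeability of the indices under two-stage i.i.d. sampling and the symmetry of the clustering and WSD procedures. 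By Theorem~\ref{thm:rwsd-abstract}, $D_1\to0$ in probability. To make this precise, condition on $\mu_1$: Theorem~\ref{thm:rwsd-abstract} gives conditional convergence in probability for a.e. $\mu_1$, so bounded convergence yields $\mathbb E[|D_1|\mid\mu_1]\to0$ a.e., and dominated convergence yields $\mathbb E|D_1|\to0$. Markov's inequality then gives $\frac1n\sum_i D_i\xrightarrow{P}0$ as $n,m\to\infty$.

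The remaining point is to check that Theorem~\ref{thm:rwsd-abstract} applies at every $K\in\mathcal K$. Its hypothesis of almost sure Hausdorff convergence of the centers is exactly what Theorem~\ref{thm:W2-kmeans} (or Theorem~\ref{thm:W1-kmedians} for $K$-medians) provides, with uniqueness and strict separation assumed up to $\max\mathcal K$. The atomless and unique-nearest-center conditions are assumed directly. The step I would take the most care with is the exchangeability and conditioning in the $L^1$ bound, because Theorem~\ref{thm:rwsd-abstract} is stated pointwise for a.e. $\mu_i$, while the estimated partition and the depths depend on all $n$ sampled objects.
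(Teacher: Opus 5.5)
Your proposal is correct and follows the paper's proof essentially step for step. It uses the same split of $\widehat R_{n,m}(K)-R(K)$ into the per-object error average and the i.i.d.\ population average (handled by the SLLN). It uses the same exchangeability reduction $\mathbb{E}\bigl|\tfrac1n\sum_i D_i\bigr|\le\mathbb{E}|D_1|$, followed by conditioning on $\mu_1$, dominated convergence and Markov's inequality, and it closes with the same $\delta/2$ margin argument over the finite set $\mathcal{K}$. The only cosmetic difference is that you pass through $\mathbb{E}[|D_1|\mid\mu_1]\to0$, whereas the paper integrates $\mathbb{P}(Z_1>\varepsilon)$ and uses $\mathbb{E}[Z_1]\le\varepsilon+2\,\mathbb{P}(Z_1>\varepsilon)$.
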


Next we quantify the error rate of the empirical RWSD. Under the two-stage sampling scheme, error comes from two sources: sampling
finitely many distributions from the population, and approximating each sampled distribution by finitely many data points. The following proposition bounds both errors: the first attains a parametric rate $n^{-1/2}$, and the second equals the optimal rate of estimating OT map from $m$ points.
%Under two-stage sampling, the empirical RWSD has two sources of error: each cluster contains finitely many distributions, and each distribution is observed through finitely many points. The next result bounds them separately; the first attains the parametric rate $n^{-1/2}$, and the second contributes the rate of estimating OT maps from $m$ points.

\begin{proposition}[Convergence rate under a fixed partition]
\label{prop:rwsd-rate}
Consider a fixed population partition
$\mathcal{C}^1, \ldots, \mathcal{C}^K$ with $K \ge 2$, 
$\pi_k := \mathbf{P}(\mathcal{C}^k) > 0$ and
$\mathbf{P}^{(k)} := \mathbf{P}(\,\cdot \mid \mathcal{C}^k)$, and any fixed distribution $\mu \in \mathcal{P}_2^{a.c.}(\mathbb{R}^d)$. Suppose that
$\mu$ and each $\mathbf{P}^{(k)}$ satisfy Assumption~6.5 of
\citet{bachoc2024wasserstein} %(compact, strongly convex supports with $C^{1,1}$ boundaries and densities bounded above and away from zero),
and that
$M := \max_{k} \mathbb{E}_{P' \sim \mathbf{P}^{(k)}}
[\mathcal{W}_2^{-1}(\mu, P')] < \infty$.
Let $\mathcal{C}^k_{n,m}$ be the $k$-th cluster of empirical measures $\hat\mu_{i,m}$ sampled from $\mu_i \in \mathcal{C}^k$. Let $\widehat{\mathrm{RWSD}}_K(\mu)$ and
$\mathrm{RWSD}_K(\mu)$ be the relative depths of $\mu$ computed with $\{\mathcal{C}^k_{n,m}\}_{k=1}^K$ and with
$\{\mathcal{C}^k\}_{k=1}^K$ as in \eqref{eq:relative-depth}. 
%with the cluster of $\mu$ fixed and $\mathrm{WSD}(\mu; \emptyset) := 0$, and 
Denote $\Delta_K(\mu) = \big|\widehat{\mathrm{RWSD}}_K(\mu) - \mathrm{RWSD}_K(\mu)\big|$. Then
\[
\mathbb{E}\,|\Delta_K(\mu)| \le 4KM\sqrt{\alpha(d,m)}
+ \frac{C}{\sqrt{n}} \sum_{k=1}^{K} \pi_k^{-1/2},
\]
where $C$ is an absolute constant, and $\alpha(d,m)$ has the form
\begin{equation*}\label{Rates-OT}
  \alpha(d,m):=C' \cdot \begin{cases}
   \frac{1}{m} & {\rm if }
    \ d=1,\\
  \frac{\log(m)}{m} & {\rm if }
    \ d=2,\\
   m^{-\frac{2}{d}} & {\rm if }
    \ d>2.
  \end{cases} 
\end{equation*}
%squared-$L^2$ rate of optimal-transport map estimation \citep[Eq.~(6.8)]{bachoc2024wasserstein}, equals $C'm^{-1}$, $C'(\log m)/m$, or $C'm^{-2/d}$ for $d = 1$, $d = 2$, or $d > 2$,
%with $C'$ depending only on $d$, the supports, and the constants of Assumption~6.5.
\end{proposition}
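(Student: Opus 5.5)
The plan is to reduce the relative depth to individual cluster depths. Then each cluster-depth error is split into a "finite $m$" part and a "finite $n$" part. Write $D_k := \mathrm{WSD}(\mu;\mathbf{P}^{(k)})$ and $\widehat D_k := \mathrm{WSD}(\mu;\mathcal{C}^k_{n,m})$. Since $\mathrm{RWSD}_K(\mu) = D_{k_\mu} - \max_{j\ne k_\mu} D_j$ and the maps $x\mapsto \max_j x_j$ are $1$-Lipschitz in sup-norm, we get
\[
\Delta_K(\mu) \le |\widehat D_{k_\mu}-D_{k_\mu}| + \max_{j\neq k_\mu}|\widehat D_j - D_j| \le 2\sum_{k=1}^K |\widehat D_k - D_k|.
\]
It therefore suffices to show, for each $k$,
\[
\mathbb{E}|\widehat D_k - D_k| \le 2M\sqrt{\alpha(d,m)} + C(n\pi_k)^{-1/2}.
\]
The constant $C$ can be rescaled at the end.

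\textbf{Step 1 (reduction to a Hilbert-space mean).} Write $1-\mathrm{WSD}(\mu;\mathbf{Q}) = \|\mathbb{E}_{P\sim\mathbf{Q}} u_P\|_{L^2(\mu)}$, where $u_P(x) := (x-T_{\mu,P}(x))/\mathcal{W}_2(\mu,P)$ is a unit vector in $L^2(\mu;\mathbb{R}^d)$, because $\|x-T_{\mu,P}\|_{L^2(\mu)} = \mathcal{W}_2(\mu,P)$. The reverse triangle inequality then bounds $|\widehat D_k - D_k|$ by the $L^2(\mu)$ distance between the two averaged fields. Let $N_k$ be the number of sampled $\mu_i$ that fall in $\mathcal{C}^k$. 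We insert the intermediate field $\tilde u := N_k^{-1}\sum_{i\in\mathcal{C}^k} u_{\mu_i}$, which uses the true (unsampled) $\mu_i$.

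\textbf{Step 2 (sampling of distributions).} Conditionally on $N_k$, the $u_{\mu_i}$ are i.i.d.\ unit-norm elements of a Hilbert space with mean $\mathbb{E}_{\mathbf{P}^{(k)}}u_P$. Hence
\[
\mathbb{E}\bigl[\|\tilde u - \mathbb{E}u\| \,\big|\, N_k\bigr] \le N_k^{-1/2}.
\]
On the event $N_k=0$ the error is at most $1$. Since $N_k\sim\mathrm{Bin}(n,\pi_k)$, we have $\mathbb{E}[N_k^{-1/2}\mathbf 1\{N_k\ge1\}] \lesssim (n\pi_k)^{-1/2}$, for example via $\mathbb{E}[1/(N_k+1)]\le 1/(n\pi_k)$ and Jensen. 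Also $\mathbb{P}(N_k=0) = (1-\pi_k)^n \le e^{-n\pi_k} \lesssim (n\pi_k)^{-1/2}$. Together these give the $C n^{-1/2}\pi_k^{-1/2}$ term.

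\textbf{Step 3 (plug-in of empirical measures).} The difference between the empirical average field and $\tilde u$ is at most the average over $i$ of $\|\hat u_i - u_{\mu_i}\|_{L^2(\mu)}$. Here $\hat u_i$ uses the estimated map $\widehat T_i$ to $\hat\mu_{i,m}$ and the estimated distance $\widehat W_i$. A standard estimate for normalized vectors gives $\|a/\|a\| - b/\|b\|\| \le 2\|a-b\|/\|b\|$. Take $a = x-\widehat T_i$, $b = x - T_{\mu,\mu_i}$, and note that $|\widehat W_i - W_i|\le \|\widehat T_i - T_i\|$ up to the plan/map approximation. This yields
\[
\|\hat u_i - u_{\mu_i}\| \le 2\,\mathcal{W}_2^{-1}(\mu,\mu_i)\,\|\widehat T_i - T_{\mu,\mu_i}\|_{L^2(\mu)}.
\]
Under Assumption~6.5 of \citet{bachoc2024wasserstein}, their map-estimation rate (their Eq.~(6.8)) gives $\mathbb{E}\|\widehat T_i - T_i\|^2_{L^2(\mu)} \le \alpha(d,m)$. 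Cauchy--Schwarz and the independence of the second-stage sampling from $\mu_i$ then give $\mathbb{E}\|\hat u_i - u_{\mu_i}\| \le 2\sqrt{\alpha(d,m)}\,\mathbb{E}[\mathcal{W}_2^{-1}(\mu,\mu_i)] \le 2M\sqrt{\alpha}$. Averaging over $i$ conditionally on cluster membership preserves this bound. Summing over $k$ and doubling from the reduction step yields $4KM\sqrt{\alpha} + (C/\sqrt n)\sum_k\pi_k^{-1/2}$.

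\textbf{Main obstacle.} Step 3 is the delicate part. First, the empirical transport map is defined from the empirical measure rather than from $\mu$, so its transfer to $L^2(\mu)$ must come from the stated result of \citet{bachoc2024wasserstein}, which I will invoke directly. Second, the normalization by $\widehat W_i$ must be controlled without a lower bound on $W_i$; this is exactly what the moment $M$ handles, since the factor $W_i^{-1}$ is kept in the bound and then integrated. A further subtlety is that a conditional expectation $\mathbb{E}[W_i^{-1}\mid\mu_i\in\mathcal{C}^k]$ is what actually appears. It is bounded by $M$ by definition of $M$ as a maximum over $k$ of cluster-conditional expectations.
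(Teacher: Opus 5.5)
Your proposal is correct and follows essentially the same route as the paper: the Lipschitz reduction of RWSD to per-cluster depth errors (bounded by $2\sum_k$), the latent cluster of unsampled $\mu_i$ separating first-stage from second-stage error, the Hilbert-space variance bound $N_k^{-1/2}$ for averages of unit vectors, and the normalized-vector inequality combined with display~(6.8) of \citet{bachoc2024wasserstein} conditionally on $\mu_i$ to get $2M\sqrt{\alpha(d,m)}$. The differences are cosmetic. You bound $\mathbb{E}[N_k^{-1/2}\mathbf{1}\{N_k\ge1\}]$ via $\mathbb{E}[1/(N_k+1)]\le 1/(n\pi_k)$ and Jensen instead of a Chernoff bound, and you derive the normalization inequality directly rather than citing their Lemma~F.1. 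Your hedge ``up to the plan/map approximation'' is also unnecessary: the query $\mu$ is fixed and absolutely continuous, so $T_{\mu,\hat\mu_{i,m}}$ is an exact Brenier map on $L^2(\mu)$ and $\mathcal{W}_2(\mu,\hat\mu_{i,m})=\|I-T_{\mu,\hat\mu_{i,m}}\|_{L^2(\mu)}$ holds exactly.
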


Finally, we characterize the robustness of RWSD via bounding its influence function.  
Since each object contributes to RWSD only through unit-norm transport directions, RWSD has bounded influence function. 
Consider any fixed $K$-partition $\mathcal{C}^1, \ldots, \mathcal{C}^K$ ($K\geq 2$) of $n$ probability measures $\nu_1, \ldots, \nu_n$, without empty clusters. 
Let $\hat\pi_{\min}$ be the smallest fraction across $K$ clusters, and define $\overline{R}(K) := n^{-1}\sum_{i=1}^{n}\mathrm{RWSD}_K(\nu_i)$. 

\begin{proposition}[Bounded influence]\label{prop:influence}
Replace any $s$ elements among $\nu_1, \ldots, \nu_n$ by $s$ arbitrary probability measures belonging to $\mathcal{P}_2(\mathbb{R}^d)$ without altering the partition, and let $\overline{R}{}'$ be the new RWSD after replacement. Then
$|\overline{R}{}'(K) - \overline{R}(K)| \le 6s/(n\hat\pi_{\min})$ for
any $K \in \mathcal{K}$. Consequently, if the maximizer
$\widehat{K} \coloneqq \arg\max_{K\in\mathcal K} \overline{R}(K) $ 
exceeds every other $K\in \mathcal K$ by a margin $\delta > 0$ and $s < \delta\,\hat\pi_{\min}\,n/12$, then
$\widehat{K}$ also maximizes $\overline{R}{}'$.
\end{proposition}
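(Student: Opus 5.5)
Fix a measure \(\nu\) and a cluster \(\mathcal{C}^j\) of size \(n_j\). The empirical depth is
\[
\mathrm{WSD}(\nu;\mathcal{C}^j)=1-\Big(\int\big\|v_j(x)\big\|^2\,d\nu(x)\Big)^{1/2},
\qquad
v_j(x)=\frac{1}{n_j}\sum_{\nu_\ell\in\mathcal{C}^j} u_\ell(x),
\]
where each \(u_\ell(x)=(x-T_{\nu,\nu_\ell}(x))/\mathcal{W}_2(\nu,\nu_\ell)\), or \(u_\ell=0\) when the distance vanishes. Two facts about \(u_\ell\) drive everything.

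First, \(u_\ell\) has \(L^2(\nu)\) norm exactly \(1\), or \(0\), because \(\int\|x-T(x)\|^2\,d\nu=\mathcal{W}_2^2\). Second, \(x\mapsto\|\cdot\|\) composed with the \(L^2(\nu)\) norm is \(1\)-Lipschitz by the triangle inequality. So if \(r_j\) of the \(n_j\) members of \(\mathcal{C}^j\) are replaced, the averaged field changes by at most \(2r_j/n_j\) in \(L^2(\nu)\). Hence, for every fixed \(\nu\) outside the replaced set,
\[
|\Delta\,\mathrm{WSD}(\nu;\mathcal{C}^j)|\le \frac{2r_j}{n_j}\le \frac{2s}{n\hat\pi_{\min}} .
\]
The bound on \(|\max_j a_j-\max_j b_j|\) by \(\max_j|a_j-b_j|\) then gives
\[
|\Delta\,\mathrm{RWSD}_K(\nu_i)|\le \frac{4s}{n\hat\pi_{\min}}
\]
for every unreplaced \(\nu_i\), since both the own-cluster term and the max term move by at most \(2s/(n\hat\pi_{\min})\).

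Next, handle the \(s\) replaced indices. Their own RWSD values change arbitrarily, but WSD lies in \([0,1]\), so RWSD lies in \([-1,1]\) and each such term changes by at most \(2\). One subtlety: the "self" term \(u_i=0\) inside the replaced object's own cluster keeps the unit-norm bound intact, so the range argument still holds. Summing and dividing by \(n\),
\[
|\overline R'(K)-\overline R(K)|\le \frac{(n-s)}{n}\cdot\frac{4s}{n\hat\pi_{\min}}+\frac{2s}{n}\le \frac{4s}{n\hat\pi_{\min}}+\frac{2s}{n\hat\pi_{\min}}=\frac{6s}{n\hat\pi_{\min}},
\]
using \(\hat\pi_{\min}\le 1\).

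For the consequence, suppose \(s<\delta\hat\pi_{\min}n/12\). Then each \(\overline R(K)\) moves by less than \(\delta/2\). For any \(K\neq\widehat K\),
\[
\overline R'(\widehat K)>\overline R(\widehat K)-\delta/2\ge \overline R(K)+\delta/2>\overline R'(K),
\]
so \(\widehat K\) still maximizes \(\overline R'\).

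The main point needing care is the normalization convention, which depends on how WSD of \(\nu\) in its own cluster is computed. The bound requires that \(\nu\) itself contributes \(0\) and that averages use the cluster size \(n_j\) as denominator, which is consistent with the empirical formula \eqref{eq:empirical-WSD}. If the empirical WSD of \(\hat\mu_{i,m}\) uses \(1/n\) rather than \(1/n_j\), I would adapt by bounding \(r_j/n\le r_j/n_j\) instead, which only improves the constant. I would also check that the max over \(j\neq k\) is well defined, which holds because \(K\ge2\), and that there are no empty clusters, which the hypothesis guarantees, so that \(n_j\ge n\hat\pi_{\min}\).
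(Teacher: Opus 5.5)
Your proposal is correct and follows the paper's proof essentially step for step. The shared steps are: unit-norm transport kernels together with the $1$-Lipschitz map $v\mapsto 1-\|v\|_{L^2}$ give a change of at most $2s_l/n_l$ in each cluster depth; the bound $|\max_l b_l-\max_l b_l'|\le\max_l|b_l-b_l'|$ then yields $4\max_l s_l/n_l$ for each unreplaced object; each of the $s$ replaced objects moves by at most $2$ since RWSD lies in $[-1,1]$; and the consequence follows from the same $\delta/2$ margin argument.

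The paper adds two small precisions. First, the empirical (barycentric-projection) kernel has norm at most one rather than exactly one, which is all your argument needs. Second, for the consequence, $\hat\pi_{\min}$ must be taken as the smallest cluster fraction over the partitions for all $K\in\mathcal{K}$.
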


By contrast, however large $n$ is, a single replacement can lower the average silhouette score by at least $1/2$ and can raise DBI by an arbitrarily large amount (see Lemma~\ref{lem:contrast} in Appendix~\ref{app:contrast}). Therefore, RWSD has better robustness property than silhouette score and DBI.

\section{Numerical Simulations}\label{sec:simulations}
Given the sample set of empirical distributions, we implement Wasserstein
$K$-center clustering with three selection criteria, the silhouette score,
the DBI, and RWSD, and record their selected optimal $K$ values. We choose
silhouette score and DBI for comparison because these two criteria are the
most widely used and computationally efficient; additional discussion on
other criteria is provided in Appendix~\ref{app:baseline-choice}. In all
the simulations, we use a two-stage sampling scheme: step 1: draw
distributions $\mu_1^k, \mu_2^k, \ldots, \mu_{n_k}^k$ from the $k$-th
Cluster; step 2: draw $m$ data points
$\bx_1^{i,k}, \bx_2^{i,k}, \ldots, \bx_m^{i,k}$ from each
$\mu_i^k,\ i=1, \cdots, n_k, \ k=1, \ldots, K$. Throughout this section, we
choose $n_k = 200,\ m=50$ for all $k\in \mathcal K$ with
$\mathcal K = \{ 2, \ldots, 10\}$, and each reported quantity in the tables
below represents the average of $20$ independent replications with $\pm$
one standard deviation. Although we run simulations for every
$K\in \mathcal K$, but due to space limitation,
Tables~\ref{tab:loc-imbalance}--\ref{tab:triangle-noise} only retain the
necessary information, and omit results for some $K$.

\paragraph{Clusters with unbalanced separations}
Consider three clusters with unbalanced separations,
\[
\text{Cluster } k = \Big\{ N(\boldsymbol{c}_k, \sigma^2 \bI_{2\times 2}),\ \boldsymbol{c}_k \sim \text{Unif}[\mathcal{B}_k] \Big\},
\]
where $k = 1, 2, 3$, $\mathcal{B}_1 = [-1.8,-1.7] \times [0,0.5]$,
$\mathcal{B}_2 = [-0.3,-0.2] \times [0,0.5]$,
$\mathcal{B}_3 = [20,21] \times [0,0.5]$, $\sigma = 0.12$. As shown in
Figure~\ref{fig:loc-imbalance}, Clusters 1 and 2 lie closer together while
Cluster 3 is isolated. Silhouette and DBI both select $K=2$, merging the two
nearby clusters because the distant Cluster 3 dominates all pairwise
distances. RWSD correctly identifies $K=3$, as depth-based centrality
remains sensitive to local structure.

\begin{figure}[!htb]
  \centering
  \begin{minipage}[c]{0.50\textwidth}
    \centering
    \includegraphics[width=\linewidth, height =0.5\linewidth]{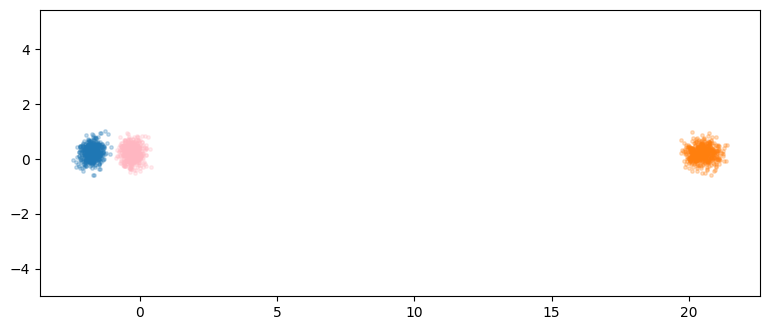}
    \caption{\small Clusters with unbalanced separations.}
    \label{fig:loc-imbalance}
  \end{minipage}\hfill
  \begin{minipage}[c]{0.46\textwidth}
    \centering\small
    \captionof{table}{ Criterion profiles in the unbalanced-separation setting (mean $\pm$ sd over $20$ replications); boldface marks each criterion's optimum. The true $K^\star=3$.}
    \label{tab:loc-imbalance}
    \setlength{\tabcolsep}{4pt}
    \begin{tabular}{lccc}
      \toprule
      $K$ & Silhouette & RWSD & DBI \\
      \midrule
      2 & \textbf{.967 ($\pm$.001)} & .398 ($\pm$.001) & \textbf{.049 ($\pm$.001)} \\
      3 & .913 ($\pm$.002) & \textbf{.402 ($\pm$.001)} & .127 ($\pm$.004) \\
      4 & .764 ($\pm$.029) & .375 ($\pm$.003) & .438 ($\pm$.065) \\
      5 & .673 ($\pm$.040) & .355 ($\pm$.006) & .617 ($\pm$.043) \\
      6 & .589 ($\pm$.029) & .334 ($\pm$.005) & .709 ($\pm$.038) \\
      \midrule
      $\widehat K$ & 2 & 3 & 2 \\
      \bottomrule
    \end{tabular}
  \end{minipage}
\end{figure}
\FloatBarrier

\paragraph{Clusters with imbalanced dispersion}
Consider two highly heterogeneous clusters of distributions,
\begin{align*}
    &\text{Cluster 1} = \Big\{ N(\boldsymbol{c}_1, \bI_{2\times 2}), \ \boldsymbol{c}_1 \sim \text{Unif}\big[ \calB_{100}(\boldsymbol{0}) \big] \Big\}, \\
    &\text{Cluster 2} = \Big\{ N(\boldsymbol{c}_2, \bI_{2 \times 2}), \ \boldsymbol{c}_2 \sim \text{Unif}\big[ \calB_{0.1}(\boldsymbol{a}) \big] \Big\},
\end{align*}
where $\calB_{0.1}(\boldsymbol{a})$ represents a ball centered at
$\boldsymbol{a}$ with radius $0.1$, and $\boldsymbol{a}=(120, 0)^\top$. As
shown in Figure~\ref{fig:unbalanced_dispersion}, the distributions in
Cluster 1 (blue) are substantially more dispersed than those in Cluster 2
(orange), and Cluster 2 is almost part of Cluster 1, which makes it very
difficult to separate two clusters. The optimal number of clusters under
different selection criteria, including RWSD (Algorithm~\ref{alg:selectK}),
silhouette score, and DBI are summarized in
Table~\ref{tab:sparse-vs-compact}. Distance-based selection criteria tend
to pick a larger $K$ because they rely on distance-based dispersions. By
contrast, RWSD measures each distribution's relative centrality within its
assigned cluster, and remain stable under extremely heterogeneous clusters.

\begin{figure}[!htb]
  \centering
  \begin{minipage}[c]{0.38\textwidth}
    \centering
    \includegraphics[width=\linewidth]{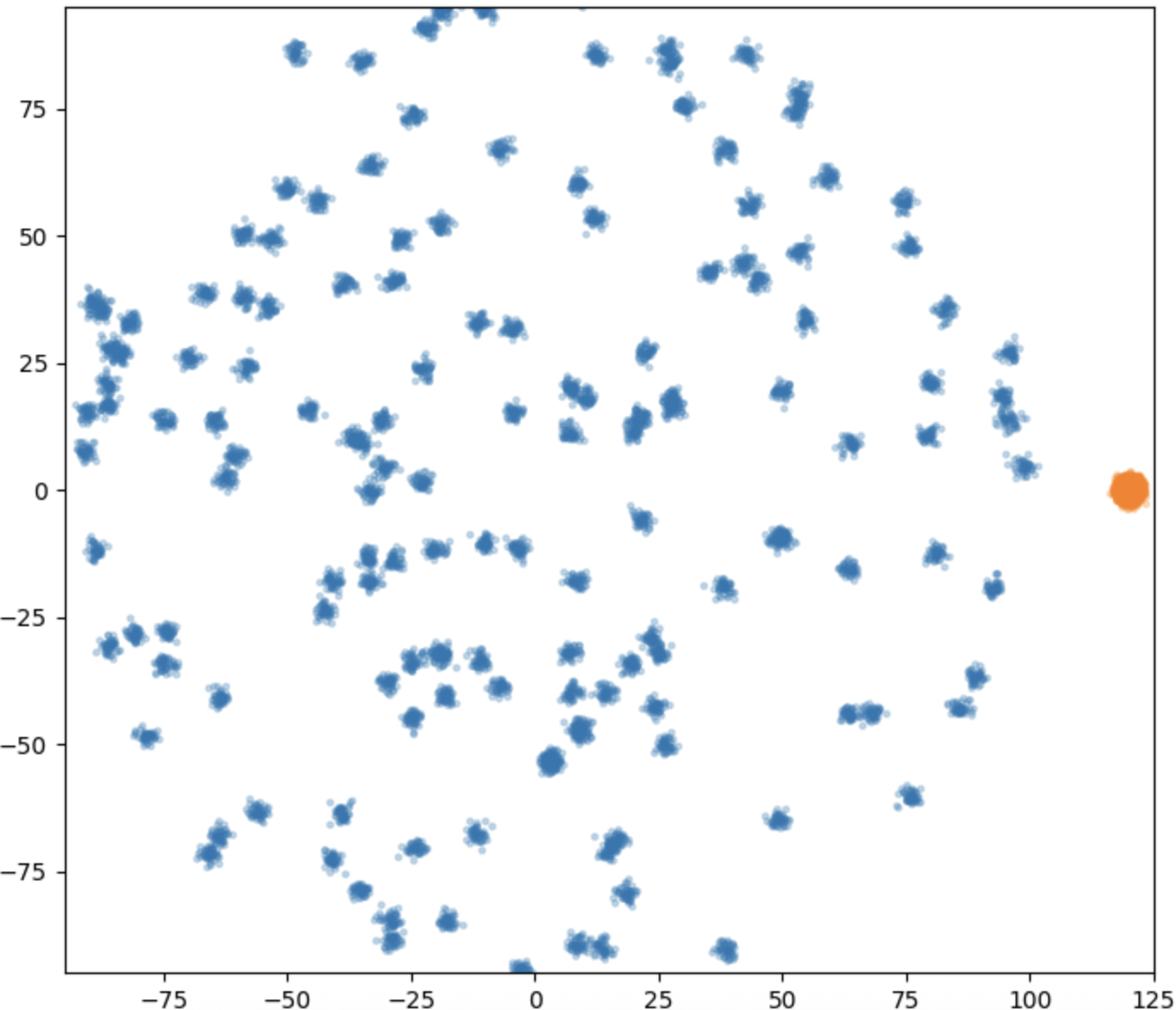}
    \caption{\small Two clusters of unbalanced dispersions.}
    \label{fig:unbalanced_dispersion}
  \end{minipage}\hfill
  \begin{minipage}[c]{0.56\textwidth}
    \centering\small
    \captionof{table}{\small Criterion profiles in the unbalanced-dispersion setting (mean $\pm$ sd over $20$ replications); boldface marks each criterion's optimum. The true $K^\star=2$.}
    \label{tab:sparse-vs-compact}
    \setlength{\tabcolsep}{4pt}
    \begin{tabular}{lccc}
      \toprule
      $K$ & Silhouette & RWSD & DBI \\
      \midrule
      2 & .656 ($\pm$.012) & \textbf{.358 ($\pm$.005)} & \textbf{.602 ($\pm$.031)} \\
      3 & .654 ($\pm$.010) & .354 ($\pm$.014) & .814 ($\pm$.049) \\
      4 & .669 ($\pm$.010) & .351 ($\pm$.017) & .727 ($\pm$.033) \\
      5 & .664 ($\pm$.010) & .342 ($\pm$.018) & .761 ($\pm$.065) \\
      6 & .667 ($\pm$.009) & .343 ($\pm$.015) & .768 ($\pm$.033) \\
      7 & .671 ($\pm$.008) & .337 ($\pm$.013) & .757 ($\pm$.038) \\
      8 & .672 ($\pm$.010) & .337 ($\pm$.010) & .767 ($\pm$.053) \\
      9 & .674 ($\pm$.010) & .336 ($\pm$.010) & .763 ($\pm$.059) \\
      10 & \textbf{.676 ($\pm$.011)} & .336 ($\pm$.012) & .762 ($\pm$.043) \\
      \midrule
      $\widehat K$ & 10 & 2 & 2 \\
      \bottomrule
    \end{tabular}
  \end{minipage}
\end{figure}
\FloatBarrier

\paragraph{Clusters with heterogeneous scales}
Consider three clusters, indexed by $k = 1, 2, 3$, of centered Gaussian
distributions with very different variances,
\[
\text{Cluster } k = \Big\{ N(\boldsymbol{0}, \sigma_k^2 \bI_{2\times 2}),\
\sigma_k \sim \mathrm{Tri}[a_k, b_k, c_k] \Big\},
\]
where $\mathrm{Tri}[a, b, c]$ denotes the triangular distribution on
$[a, c]$ with mode $b$, and $(a_1, b_1, c_1) = (0.0, 0.2, 0.4)$,
$(a_2, b_2, c_2) = (1.0, 1.2, 1.4)$, $(a_3, b_3, c_3) = (5.0, 5.2, 5.4)$.
The data clouds from the three clusters are marked as dark green, green,
and light green, respectively, in Figure~\ref{fig:scale-imbalance}.
Silhouette selects $K=2$ because the smallest-variance group forms a
compact core dominating dispersion comparisons. DBI improves slightly but
still struggles under heterogeneous within-cluster spread. RWSD recovers
$K=3$, as it reflects how deep each distribution is within its cluster
compared to other clusters.

\begin{figure}[!htb]
  \centering
  \begin{minipage}[c]{0.38\textwidth}
    \centering
    \includegraphics[width=\linewidth]{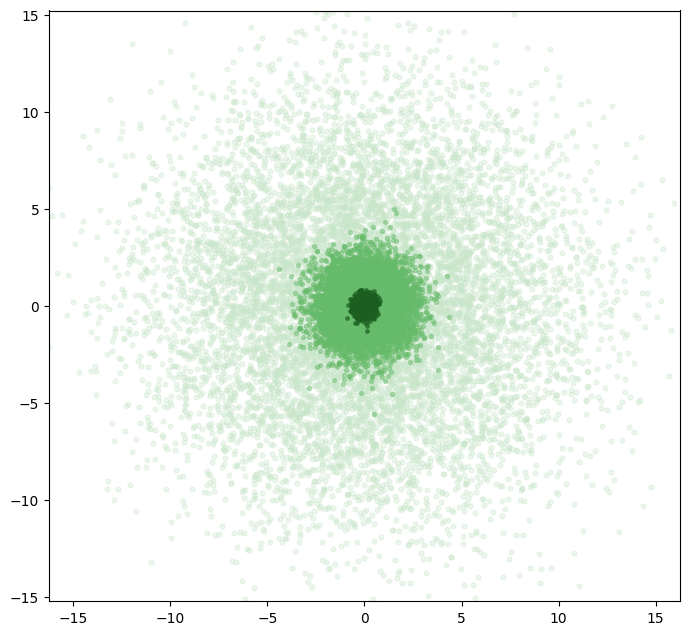}
    \caption{\small Clusters with heterogeneous scales.}
    \label{fig:scale-imbalance}
  \end{minipage}\hfill
  \begin{minipage}[c]{0.56\textwidth}
    \centering\small
    \captionof{table}{Criterion profiles in the heterogeneous-scale setting  (mean $\pm$ sd over $20$ replications); boldface marks each  criterion's optimum. The true $K^\star=3$.}
    \label{tab:scale-imbalance}
    \setlength{\tabcolsep}{4pt}
    \begin{tabular}{lccc}
      \toprule
      $K$ & Silhouette & RWSD & DBI \\
      \midrule
      2 & \textbf{.660 ($\pm$.003)} & .360 ($\pm$.002) & .569 ($\pm$.005) \\
      3 & .511 ($\pm$.003) & \textbf{.365 ($\pm$.002)} & \textbf{.553 ($\pm$.004)} \\
      4 & .310 ($\pm$.028) & .186 ($\pm$.045) & 1.732 ($\pm$.114) \\
      5 & .219 ($\pm$.077) & .113 ($\pm$.048) & 2.153 ($\pm$.219) \\
      6 & .173 ($\pm$.066) & .082 ($\pm$.042) & 2.345 ($\pm$.183) \\
      7 & .171 ($\pm$.065) & .079 ($\pm$.042) & 2.408 ($\pm$.148) \\
      8 & .160 ($\pm$.057) & .071 ($\pm$.037) & 2.421 ($\pm$.138) \\
      9 & .154 ($\pm$.045) & .070 ($\pm$.035) & 2.381 ($\pm$.152) \\
      10 & .152 ($\pm$.045) & .068 ($\pm$.034) & 2.393 ($\pm$.134) \\
      \midrule
      $\widehat K$ & 2 & 3 & 3 \\
      \bottomrule
    \end{tabular}
  \end{minipage}
\end{figure}
\FloatBarrier

\paragraph{Clusters with heavy-tailed dispersion}
Consider three clusters of distributions on $\mathbb{R}^2$, with
$\sigma=0.2$ and $k=1,2,3$,
\begin{equation*}
    \text{Cluster } k = \Big\{ N(\boldsymbol{c}_k, \sigma^2 \bI),\ \boldsymbol{c}_k = \boldsymbol{a}_k + (r\cos\theta, r\sin\theta) \Big\}
\end{equation*}
with $\boldsymbol{a}_1 = (-1, 0), \boldsymbol{a}_2 = (1, 0),
\boldsymbol{a}_3 = (0, \sqrt{3})$,
$r \sim \text{Pareto}(r_{\min} = 0.35, \alpha = 0.45)$ truncated at
$r_{\max} = 5.0$, and $\theta \sim \text{Unif}[0, 2\pi)$. Each cluster has
a core and heavy-tailed peripheral data clouds
(Figure~\ref{fig:three_families_heavy_tail}). Silhouette and DBI both split
one cluster into smaller sub-clusters because heavy-tailed peripheral
clouds inflate within-cluster dispersions. RWSD picks the correct $K$,
thanks to its robustness to heavy tails (Table~\ref{tab:pareto-validation}).

\begin{figure}[!htb]
  \centering
  \begin{minipage}[c]{0.38\textwidth}
    \centering
    \includegraphics[width=\linewidth]{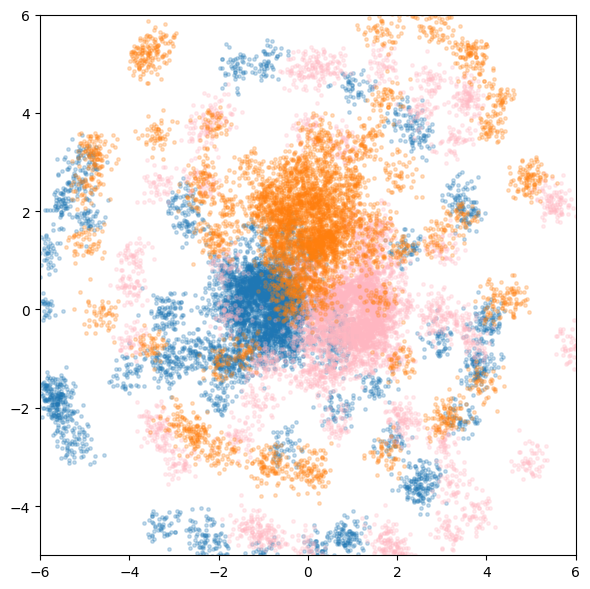}
    \caption{\small Clusters with heavy-tailed dispersion.}
    \label{fig:three_families_heavy_tail}
  \end{minipage}\hfill
  \begin{minipage}[c]{0.56\textwidth}
    \centering\small
    \captionof{table}{Criterion profiles in the heavy-tailed setting (mean $\pm$ sd over $20$ replications); boldface marks each criterion's optimum. The true $K^\star=3$.}
    \label{tab:pareto-validation}
    \setlength{\tabcolsep}{4pt}
    \begin{tabular}{lccc}
      \toprule
      $K$ & Silhouette & RWSD & DBI \\
      \midrule
      2  & .287 ($\pm$.011) & .278 ($\pm$.004) & 1.667 ($\pm$.037) \\
      3  & .322 ($\pm$.005) & \textbf{.307 ($\pm$.002)} & 1.282 ($\pm$.029) \\
      4  & .296 ($\pm$.021) & .289 ($\pm$.011) & 1.178 ($\pm$.030) \\
      5  & .306 ($\pm$.016) & .284 ($\pm$.006) & 1.037 ($\pm$.039) \\
      6  & .331 ($\pm$.005) & .284 ($\pm$.005) & .941 ($\pm$.027) \\
      7  & .369 ($\pm$.006) & .289 ($\pm$.006) & .839 ($\pm$.007) \\
      8  & \textbf{.376 ($\pm$.035)} & .293 ($\pm$.009) & \textbf{.822 ($\pm$.095)} \\
      9  & .373 ($\pm$.030) & .295 ($\pm$.007) & .837 ($\pm$.058) \\
      10 & .367 ($\pm$.025) & .293 ($\pm$.009) & .850 ($\pm$.033) \\
      \midrule
      $\widehat K$ & 8 & 3 & 8 \\
      \bottomrule
    \end{tabular}
  \end{minipage}
\end{figure}
\FloatBarrier

\paragraph{Clusters contaminated with outliers}
Consider three clusters
\[
\text{Cluster } k = \Big\{ N(\boldsymbol{c}_k, 0.08^2 \bI_{2\times 2}),\ \boldsymbol{c}_k \sim N(\boldsymbol{a}_k, 0.6^2 \bI_{2\times 2}) \Big\}
\]
for $k=1, 2, 3$ with $\boldsymbol{a}_1 = (-0.8, 0),
\boldsymbol{a}_2 = (0.8, 0), \boldsymbol{a}_3 = (0, 0.8\sqrt{3})$. Every
cluster is contaminated with $40\%$ outliers, where each outlier
distribution is \textit{iid} drawn from
\begin{equation*}
    \Big\{ N(\boldsymbol{c}, 0.08^2 \bI_{2\times 2}),\ \boldsymbol{c} \sim \text{Unif}[\mathcal{R}] \Big\},
\end{equation*}
where $\mathcal{R} = [-1.6,\ 1.6] \times [-0.8,\ 0.8 + 0.8\sqrt{3}]$ is a
rectangle. Shown in Figure~\ref{fig:sim-three-disks} and
Table~\ref{tab:triangle-noise}, distance-based criteria are largely
affected by the outliers and split the true components into smaller
artificial sub-clusters, while depth-based RWSD is robust to overlapping
and outliers.

\begin{figure}[!htb]
  \centering
  \begin{minipage}[c]{0.38\textwidth}
    \centering
    \includegraphics[width=0.9\linewidth]{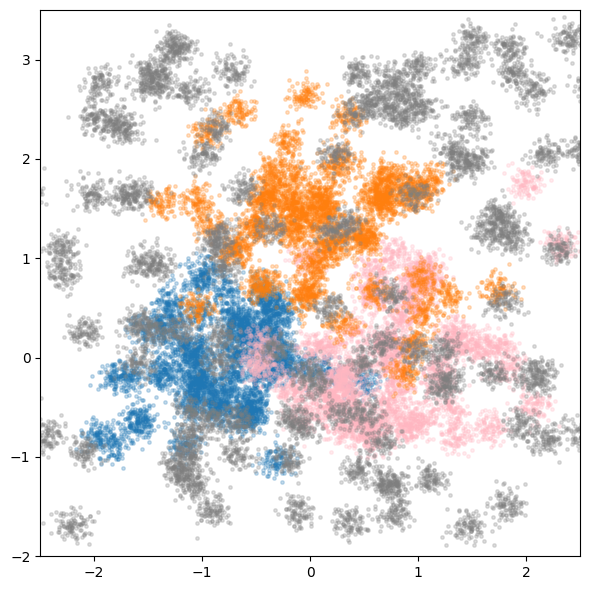}
    \caption{\small Clusters (colored) contaminated with outliers (gray).}
    \label{fig:sim-three-disks}
  \end{minipage}\hfill
  \begin{minipage}[c]{0.56\textwidth}
    \centering\small
    \captionof{table}{Criterion profiles in the outlier-contaminated setting (mean $\pm$ sd over $20$ replications); boldface marks each criterion's optimum. The true $K^\star=3$.}
    \label{tab:triangle-noise}
    \setlength{\tabcolsep}{4pt}
    \begin{tabular}{lccc}
      \toprule
      $K$ & Silhouette & RWSD & DBI \\
      \midrule
      2  & .318 ($\pm$.018) & .284 ($\pm$.007) & 1.36 ($\pm$.082) \\
      3  & \textbf{.357 ($\pm$.018)} & \textbf{.308 ($\pm$.009)} & 1.03 ($\pm$.048) \\
      4  & .328 ($\pm$.030) & .297 ($\pm$.006) & 1.06 ($\pm$.085) \\
      5  & .331 ($\pm$.023) & .295 ($\pm$.010) & .997 ($\pm$.086) \\
      6  & .328 ($\pm$.022) & .292 ($\pm$.008) & .962 ($\pm$.058) \\
      7  & .328 ($\pm$.020) & .290 ($\pm$.008) & .942 ($\pm$.059) \\
      8  & .326 ($\pm$.025) & .289 ($\pm$.009) & .914 ($\pm$.064) \\
      9  & .328 ($\pm$.018) & .288 ($\pm$.006) & .903 ($\pm$.056) \\
      10 & .330 ($\pm$.020) & .287 ($\pm$.007) & \textbf{.893 ($\pm$.053)} \\
      \midrule
      $\widehat K$ & 3 & 3 & 10 \\
      \bottomrule
    \end{tabular}
  \end{minipage}
\end{figure}
\FloatBarrier

Among the above settings, the first three witness similar performances by
Wasserstein $K$-means and $K$-medians, while Wasserstein $K$-medians beats
$K$-means in the last two cases. Across all five settings, RWSD selects the
true $K$, whereas the silhouette score and DBI fail in at least three of
the settings.

Two further groups of experiments are shown in
Appendix~\ref{app:regular-sims} to complement the above results. The first
group (Appendix~\ref{app:regular}) gives some regular settings where the
clusters are balanced, well separated, and hence, all three criteria
recover the true $K$. The second group lists a few settings where RWSD
fails. In these settings, the clusters are not geodesically convex, and
RWSD cannot efficiently distinguish within- and between-cluster geometry.
The success of RWSD depends on WSD's ability to correctly detect a
center-outward ordering. But WSD fails to do so in the case of ``hollow
clusters'' (Appendix~\ref{app:failure}).

\section{Real Data Analysis}\label{sec:real-data}
We apply RWSD to two real datasets, the MNIST handwritten digit dataset and a flow cytometry dataset from an acute myeloid leukemia cohort, to showcase its practical utility. Extra details are deferred to Appendix~\ref{app:experiment-details}.

\subsection{MNIST digits}
%We first apply our framework to the MNIST handwritten digit dataset.
Each grayscale MNIST handwritten digit image of size $28\times28$ is treated as a discrete distribution over pixel coordinates with intensities normalized to sum to one, $\mu_i \in \mathcal{P}_2([0,1]^2)$. 
We evaluate RWSD under different ground-truth numbers of clusters. The first setting (true $K = 3$) includes digits labeled as ``1'', ``2'', and ``9'', with 30 images for each digit to form a total of $N = 90$ normalized empirical distributions. The second setting (true $K = 5$) includes digits ``1'', ``2'', ``5'', ``8'', and ``9'', with 30 images for each digit to form $N = 150$ empirical distributions.
%The Wasserstein $K$-median clustering algorithm was applied using barycenter solver with entropic regularization.
For each $K \in \{2, \ldots, 9\}$, we computed the silhouette score, RWSD, and DBI to pick the optimal $K$.

\begin{table}[t]
\centering
\caption{\small MNIST digits: criterion profiles under the two ground-truth
  settings; boldface marks each criterion's optimum.}
  \vspace{3mm}
\begin{tabular}{l ccc ccc}
\toprule
& \multicolumn{3}{c}{$K^\star = 3$} & \multicolumn{3}{c}{$K^\star = 5$} \\
\cmidrule(lr){2-4} \cmidrule(lr){5-7}
$K$ & Silhouette & RWSD & DBI & Silhouette & RWSD & DBI \\
\midrule
2 & 0.2429 & 0.1453 & 1.8272 & 0.1990 & 0.1395 & 1.8927 \\
3 & 0.2794 & \textbf{0.1730} & 1.5550 & 0.2050 & 0.1344 & 1.7009 \\
4 & 0.2754 & 0.1712 & \textbf{1.4019} & 0.2097 & 0.1397 & \textbf{1.6903} \\
5 & 0.2541 & 0.1533 & 1.4831 & \textbf{0.2164} & \textbf{0.1398} & 1.7157 \\
6 & \textbf{0.2872} & 0.1576 & 1.4330 & 0.1904 & 0.1204 & 1.9217 \\
7 & 0.2601 & 0.1450 & 1.4536 & 0.1867 & 0.1176 & 1.9033 \\
8 & 0.2447 & 0.1428 & 1.4703 & 0.1723 & 0.1173 & 1.7966 \\
9 & 0.2176 & 0.1247 & 1.6573 & 0.1855 & 0.1073 & 1.9403 \\
\midrule
\textbf{$\widehat K$} & 6 & 3 & 4 & 5 & 5 & 4 \\
\bottomrule
\end{tabular}
\end{table}

\subsection{Acute Myeloid Leukemia (AML) flow cytometry cohort}\label{subsec:aml-cytof}
We apply RWSD to the BLAST110 flow cytometry dataset \citep{mocking2024computational}, comprising bone marrow samples from $20$ healthy individuals and $30$ acute myeloid leukemia (AML) patients. Each individual is represented by a distribution on $\R^8$ of eight protein markers, estimated from $m=1000$ cells. The cluster of AML patients are expected to be separable from  healthy individuals based on their abnormal distributions of protein markers on hematopoietic cells in bone marrow.   
As shown in Table~\ref{tab:aml-validation}, both RWSD and silhouette score select the two well-separated groups, consistent with previous findings \citep{mocking2024computational}. Preprocessing details are described in Appendix~\ref{app:aml-preprocessing}. 

%We also apply RWSD to a flow cytometry dataset, BLAST110 \citep{mocking2024computational}, which collects bone marrow samples from $20$ healthy individuals and $30$ acute myeloid leukemia (AML) patients. Flow cytometry quantifies fluorescence-labeled protein markers at the single-cell level and is a central tool for characterizing hematopoietic sub-populations and leukemic phenotypes \citep{saeys2016computational}. The goal is to identify AML patients based on their abnormal distributions of protein markers on hematopoietic cells in bone marrow.    The healthy population and AML patients are treated as two clusters, the flow cytometry profile of $d=8$ biomarkers of each individual is a distribution (on $\R^8$) from either cluster, and the $m=1000$ cells (profiles) from each individual are $m=1000$ data points drawn from the corresponding distribution.  Details of cytometry preprocessing are deferred to Appendix~\ref{app:aml-preprocessing}.
%As shown in Table~\ref{tab:aml-validation}, both silhouette score and RWSD work well on this dataset as the two populations (healthy \textit{vs.} AML patients) have quite separated flow cytometry profiles, consistent with the existing reports \citep{mocking2024computational}. 

\begin{table}[H]
\centering
\caption{RWSD and silhouette score pick the correct $K$.}
\label{tab:aml-validation}
\vspace{3mm}
\begin{tabular}{lccc}
\toprule
$K$ & Silhouette & RWSD & DBI \\
\midrule
2 & \textbf{0.2156} & \textbf{0.1604} & 1.2391 \\
3 & 0.2074 & 0.1435 & 1.2483 \\
4 & 0.1450 & 0.0812 & 1.1107 \\
5 & 0.1373 & 0.0604 & 0.9324 \\
6 & 0.1549 & 0.0586 & \textbf{0.8751} \\
\midrule
\textbf{$\widehat K$} & 2 & 2 & 6 \\
\bottomrule
\end{tabular}
\end{table}
\section{Discussion and Limitations}
We propose a new criterion, \textit{Relative Wasserstein Spatial Depth} (RWSD), to select the optimal number of clusters in Wasserstein $K$-center clustering. %It is built upon a recently developed notion of depth for probability distributions, Wasserstein spatial depth.  
RWSD favors partitions that are both internally cohesive and externally well separated in terms of centrality and Wasserstein geometry. 
Demonstrated by extensive simulations and real data analysis, RWSD is more robust than distance-based criteria under outliers, heavy dispersions, overlapped clusters, and unbalanced clusters. This advantage is largely attributed to its awareness of the nonlinear  Wasserstein geometry. 

On the theoretical side, we established a Pollard-type strong consistency result for Wasserstein $K$-centers, which implies the consistency of RWSD.
On the computational aspect, we devise a Wasserstein $K$-medians clustering algorithm, which is robust to ``irregular'' clusters. Computational efficiency is largely improved by restricting the support size of Wasserstein barycenters, entropic regularized OT, and iteratively reweighted least squares.

\paragraph{Limitations} Our theoretical results are established in  scenarios where the clusters are geodesically convex and the WSD correctly induces a center--outward ordering. In such benign cases, RWSD effectively distinguishes within- and between-cluster geometry.  
It remains open to explore scenarios where the clusters are not geodesically convex, and global depth may mask valleys in the Wasserstein landscape leading to cluster merging. We emphasize using the proposed method alongside domain expertise, validation, and careful interpretation.
A potential remedy would be \emph{local} Wasserstein spatial depth, where the centrality of a distribution is evaluated within its small neighborhood.

\bibliographystyle{plainnat}
\bibliography{WSD_Clustering.bib}

\clearpage
\appendix

\section{Wasserstein K-center clustering}\label{app:methods}
We describe two Wasserstein $K$-center clustering methods used in this work, namely Wasserstein $K$-means and Wasserstein $K$-medians. These two methods differ in choices of the centers of the clusters: the former uses Wasserstein barycenters, whereas the latter uses Wasserstein Fréchet medians. 

\subsection{Wasserstein K-means clustering}\label{app:kmeans}
Given a collection of probability measures $\{\mu_i\}_{i=1}^n \subset \mathcal{P}_2(\mathbb{R}^d)$, Wasserstein K-means seeks a $K$-group partition by minimizing the within-cluster variation measured in the $2$-Wasserstein metric:
\begin{equation}
\min_{\{\mathcal C_k,\nu_k\}_{k=1}^K}
\sum_{k=1}^K \sum_{i\in\mathcal C_k} \mathcal{W}_2^2(\mu_i,\nu_k),
\label{eq:wkm-objective}
\end{equation}
where $\mathcal C_1,\ldots,\mathcal C_K$ form a partition of $\{1,\ldots,n\}$, and each $\nu_k \in \mathcal{P}_2(\mathbb{R}^d)$ is the Wasserstein barycenter of the $k$th cluster.
%In contrast to Euclidean $K$-means, the cluster representatives are no longer arithmetic means, but Wasserstein barycenters.

Starting from an initial set of centroids
$\nu_1^{(0)},\ldots,\nu_K^{(0)}$,
the algorithm alternates between an assignment step and a barycenter update step. At iteration $t$, each distribution $\mu_i$ is assigned to its nearest current centroid:
\begin{equation}
\mathcal C_k^{(t)} = \left\{ i\in\{1,\ldots,n\}: \mathcal{W}_2(\mu_i,\nu_k^{(t)}) \le \mathcal{W}_2(\mu_i,\nu_j^{(t)}), \ \forall j=1,\ldots,K
\right\},
\label{eq:wkm-assignment}
\end{equation}
and the centroid of each cluster is then updated as the corresponding Wasserstein barycenter:
\begin{equation}
\nu_k^{(t+1)} = \arg\min_{\nu\in\mathcal P_2(\mathbb R^d)}
\frac{1}{|\mathcal C_k^{(t)}|}
\sum_{i\in\mathcal C_k^{(t)}} \mathcal{W}_2^2(\mu_i,\nu),
\quad k=1,\ldots,K.
\label{eq:wkm-barycenter-update}
\end{equation}
%In our setting, each subproblem in \eqref{eq:wkm-barycenter-update} is computed using the semi-free support Wasserstein barycenter procedure introduced in the previous subsection. 
The full procedure is summarized in Algorithm~\ref{alg:wasserstein-kmeans-barycenter}.

\begin{algorithm}[h]
\caption{Barycenter-based Wasserstein $K$-means clustering (\citep{zhuang2022wasserstein})}
\label{alg:wasserstein-kmeans-barycenter}
\begin{algorithmic}[1]
\State \textbf{Input:} a collection of probability $\{\mu_1,\dots,\mu_n\}\subset\mathcal P_2(\R^d)$, number of clusters $K$, tolerance $\varepsilon$
\State \textbf{Output:} cluster labels $\{l_i\}_{i=1}^n$ and Wasserstein centroids $\{\nu_k\}_{k=1}^K$
\State Initialize $\nu_1^{(0)},\ldots,\nu_K^{(0)} \in \mathcal P_2(\mathbb R^d)$
\State $t \gets 0$
\Repeat
    \For{$i=1,\ldots,n$}
        \State $l_i^{(t)} \gets \arg\min_{1\le k\le K} \mathcal{W}_2(\mu_i,\nu_k^{(t)})$
    \EndFor
    \For{$k=1,\ldots,K$}
        \State $\mathcal C_k^{(t)} \gets \{i:l_i^{(t)}=k\}$
        \State $\nu_k^{(t+1)} \gets \arg\min_{\nu\in\mathcal P_2(\mathbb R^d)}
        \frac{1}{|\mathcal C_k^{(t)}|}\sum_{i\in\mathcal C_k^{(t)}} \mathcal{W}_2^2(\mu_i,\nu)$
    \EndFor
    %\State Compute $\{\nu_k^{(t+1)}\}_{k=1}^K$ by the semi-free support barycenter algorithm in the previous subsection
    \State $\Delta^{(t)} \gets \max_{1\le k\le K} \mathcal{W}_2(\nu_k^{(t+1)},\nu_k^{(t)})$
    \State $t \gets t+1$
\Until{$\Delta^{(t-1)}\le \varepsilon$}
\State \textbf{return} $\{l_i^{(t-1)}\}_{i=1}^n$ and $\{\nu_k^{(t)}\}_{k=1}^K$
\end{algorithmic}
\end{algorithm}
%Algorithm~\ref{alg:wasserstein-kmeans-barycenter} is the Wasserstein analogue of the classical Lloyd iteration, with Euclidean centroids replaced by Wasserstein barycenters. Although the objective in \eqref{eq:wkm-objective} is nonconvex, the alternating assignment and update structure is simple to implement and naturally compatible with the semi-free support barycenter solver described above.
The algorithm can be viewed as an alternating minimization scheme in which the cluster labels are updated by nearest-barycenter assignment, and the barycenters are obtained by solving regularized semi-free support OT problems (detailed below). 

%\subsection{Compute Wasserstein Barycenter with Semi-Free Support}\label{sec:Wasserstein_barycenter}
The Wasserstein barycenter $\nu^\ast$ of multiple distributions $\{ \mu_i \}_{i=1}^n \in\mathcal{P}_2(\mathbb{R}^d)$ is defined as
\begin{equation}\label{Wasserstein_median}
\nu^\ast=\arg\min_{\nu\in\mathcal{P}_2(\mathbb{R}^d)} \frac1n\sum_{i=1}^n \mathcal{W}_2^2(\nu,\mu_i).
\end{equation}
In practice, the $\mu_i$'s are empirical distributions and written as $\mu_i=\sum_{j=1}^{m_i} b_{ij}\,\delta_{y_{ij}}$. 
When the barycenter is allowed to have completely free support, the  problem \eqref{Wasserstein_median} admits a linear programming (LP) formulation over a polytope that couples the
barycenter with each input distribution. Expressing the barycenter as $\nu=\sum_{p=1}^{K} a_p\,\delta_{x_p}$ with unknown supports
$X=\{x_p\}_{p=1}^{K}$ and weights $\boldsymbol{a}\in\Delta^{K}$, one seeks transport plans
$T^{(i)}\in\mathbb{R}_+^{K\times m_i}$ satisfying $T^{(i)}\mathbf{1}=\boldsymbol{a}$ and $(T^{(i)})^\top\mathbf{1}=\boldsymbol{b}_i$
and minimize the total cost $\sum_i \langle T^{(i)}, C(X,Y_i)\rangle$.
A key structural fact is the sparsity of optimal barycenters: there exists an optimal
discrete barycenter whose support size is bounded by
\[
K^\ast \;\le\; \sum_{i=1}^n m_i \;-\; n \;+\; 1,
\]
see the LP characterization and sparsity theorem of \emph{Anderes, Borgwardt, and Miller} (2016, Theorems~1–2)\citep{anderes2016discrete}.
Consequently, even though an optimal barycenter is sparse, the ambient LP may be very large because
the candidate support set can grow combinatorially with $\{m_i\}$, which leads to substantial computational cost.

When computing Wasserstein barycenters, we adopt the approach of \citep{cuturi2014fast}, which approximates the barycenter by a discrete probability measure whose support size is moderate. The algorithm alternates between (i) updating the weights of each support point and
(ii) updating the support of the barycenter.
Within each iteration, entropy-regularized OT is used to improve efficiency, which is a strictly convex problem. 
For discrete measures
$\mu=\sum_p a_p\delta_{x_p}$ and $\nu=\sum_q b_q\delta_{y_q}$ with cost
$M_{pq}=\|x_p-y_q\|^2$, entropy regularized OT can be formulated as
\begin{equation*}
%p_\varepsilon(a,b;M)=
\min_{T\in \Pi(\mu,\nu)} \langle T,M\rangle - \varepsilon H(T),
\qquad
H(T)=-\sum_{p,q} T_{pq}\log T_{pq},
\end{equation*}
whose dual involves potentials $(f,g)$ and can be computed efficiently by Sinkhorn matrix scaling.

\subsection{Wasserstein $K$-medians clustering}\label{app:kmedian}
Given a set of distributions $\{\mu_i\}_{i=1}^n \subset \mathcal{P}_2(\R^d)$,
Wasserstein $K$-medians clustering seeks a partition
$\{\calC_k\}_{k=1}^K$ and cluster representatives that minimize the
total within-cluster Wasserstein distance:
\begin{equation}\label{eq:W2Kmedian-objective}
\min_{\{\calC_k, \nu_k\}_{k=1}^K}
\sum_{k=1}^{K} \sum_{i\in \calC_k} \mathcal{W}_2(\mu_i, \nu_k).
\end{equation}
For fixed clusters, each optimal $\nu_k$ is a Wasserstein Fréchet
median of $\calC_k$, the median counterpart of the barycenter used by
$K$-means; like barycenters \citep{le2017existence}, Wasserstein
medians exist and are consistent \citep{you2025wasserstein}. The
problem is nonconvex, already because of the discrete assignment of
distributions to clusters, so we approximate its solution with
Algorithm~\ref{alg:Wkmed-pam-warm}.

\begin{algorithm}[h]
  \caption{Wasserstein $K$-median Clustering with PAM Warm Start}\label{alg:Wkmed-pam-warm}
  \begin{algorithmic}[1]
    \State \textbf{Input:}
      a collection of probability $\{\mu_1,\dots,\mu_n\}\subset\mathcal P_2(\R^d)$;
      number of clusters $K$;
      tolerance $\varepsilon$
    \State \textbf{Output:} cluster labels $\{l_i\}_{i=1}^n$ and Wasserstein medians $\{m_k\}_{k=1}^K$
    \State \textbf{PAM warm start:}
      \Statex \hspace{1.2em}Compute pairwise distances $D_{ij}=\mathcal{W}_2(\mu_i,\mu_j)$
      \Statex \hspace{1.2em}Run PAM on $D$ to obtain initial medoids $\mu_{n_1},\dots,\mu_{n_K}$ and initial labels $\{l_i^{(0)}\}_{i=1}^n$
      \Statex \hspace{1.2em}Set $m_k^{(0)} \gets \mu_{n_k}$ for $k=1,\dots,K$
    \State $t \gets 0$
    \Repeat
      \For{$i=1,\ldots,n$}
        \State $l_i^{(t)} \gets \arg\min_{1\le k\le K} \mathcal{W}_2(\mu_i,m_k^{(t)})$
      \EndFor
      \For{$k=1,\ldots,K$}
        \State $\mathcal C_k^{(t)} \gets \{i:l_i^{(t)}=k\}$
        \State $m_k^{(t+1)} \gets \arg\min_{\nu\in\mathcal{P}_2(X)} \sum_{i\in\mathcal C_k^{(t)}} \mathcal{W}_2(\mu_i,\nu)$
      \EndFor
      \State $\Delta^{(t)} \gets \max_{1\le k\le K} \mathcal{W}_2(m_k^{(t+1)},m_k^{(t)})$
      \State $t \gets t+1$
    \Until{$\Delta^{(t-1)} \le \varepsilon$}
    \State \textbf{return} $\{l_i^{(t-1)}\}_{i=1}^n$ and $\{m_k^{(t)}\}_{k=1}^K$
  \end{algorithmic}
\end{algorithm}
\medskip

We elaborate two key steps in the above algorithm: initialization via PAM and the computation of Fréchet median in Wasserstein space. 

%\noindent\textbf{PAM-based warm start}
To enhance stability and accelerate convergence, we use Partitioning Around Medoids (PAM) \citep{kaufman1987finding} as a warm start.  
Formally, given empirical measures $\{\mu_i\}_{i=1}^n \subset \mathcal{P}_2(X)$,
we first compute the pairwise Wasserstein distance matrix
\begin{equation}
D_{ij} = \mathcal{W}_2(\mu_i,\mu_j), \quad i,j=1,\ldots,n.
\label{eq:pam-warm-matrix}
\end{equation}
The PAM heuristic is then applied to identify $K$ representative medoids 
that minimize the total within-cluster dispersion:
\begin{equation}
\min_{\{\calC_k,m_k\}}
\sum_{k=1}^{K}\sum_{i\in \calC_k} D_{i,m_k}.
\label{eq:pam-warm-objective}
\end{equation}
The resulting medoids $\{m_k^{(0)}\}_{k=1}^K$ and initial cluster assignments 
$\{l_i^{(0)}\}_{i=1}^n$ provide a good initialization
for the subsequent Wasserstein $K$-median optimization.

%\noindent\textbf{Fréchet Median in Wasserstein Space} 
The Fréchet median of a set of probability measures 
$\{\mu_i\}_{i=1}^N \subset \mathcal{P}_2(\mathbb{R}^d)$ 
is defined as the probability measure that minimizes the sum of its Wasserstein distances to each element in that set:
\begin{equation}
\mu^{\ast} = \arg\min_{\nu \in \mathcal{P}_2(\mathbb{R}^d)}
\sum_{i=1}^{N} \mathcal{W}_2(\nu, \mu_i).
\end{equation}
%where $\mathcal{P}_2(\mathbb{R}^d)$ denotes the space of probability measures with finite second moment.

To compute $\mu^{\ast}$, we adopt an \emph{iteratively reweighted least squares} (IRLS) scheme \citep{you2025wasserstein}
defined on the Wasserstein space, which reformulates the median estimation problem as a sequence of weighted Wasserstein barycenter computations.
At each iteration $t$, the algorithm updates the weights according to the current distances between the candidate median $\nu^{(t)}$ and the input measures:
\begin{equation}
w_n^{(t)} = \frac{1}{\mathcal{W}_2(\nu^{(t)},\mu_n)}, \qquad
\tilde w_n^{(t)} = 
\frac{w_n^{(t)}}{\sum_{k=1}^N w_k^{(t)}},
\label{eq:irls-weight-update}
\end{equation}
Given these normalized weights $\tilde w_n^{(t)}$,
the next iterate $\nu^{(t+1)}$ is obtained by solving the corresponding weighted barycenter problem:
\begin{equation}
\nu^{(t+1)} =
\arg\min_{\nu \in \mathcal{P}_2(\mathbb{R}^d)}
\sum_{n=1}^{N} \tilde w_n^{(t)}\, \mathcal{W}_2^2(\nu, \mu_n).
\label{eq:weighted-barycenter}
\end{equation}
Such updates replace the non smooth median objective with a locally
quadratic surrogate, yielding a smooth majorization–minimization step 
in Wasserstein space.

In practice, each weighted barycenter subproblem
\eqref{eq:weighted-barycenter} can be efficiently solved using 
the semi-free support barycenter method introduced in the previous subsection.
By fixing a moderate number of support points and jointly optimizing their locations and weights under entropic regularization, we obtain a numerically stable and computationally tractable approximation of the Wasserstein median. The resulting estimator can thus be viewed as a semi-free support Wasserstein median, whose computation is summarized in Algorithm~\ref{alg:Wmed-IRLS} \citep{you2025wasserstein}.

\begin{algorithm}[h]
\caption{Wasserstein median computation by IRLS.}
\label{alg:Wmed-IRLS}
\begin{algorithmic}[1]
\State \textbf{Input:} a collection of probability $\{\mu_1,\dots,\mu_n\}\subset\mathcal P_2(\R^d)$
\State \textbf{Output:} $\displaystyle \nu^\ast=\arg\min_{\nu\in\mathcal{P}_2(\mathbb{R}^d)}\sum_{n=1}^N \mathcal{W}_2(\nu,\mu_n)$.
\Statex
\State \textbf{Initialize}
$\nu^{(0)}\in\mathcal{P}_2(\mathbb{R}^d)$.
\State \Repeat
      \State $w_n^{(t)} = 1 / \mathcal{W}_2(\nu^{(t)},\mu_n)$
      \State $\tilde{w}_n^{(t)} = w_n^{(t)} / \sum_{k=1}^N w_k^{(t)}$
      \State $\displaystyle
             \nu^{(t+1)} = \arg\min_{\nu\in\mathcal{P}_2(\mathbb{R}^d)}
             \sum_{n=1}^N \tilde{w}_n^{(t)}\,\mathcal{W}_2^2(\nu,\mu_n)$
      \If{$\nu^{(t+1)}$ equals one of the $\mu_n$'s \textbf{or} $\nu^{(t+1)}$ is close to $\nu^{(t)}$}
         \State \textbf{return} $\nu^{(t+1)}$ and stop
      \EndIf
\Until{convergence}
\end{algorithmic}
\end{algorithm}

% \medskip
% \noindent\textbf{Computing Wasserstein Fréchet median via IRLS }
% \begin{equation}
% m_k^{(t+1)} = \arg\min_{\nu\in\mathcal{P}_2(X)}
% \sum_{i=1}^{n} p_{i,k}^{(t)} \mathcal{W}_2(\nu,\mu_i),
% \label{eq:cluster-median-update}
% \end{equation}
% where $p_{i,k}^{(t)}=\mathbb{P}[l_i^{(t)}=k]$.
% Each median update in \eqref{eq:cluster-median-update}
% is computed using the semi-free support IRLS scheme 
% introduced in the previous section \cite{you2025wasserstein}.

\subsection{Implementation of empirical WSD}\label{app:wsd-impl}
For each pair $(\hat\mu_{q,m},\, \hat\mu_{i,m})$ of empirical measures with uniform weights $1/m$, 
we solve the discrete OT problem exactly via linear programming using the 
\texttt{ot.emd} routine in the Python Optimal Transport library \citep{flamary2021pot}, 
with squared Euclidean ground cost. This yields an OT plan 
$\pi_{q,i} \in \mathbb{R}^{m \times m}_{+}$ and the squared Wasserstein distance 
$\mathcal{W}_2^2(\hat\mu_{q,m},\, \hat\mu_{i,m}) = \sum_{j,k} \pi_{q,i}(j,k)\, \|X_{q,j} - X_{i,k}\|^2$. 
We then form the barycentric projection
\[
T_{\hat\mu_{q,m},\, \hat\mu_{i,m}}(X_{q,j}) := m \sum_{k=1}^{m} X_{i,k}\, \pi_{q,i}(j,k),
\]
which serves as the discrete analogue of the OT map in Definition~\ref{def:WSD}. 
When $\mathcal{W}_2(\hat\mu_{q,m},\, \hat\mu_{i,m}) < \tau$ for a small tolerance $\tau$, 
we set the corresponding direction vector to zero, following the convention in Definition~\ref{def:WSD}.
\section{Main proofs}\label{app:proof}

\subsection{Proof of Theorem~\ref{thm:W2-kmeans}}
\paragraph{Notation and preliminaries.}
Throughout this appendix, for a metric space $(Y,\rho)$, we write
$B_\rho(y,r):=\{y'\in Y:\rho(y',y)<r\}$ for the open ball of radius $r$
around $y$. The \emph{Hausdorff distance} between two nonempty subsets
$A,B\subseteq Y$ is
\[
    H(A,B):=\max\Bigl\{\sup_{a\in A}\inf_{b\in B}\rho(a,b),\;
                       \sup_{b\in B}\inf_{a\in A}\rho(a,b)\Bigr\}.
\]
The distance $H(A,B)$ is small when every point of $A$ is close to some
point of $B$ and every point of $B$ is close to some point of $A$; we use
$H$ to compare finite sets of centers. The \emph{support}
$\mathrm{supp}(\mathbf{P})$ of a Borel probability measure $\mathbf{P}$ on
$Y$ is the set of points $y\in Y$ with $\mathbf{P}(B_\rho(y,r))>0$ for
every $r>0$. When $Y$ is Polish, $\mathrm{supp}(\mathbf{P})$ is nonempty
and $\mathbf{P}(\mathrm{supp}(\mathbf{P}))=1$
\citep{billingsley1999convergence}.

For a finite set $A\subset\mathcal{P}_2(X)$ and
$\mathbf{P}\in\mathcal{P}(\mathcal{P}_2(X))$, recall the population
$K$-means objective $\Phi$ and the optimal value $m_K$
\begin{equation*}
\begin{aligned}
    &\Phi(A,\mathbf{P})=\int g_A(\mu)\,\mathbf{P}(d\mu),\qquad
    g_A(\mu):=\min_{\nu\in A}\mathcal{W}_2^2(\mu,\nu),\\
    &m_K(\mathbf{P})=\inf\{\Phi(A,\mathbf{P}):|A|\le K\}.
\end{aligned}
\end{equation*}
Let $\mathbf{P}_n:=\tfrac{1}{n}\sum_{i=1}^{n}\delta_{\mu_i}$ be the
empirical measure of the distributions $\mu_1,\dots,\mu_n$ drawn
independently from $\mathbf{P}$. We compare center sets in the space
\[
    \mathcal{A}_K:=\{A\subset\mathcal{P}_2(X):1\le|A|\le K\},
\]
equipped with the Hausdorff distance $H$ induced by $\mathcal{W}_2$, and we
write $D:=\mathrm{diam}(\mathcal{P}_2(X))$. To lighten notation, we write
$\bar A(K):=A^\star_K$ for the population minimizer and
$A_n(K):=\widehat{A}_{n,K}$ for the empirical minimizer in
Theorem~\ref{thm:W2-kmeans}. Strong consistency then means
$A_n(K)\to\bar A(K)$ almost surely in $H$.

\begin{remark}[Compactness of $(\mathcal{P}_2(X),\mathcal{W}_2)$]\label{rem:p2-compact}
The proof uses the compactness of $(\mathcal{P}_2(X),\mathcal{W}_2)$.
Because $\mathcal{P}_2(X)$ is an infinite-dimensional space of measures,
the compactness is not obvious, and we derive the compactness from three
classical facts, (a)--(c). Part~(d) then gives the compactness of
$(\mathcal{A}_K,H)$.

\smallskip\noindent\emph{(a) Weak compactness of $\mathcal{P}(X)$.}
Since $X$ is compact and $\mu(X)=1$ for every $\mu\in\mathcal{P}(X)$, the
family $\mathcal{P}(X)$ is uniformly tight. Prokhorov's theorem
\citep{prokhorov1956convergence,billingsley1999convergence} then shows
that $\mathcal{P}(X)$ with the weak topology is compact and Polish.

\smallskip\noindent\emph{(b) $\mathcal{P}_2(X)=\mathcal{P}(X)$.}
Fix $x_0\in X$. Every $\mu\in\mathcal{P}(X)$ satisfies
\[
    \int_X d(x,x_0)^2\,\mu(dx)\;\le\;\mathrm{diam}(X)^2\;<\;\infty,
\]
so every $\mu\in\mathcal{P}(X)$ has a finite second moment, and
$\mathcal{P}_2(X)=\mathcal{P}(X)$.

\smallskip\noindent\emph{(c) $\mathcal{W}_2$ metrizes weak convergence.}
By \citet[Theorem~6.9]{villani2009optimal} and
\citet[Proposition~7.1.5]{ambrosio2008gradient}, on a compact base space,
$\mathcal{W}_2(\mu_n,\mu)\to0$ if and only if $\mu_n\rightharpoonup\mu$.
Hence $\mathcal{W}_2$ induces the weak topology of (a).

\smallskip\noindent
By (a)--(c), $(\mathcal{P}_2(X),\mathcal{W}_2)$ is a compact Polish space.
The diameter $D$ of $(\mathcal{P}_2(X),\mathcal{W}_2)$ is finite, because
for all $\mu,\nu\in\mathcal{P}_2(X)$,
\[
    \mathcal{W}_2^2(\mu,\nu)
    \;=\;\inf_{\pi\in\Pi(\mu,\nu)}\int d(x,y)^2\,\pi(dx,dy)
    \;\le\;\mathrm{diam}(X)^2 .
\]

\smallskip\noindent\emph{(d) Hausdorff compactness of $\mathcal{A}_K$.}
By the Blaschke selection theorem
\citep[Theorem~7.3.8]{burago2001course}, for every compact metric space
$(Y,\rho)$, the set
\[
    \mathrm{Cl}(Y):=\{F\subseteq Y:\,F\text{ nonempty and closed}\}
\]
is compact under the Hausdorff distance induced by $\rho$. We apply the
theorem with $Y=\mathcal{P}_2(X)$. Finite sets are closed, so
$\mathcal{A}_K=\{F\in\mathrm{Cl}(\mathcal{P}_2(X)):|F|\le K\}$. The set
$\mathcal{A}_K$ is closed in $\mathrm{Cl}(\mathcal{P}_2(X))$: if
$F_n\to F$ in $H$ and $F$ contained $K+1$ points
$\nu_1,\dots,\nu_{K+1}$ with pairwise distances at least $3\varepsilon>0$,
then for large $n$ every $\nu_l$ would have a point of $F_n$ within
distance $\varepsilon$, and the $K+1$ nearby points of $F_n$ would be
distinct, so $|F_n|>K$. Hence $(\mathcal{A}_K,H)$ is compact.

\smallskip\noindent
The proof uses compactness in two ways. The compactness of
$(\mathcal{P}_2(X),\mathcal{W}_2)$ gives the bound
\[
    0\;\le\;g_A(\mu)\;\le\;D^2
    \qquad\text{for all }A\in\mathcal{A}_K\text{ and }\mu\in\mathcal{P}_2(X),
\]
so every average over $\mathbf{P}_n$ below is an average of bounded
i.i.d.\ random variables, to which the strong law of large numbers
(SLLN) applies. The compactness of $(\mathcal{A}_K,H)$ gives the finite
nets in Lemma~\ref{lem:uslln} and the convergent subsequences in Step~3.
\end{remark}

The proof adapts the strategy of \citet{pollard1981strong} from Euclidean
space to Wasserstein space and has three steps. Step~1 fixes an anchor
$\mu_0$ in the support of $\mathbf{P}$ and shows that at least one center
of $A_n(K)$ stays within a fixed distance of $\mu_0$. Step~2 extends the
bound to all centers of $A_n(K)$ by level separation,
assumption~(iii): if one center were far from $\mu_0$, removing the far
center would barely raise the empirical cost, although $K-1$ centers
cost strictly more than $K$ centers in the population. Step~3 uses the
compactness of $(\mathcal{A}_K,H)$ and the uniform convergence of the
empirical objective to show that every limit point of $A_n(K)$ equals
the unique population minimizer $\bar A(K)$. We first prove two lemmas
used in all three steps.

\begin{lemma}[Lipschitz continuity of $\Phi$]\label{lem:lip-phi}
For any $A,B\in\mathcal{A}_K$ and any probability measure $\mathbf{Q}$ on
$\mathcal{P}_2(X)$,
\[
    |\Phi(A,\mathbf{Q})-\Phi(B,\mathbf{Q})|\;\le\;\bigl(2D+H(A,B)\bigr)\,H(A,B).
\]
In particular, $A\mapsto\Phi(A,\mathbf{Q})$ is continuous on
$(\mathcal{A}_K,H)$.
\end{lemma}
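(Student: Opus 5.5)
The plan is to reduce the statement to a pointwise bound on the integrand $g_A(\mu)=\min_{\nu\in A}\mathcal{W}_2^2(\mu,\nu)$. We will show that for every $\mu\in\mathcal{P}_2(X)$,
\[
|g_A(\mu)-g_B(\mu)|\le \bigl(2D+H(A,B)\bigr)H(A,B).
\]
Integrating against $\mathbf{Q}$ and using $|\int f\,d\mathbf{Q}-\int h\,d\mathbf{Q}|\le\int|f-h|\,d\mathbf{Q}$ then gives the claim directly. Since $\mathbf{Q}$ is a probability measure and the bound is uniform in $\mu$, no assumption on $\mathbf{Q}$ is needed. Measurability of $g_A$ is clear, because it is a minimum of finitely many continuous functions of $\mu$.

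For the pointwise bound, fix $\mu$ and write $h:=H(A,B)$. By symmetry it suffices to bound $g_B(\mu)-g_A(\mu)$ from above.

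1. Let $a\in A$ attain $g_A(\mu)=\mathcal{W}_2^2(\mu,a)$; it exists because $A$ is finite.
2. Since $A$ is finite, $\inf_{b\in B}\mathcal{W}_2(a,b)\le h$ is attained. Hence there is $b\in B$ with $\mathcal{W}_2(a,b)\le h$.
3. By the triangle inequality for $\mathcal{W}_2$, $\mathcal{W}_2(\mu,b)\le \mathcal{W}_2(\mu,a)+h$.
4. Squaring gives $g_B(\mu)\le \mathcal{W}_2^2(\mu,b)\le \mathcal{W}_2^2(\mu,a)+2h\,\mathcal{W}_2(\mu,a)+h^2$.
5. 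By Remark~\ref{rem:p2-compact}, $\mathcal{W}_2(\mu,a)\le D$. Therefore $g_B(\mu)-g_A(\mu)\le 2Dh+h^2=(2D+h)h$.

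Exchanging the roles of $A$ and $B$ gives the reverse inequality, which completes the pointwise bound.

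For continuity, note that $(2D+H(A,B))H(A,B)\to0$ as $H(A,B)\to0$, and $D<\infty$ by compactness. So $A\mapsto\Phi(A,\mathbf{Q})$ is in fact locally Lipschitz on $(\mathcal{A}_K,H)$.

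The only step needing care is the choice of the nearby center $b$ in step 2. The Hausdorff distance controls $\sup_{a\in A}\inf_{b\in B}\mathcal{W}_2(a,b)$, and it is finiteness of $B$ that makes this infimum attained. Without attainment, one would instead choose $b$ with $\mathcal{W}_2(a,b)\le h+\varepsilon$ and let $\varepsilon\downarrow0$. Beyond that, the argument only uses the triangle inequality and the finite diameter.
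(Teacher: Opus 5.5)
Your proposal is correct and follows the paper's proof essentially line for line: you take the nearest center $a\in A$ to $\mu$, pick a partner $b\in B$ with $\mathcal{W}_2(a,b)\le H(A,B)$, apply the triangle inequality, square using $\mathcal{W}_2(\mu,a)\le D$, exchange the roles of $A$ and $B$, and integrate against $\mathbf{Q}$. One small slip: in step~2 the infimum is attained because $B$ is finite, not $A$, as you correctly say at the end.
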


The bound does not depend on $\mathbf{Q}$, so we can apply
Lemma~\ref{lem:lip-phi} both with $\mathbf{Q}=\mathbf{P}$ and with
$\mathbf{Q}=\mathbf{P}_n$.

\begin{proof}
Let $\delta:=H(A,B)$. By the definition of $H$, and since $B$ is finite,
every $a\in A$ has some $b(a)\in B$ with
$\mathcal{W}_2(a,b(a))\le\delta$. Fix $\mu\in\mathcal{P}_2(X)$, and let
$a_\mu\in A$ be a center nearest to $\mu$, so that
$g_A(\mu)=\mathcal{W}_2^2(\mu,a_\mu)$. The triangle inequality gives
\[
    \mathcal{W}_2(\mu,b(a_\mu))
    \;\le\; \mathcal{W}_2(\mu,a_\mu)+\mathcal{W}_2(a_\mu,b(a_\mu))
    \;\le\; \sqrt{g_A(\mu)}+\delta.
\]
Squaring both sides and using $\sqrt{g_A(\mu)}\le D$ gives
\[
    g_B(\mu)\;\le\;\mathcal{W}_2^2(\mu,b(a_\mu))
    \;\le\; g_A(\mu)+2\delta\sqrt{g_A(\mu)}+\delta^2
    \;\le\; g_A(\mu)+(2D+\delta)\delta.
\]
Exchanging the roles of $A$ and $B$ gives
$g_A(\mu)\le g_B(\mu)+(2D+\delta)\delta$, so
$|g_A(\mu)-g_B(\mu)|\le(2D+\delta)\delta$ for every $\mu$. Integrating
against $\mathbf{Q}$ proves the lemma.
\end{proof}

\begin{lemma}[Uniform SLLN over $\mathcal{A}_K$]\label{lem:uslln}
\[
    \sup_{A\in\mathcal{A}_K}\bigl|\Phi(A,\mathbf{P}_n)-\Phi(A,\mathbf{P})\bigr|
    \longrightarrow 0 \qquad \text{a.s.}
\]
In particular, $m_j(\mathbf{P}_n)\to m_j(\mathbf{P})$ a.s.\ for every
$j\le K$.
\end{lemma}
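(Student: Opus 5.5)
The plan is a finite-net argument on the compact space $(\mathcal{A}_K,H)$, combining the pointwise strong law with the equicontinuity supplied by Lemma~\ref{lem:lip-phi}.

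\textbf{Pointwise convergence.} For each fixed $A\in\mathcal{A}_K$, the values $g_A(\mu_i)$ are i.i.d.\ and lie in $[0,D^2]$ by Remark~\ref{rem:p2-compact}. The SLLN therefore gives $\Phi(A,\mathbf{P}_n)\to\Phi(A,\mathbf{P})$ almost surely.

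\textbf{Uniformity.} Fix $\varepsilon>0$ and choose $\eta\in(0,D]$ with $(2D+\eta)\eta\le\varepsilon$. Since $(\mathcal{A}_K,H)$ is compact by Remark~\ref{rem:p2-compact}(d), there is a finite $\eta$-net $A^{(1)},\dots,A^{(N)}$ of $\mathcal{A}_K$. Intersecting the countably many SLLN events over a rational sequence $\varepsilon_j\downarrow0$ and their nets yields a single probability-one event. On that event, for every $j$, $\max_{l\le N_j}|\Phi(A^{(l)},\mathbf{P}_n)-\Phi(A^{(l)},\mathbf{P})|\to0$.

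For an arbitrary $A$, pick $l$ with $H(A,A^{(l)})<\eta$. Lemma~\ref{lem:lip-phi} holds uniformly in $\mathbf{Q}$, so applying it with $\mathbf{Q}=\mathbf{P}_n$ and with $\mathbf{Q}=\mathbf{P}$ gives
\[
|\Phi(A,\mathbf{P}_n)-\Phi(A,\mathbf{P})|\le 2\varepsilon+|\Phi(A^{(l)},\mathbf{P}_n)-\Phi(A^{(l)},\mathbf{P})|.
\]
Hence $\limsup_n\sup_A|\Phi(A,\mathbf{P}_n)-\Phi(A,\mathbf{P})|\le 2\varepsilon_j$ for every $j$, so the supremum tends to $0$ almost surely.

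\textbf{Optimal values.} Fix $j\le K$. Since $\mathcal{A}_j\subseteq\mathcal{A}_K$, the difference of infima over $\mathcal{A}_j$ is bounded by the supremum difference:
\[
|m_j(\mathbf{P}_n)-m_j(\mathbf{P})|\le\sup_{A\in\mathcal{A}_j}|\Phi(A,\mathbf{P}_n)-\Phi(A,\mathbf{P})|\to0 \quad\text{a.s.}
\]

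\textbf{Main obstacle.} The only delicate point is measurability of the supremum over an uncountable index set. The net argument sidesteps it: the supremum is sandwiched between quantities built from countably many random variables, so the almost-sure statement can be phrased through those. Alternatively, continuity in $A$ together with separability of $\mathcal{A}_K$ lets one replace the supremum by one over a countable dense subset.
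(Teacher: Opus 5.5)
Your proposal is correct and follows essentially the same route as the paper. Both arguments use a finite $H$-net of the compact space $(\mathcal{A}_K,H)$ and the pointwise SLLN on that net, using the bounded envelope $D^2$. Both then apply Lemma~\ref{lem:lip-phi} with $\mathbf{Q}=\mathbf{P}_n$ and with $\mathbf{Q}=\mathbf{P}$ to pass to arbitrary $A$, and intersect countably many probability-one events as $\varepsilon\downarrow 0$. For the optimal values, both use $|\inf f-\inf g|\le\sup|f-g|$ over $\mathcal{A}_j\subseteq\mathcal{A}_K$.

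There are two minor differences. Your choice $\eta\in(0,D]$ is empty in the trivial case $D=0$; the paper takes $\delta\in(0,1]$. Your remark on the measurability of the supremum is something the paper does not address.
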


For a fixed center set $A$, $\Phi(A,\mathbf{P}_n)$ is an average of $n$
i.i.d.\ random variables bounded by $D^2$, so
$\Phi(A,\mathbf{P}_n)\to\Phi(A,\mathbf{P})$ a.s.\ by the SLLN. The
empirical minimizer $A_n(K)$ is random, however, so the proof of
Theorem~\ref{thm:W2-kmeans} needs the convergence to hold uniformly over
$\mathcal{A}_K$. Lemma~\ref{lem:lip-phi} and a finite net of
$(\mathcal{A}_K,H)$ reduce the supremum over $\mathcal{A}_K$ to a maximum
over finitely many center sets, to which the SLLN applies.

\begin{proof}
Fix $\varepsilon>0$ and choose $\delta\in(0,1]$ with
$(2D+1)\delta<\varepsilon/3$. Since $(\mathcal{A}_K,H)$ is compact
(Remark~\ref{rem:p2-compact}), there is a finite set
$\mathcal{A}_{K,\delta}\subset\mathcal{A}_K$ such that every
$A\in\mathcal{A}_K$ has some $A^\delta\in\mathcal{A}_{K,\delta}$ with
$H(A,A^\delta)\le\delta$. Lemma~\ref{lem:lip-phi} then gives, for every
$A\in\mathcal{A}_K$ and every probability measure $\mathbf{Q}$ on
$\mathcal{P}_2(X)$,
\begin{equation}\label{eq:lemma2-net}
    \bigl|\Phi(A,\mathbf{Q})-\Phi(A^\delta,\mathbf{Q})\bigr|
    \;\le\;(2D+\delta)\delta
    \;\le\;(2D+1)\delta
    \;<\;\varepsilon/3.
\end{equation}
The set $\mathcal{A}_{K,\delta}$ is finite and every $g_{A^\delta}$ is
bounded by $D^2$, so the SLLN \citep[Theorem~22.1]{billingsley1995probability},
applied to each $A^\delta\in\mathcal{A}_{K,\delta}$, gives
\[
    \max_{A^\delta\in\mathcal{A}_{K,\delta}}
        \bigl|\Phi(A^\delta,\mathbf{P}_n)-\Phi(A^\delta,\mathbf{P})\bigr|
    \;\xrightarrow{a.s.}\;0.
\]
For every $A\in\mathcal{A}_K$, the triangle inequality, as in the
empirical-process argument of \citet{ganssler1979empirical}, gives
\begin{equation*}
    \begin{aligned}
    |\Phi(A,\mathbf{P}_n)-\Phi(A,\mathbf{P})|
    \;\le\; &|\Phi(A,\mathbf{P}_n)-\Phi(A^\delta,\mathbf{P}_n)|
        + |\Phi(A^\delta,\mathbf{P}_n)-\Phi(A^\delta,\mathbf{P})|\\
        &+ |\Phi(A^\delta,\mathbf{P})-\Phi(A,\mathbf{P})|.
        \end{aligned}
\end{equation*}
By \eqref{eq:lemma2-net} with $\mathbf{Q}=\mathbf{P}_n$ and with
$\mathbf{Q}=\mathbf{P}$, the first and third terms are each below
$\varepsilon/3$. The middle term is at most the maximum in the previous
display, which tends to $0$ a.s. Taking the supremum over
$A\in\mathcal{A}_K$ and then the $\limsup$ over $n$ gives
\[
    \limsup_{n\to\infty}\,\sup_{A\in\mathcal{A}_K}
    |\Phi(A,\mathbf{P}_n)-\Phi(A,\mathbf{P})|\;\le\;2\varepsilon/3\;<\;\varepsilon
    \qquad\text{a.s.}
\]
Applying the bound with $\varepsilon=1/k$ for $k=1,2,\dots$ and
intersecting the countably many probability-one events proves the
uniform convergence. Finally, $\mathcal{A}_j\subseteq\mathcal{A}_K$ for
$j\le K$, so
\[
    |m_j(\mathbf{P}_n)-m_j(\mathbf{P})|
    \;\le\;\sup_{A\in\mathcal{A}_j}|\Phi(A,\mathbf{P}_n)-\Phi(A,\mathbf{P})|
    \;\le\;\sup_{A\in\mathcal{A}_K}|\Phi(A,\mathbf{P}_n)-\Phi(A,\mathbf{P})|,
\]
and the right-hand side tends to $0$ a.s.
\end{proof}

We now prove Theorem~\ref{thm:W2-kmeans}.

\paragraph{Step 1: At least one empirical center lies in a fixed ball.}
Step~1 fixes an anchor $\mu_0$ in the support of $\mathbf{P}$ and shows
that $A_n(K)$ keeps a center near $\mu_0$. If every center of $A_n(K)$
were far from $\mu_0$, the distributions near $\mu_0$ would be far from
every center, and the empirical cost $\Phi(A_n(K),\mathbf{P}_n)$ would
exceed the cost of the single center $\{\mu_0\}$, which contradicts the
optimality of $A_n(K)$.

Since $\mathcal{P}_2(X)$ is Polish (Remark~\ref{rem:p2-compact}),
$\mathrm{supp}(\mathbf{P})$ is nonempty; choose
$\mu_0\in\mathrm{supp}(\mathbf{P})$. By the definition of the support,
\begin{equation}\label{eq:positive-mass}
    p_r:=\mathbf{P}\bigl(B_{\mathcal{W}_2}(\mu_0,r)\bigr)>0
    \qquad\text{for every }r>0.
\end{equation}
Fix any $r>0$; the value of $r$ plays no role. Let
\[
    c_0:=\Phi(\{\mu_0\},\mathbf{P})=\int \mathcal{W}_2^2(\mu,\mu_0)\,\mathbf{P}(d\mu)\;\le\;D^2.
\]
Fix a radius $M>r$ and suppose that, for infinitely many $n$,
\begin{equation}\label{eq:contradiction-step1}
    A_n(K)\cap B_{\mathcal{W}_2}(\mu_0,M)=\varnothing.
\end{equation}
For every $n$ that satisfies \eqref{eq:contradiction-step1}, every
center $\nu\in A_n(K)$ satisfies $\mathcal{W}_2(\nu,\mu_0)\ge M$. For
every $\mu\in B_{\mathcal{W}_2}(\mu_0,r)$, the triangle inequality gives
\[
    \mathcal{W}_2(\mu,\nu)\;\ge\;\mathcal{W}_2(\nu,\mu_0)-\mathcal{W}_2(\mu,\mu_0)\;\ge\;M-r,
\]
so $g_{A_n(K)}(\mu)\ge(M-r)^2$. Restricting the integral that defines
$\Phi$ to $B_{\mathcal{W}_2}(\mu_0,r)$ gives
\begin{equation}\label{eq:lower-bound-step1}
    \begin{aligned}
    \Phi(A_n(K),\mathbf{P}_n)
    \;&=\; \int g_{A_n(K)}(\mu)\,\mathbf{P}_n(d\mu)\\
    &\ge\; \int_{B_{\mathcal{W}_2}(\mu_0,r)} (M-r)^2\,\mathbf{P}_n(d\mu)
    \;=\; (M-r)^2\,\mathbf{P}_n\bigl(B_{\mathcal{W}_2}(\mu_0,r)\bigr).
    \end{aligned}
\end{equation}
The indicator of $B_{\mathcal{W}_2}(\mu_0,r)$ is bounded, so the SLLN
gives $\mathbf{P}_n(B_{\mathcal{W}_2}(\mu_0,r))\to p_r$ a.s. Taking the
$\liminf$ of \eqref{eq:lower-bound-step1} along the $n$ that satisfy
\eqref{eq:contradiction-step1} gives
\begin{equation}\label{eq:liminf-step1}
    \liminf_{n}\Phi(A_n(K),\mathbf{P}_n)\;\ge\;(M-r)^2 p_r.
\end{equation}
For an upper bound, the single center $\{\mu_0\}$ is admissible because
$1\le K$, so the optimality of $A_n(K)$ gives
\[
    \Phi(A_n(K),\mathbf{P}_n)
    \;\le\;\Phi(\{\mu_0\},\mathbf{P}_n)
    \;=\;\int \mathcal{W}_2^2(\mu,\mu_0)\,\mathbf{P}_n(d\mu).
\]
The integrand is bounded by $D^2$, so the SLLN gives
$\Phi(\{\mu_0\},\mathbf{P}_n)\to c_0$ a.s., and
\begin{equation}\label{eq:limsup-step1}
    \limsup_{n\to\infty}\Phi(A_n(K),\mathbf{P}_n)\;\le\;c_0.
\end{equation}
Together, \eqref{eq:liminf-step1} and \eqref{eq:limsup-step1} give
$(M-r)^2p_r\le c_0$. The constant $c_0$ does not depend on $M$, while
$(M-r)^2p_r\to\infty$ as $M\to\infty$. Choose $M_1>r$ with
$(M_1-r)^2p_r>c_0$. Then, almost surely, \eqref{eq:contradiction-step1}
with $M=M_1$ holds for only finitely many $n$, so
\[
    A_n(K)\cap B_{\mathcal{W}_2}(\mu_0,M_1)\neq\varnothing
    \qquad\text{for all large } n .
\]
The same conclusion holds for every radius larger than $M_1$.

\paragraph{Step 2: All empirical centers lie in a larger ball.}
Step~1 bounds the distance from $\mu_0$ to one center of $A_n(K)$.
Step~2 bounds the distance from $\mu_0$ to every center of $A_n(K)$. If
$K=1$, Step~1 already bounds the only center, so let $K\ge2$. Step~2
uses level separation, assumption~(iii):
\[
    \Delta_K:=m_{K-1}(\mathbf{P})-m_K(\mathbf{P})>0.
\]
If a center of $A_n(K)$ were far from $\mu_0$, the far center would not
be the nearest center of any distribution near $\mu_0$, so removing the
far center would barely raise the empirical cost. The remaining $K-1$
centers would then nearly attain $m_K(\mathbf{P})$, which contradicts
$\Delta_K>0$.

For $M>0$, let
\[
    \tau(M):=\tfrac{9}{4}\int_{\{\mathcal{W}_2(\mu,\mu_0)>2M\}}
    \mathcal{W}_2^2(\mu,\mu_0)\,\mathbf{P}(d\mu).
\]
The integrand is bounded by $D^2$ and tends to $0$ pointwise as
$M\to\infty$, so dominated convergence gives $\tau(M)\to0$. Enlarging
$M_1$ from Step~1 if necessary, we assume $\tau(M_1)<\Delta_K/2$.
Suppose, to reach a contradiction, that along a subsequence (still
indexed by $n$) some center $\nu_n\in A_n(K)$ satisfies
\begin{equation}\label{eq:nu-far}
    \mathcal{W}_2(\nu_n,\mu_0)>5M_1.
\end{equation}
By Step~1, for all large $n$, $A_n(K)$ also contains a center $\eta_n$
with $\mathcal{W}_2(\eta_n,\mu_0)<M_1$. Remove the far center:
\[
    A_n':=A_n(K)\setminus\{\nu_n\}.
\]
The set $A_n'$ contains $\eta_n$ and has at most $K-1$ centers.
Removing a center can only raise the cost, so
$g_{A_n'}(\mu)-g_{A_n(K)}(\mu)\ge0$ for every $\mu$. We bound the
increase separately for $\mu$ near $\mu_0$ and for $\mu$ far from
$\mu_0$.

The factor $5$ in \eqref{eq:nu-far} makes $\eta_n$ closer than $\nu_n$
to every $\mu$ near $\mu_0$. For every $\mu$ with
$\mathcal{W}_2(\mu,\mu_0)\le2M_1$, the triangle inequality gives
\[
    \mathcal{W}_2(\mu,\nu_n)\;\ge\;\mathcal{W}_2(\nu_n,\mu_0)-\mathcal{W}_2(\mu,\mu_0)\;>\;5M_1-2M_1=3M_1,
\]
\[
    \mathcal{W}_2(\mu,\eta_n)\;\le\;\mathcal{W}_2(\mu,\mu_0)+\mathcal{W}_2(\mu_0,\eta_n)\;\le\;2M_1+M_1=3M_1.
\]
Hence $\nu_n$ is not a nearest center of $\mu$, and removing $\nu_n$
leaves the cost of $\mu$ unchanged:
\begin{equation}\label{eq:inner-zero}
    g_{A_n'}(\mu)\;=\;g_{A_n(K)}(\mu)
    \qquad\text{for every }\mu\text{ with }\mathcal{W}_2(\mu,\mu_0)\le 2M_1.
\end{equation}
For $\mu$ with $\mathcal{W}_2(\mu,\mu_0)>2M_1$, the center
$\eta_n\in A_n'$ gives
\[
    g_{A_n'}(\mu)-g_{A_n(K)}(\mu)
    \;\le\;\mathcal{W}_2^2(\mu,\eta_n)
    \;\le\;\bigl(\mathcal{W}_2(\mu,\mu_0)+M_1\bigr)^2
    \;<\;\tfrac{9}{4}\,\mathcal{W}_2^2(\mu,\mu_0),
\]
where the first inequality uses $g_{A_n(K)}(\mu)\ge0$ and
$\eta_n\in A_n'$, the second uses the triangle inequality, and the third
uses $M_1<\mathcal{W}_2(\mu,\mu_0)/2$. Together with
\eqref{eq:inner-zero},
\begin{equation}\label{eq:increment-bound}
    0\;\le\;g_{A_n'}(\mu)-g_{A_n(K)}(\mu)\;\le\;
    \begin{cases}
        0, & \mathcal{W}_2(\mu,\mu_0)\le 2M_1,\\[2pt]
        \tfrac{9}{4}\,\mathcal{W}_2^2(\mu,\mu_0), & \mathcal{W}_2(\mu,\mu_0)>2M_1.
    \end{cases}
\end{equation}

Integrating \eqref{eq:increment-bound} against $\mathbf{P}_n$ gives an
upper bound on the increase of the empirical cost:
\begin{equation}\label{eq:tail-bound}
    \Phi(A_n',\mathbf{P}_n)-\Phi(A_n(K),\mathbf{P}_n)
    \;\le\;\tfrac{9}{4}\!\int_{\{\mathcal{W}_2(\mu,\mu_0)>2M_1\}}\!\!
        \mathcal{W}_2^2(\mu,\mu_0)\,\mathbf{P}_n(d\mu).
\end{equation}
The integrand on the right is bounded by $D^2$, so the SLLN gives,
almost surely,
\begin{equation}\label{eq:tail-limit}
    \tfrac{9}{4}\!\int_{\{\mathcal{W}_2(\mu,\mu_0)>2M_1\}}\!\!
        \mathcal{W}_2^2(\mu,\mu_0)\,\mathbf{P}_n(d\mu)
    \;\xrightarrow{n\to\infty}\;\tau(M_1)\;<\;\Delta_K/2.
\end{equation}
For a lower bound, $A_n'$ has at most $K-1$ centers and $A_n(K)$ is an
empirical minimizer, so
\[
    \Phi(A_n',\mathbf{P}_n)\;\ge\;m_{K-1}(\mathbf{P}_n),
    \qquad
    \Phi(A_n(K),\mathbf{P}_n)\;=\;m_K(\mathbf{P}_n).
\]
By Lemma~\ref{lem:uslln}, $m_{K-1}(\mathbf{P}_n)\to m_{K-1}(\mathbf{P})$
and $m_K(\mathbf{P}_n)\to m_K(\mathbf{P})$ a.s., so
\begin{equation}\label{eq:lower-gap}
    \liminf_{n\to\infty}\bigl[\Phi(A_n',\mathbf{P}_n)-\Phi(A_n(K),\mathbf{P}_n)\bigr]
    \;\ge\; m_{K-1}(\mathbf{P})-m_K(\mathbf{P})
    \;=\;\Delta_K .
\end{equation}
Along the subsequence, \eqref{eq:tail-bound} and \eqref{eq:tail-limit}
bound the increase $\Phi(A_n',\mathbf{P}_n)-\Phi(A_n(K),\mathbf{P}_n)$
from above by a quantity that tends to $\tau(M_1)<\Delta_K/2$, while
\eqref{eq:lower-gap} bounds the same increase from below by a quantity
whose $\liminf$ is at least $\Delta_K$. The two bounds are incompatible,
so no subsequence satisfies \eqref{eq:nu-far}. Hence, almost surely, for
all large $n$, every center of $A_n(K)$ lies in the closed ball
\[
    \overline{B}:=\{\nu\in\mathcal{P}_2(X):\mathcal{W}_2(\nu,\mu_0)\le5M_1\}.
\]

\paragraph{Step 3: Identification of the limit.}
Step~3 shows that every limit point $\widetilde{A}$ of $A_n(K)$ equals
$\bar A(K)$. Lemma~\ref{lem:uslln} replaces $\mathbf{P}_n$ by
$\mathbf{P}$ in the cost of $A_n(K)$, Lemma~\ref{lem:lip-phi} passes from
$A_n(K)$ to $\widetilde{A}$, and together the two lemmas give
$\Phi(\widetilde{A},\mathbf{P})=m_K(\mathbf{P})$. Uniqueness,
assumption~(ii), then gives $\widetilde{A}=\bar A(K)$.

Fix an outcome in the probability-one event on which
Lemma~\ref{lem:uslln} and Steps~1--2 hold; the rest of the argument is
deterministic. The closed ball $\overline{B}$ is a closed subset of the
compact space $(\mathcal{P}_2(X),\mathcal{W}_2)$, so $\overline{B}$ is
compact, and by the Blaschke selection theorem, as in
Remark~\ref{rem:p2-compact}(d), the set
\[
    \mathcal{A}_K^{\overline{B}}:=\{A\subset\overline{B}:1\le|A|\le K\}
\]
is compact under $H$. By Step~2, $A_n(K)\in\mathcal{A}_K^{\overline{B}}$
for all large $n$. Hence every subsequence of $(A_n(K))$ has a further
subsequence $(A_{n_j}(K))$ with $A_{n_j}(K)\to\widetilde{A}$ in $H$ for
some $\widetilde{A}\in\mathcal{A}_K^{\overline{B}}$.

By Lemma~\ref{lem:lip-phi}, $\Phi(\cdot,\mathbf{P})$ is continuous in
$H$, so
\begin{equation}\label{eq:step3-cont}
    \Phi(A_{n_j}(K),\mathbf{P})\;\longrightarrow\;\Phi(\widetilde{A},\mathbf{P}).
\end{equation}
By Lemma~\ref{lem:uslln},
$\sup_{A\in\mathcal{A}_K}|\Phi(A,\mathbf{P}_{n_j})-\Phi(A,\mathbf{P})|\to0$;
taking $A=A_{n_j}(K)$ gives
\begin{equation}\label{eq:step3-uniform}
    \bigl|\Phi(A_{n_j}(K),\mathbf{P}_{n_j})-\Phi(A_{n_j}(K),\mathbf{P})\bigr|
    \;\longrightarrow\;0.
\end{equation}
Since $\Phi(A_{n_j}(K),\mathbf{P}_{n_j})=m_K(\mathbf{P}_{n_j})$ and
$m_K(\mathbf{P}_{n_j})\to m_K(\mathbf{P})$ by Lemma~\ref{lem:uslln},
\eqref{eq:step3-cont} and \eqref{eq:step3-uniform} give
\[
    \Phi(\widetilde{A},\mathbf{P})
    \;=\;\lim_{j}\Phi(A_{n_j}(K),\mathbf{P})
    \;=\;\lim_{j}\Phi(A_{n_j}(K),\mathbf{P}_{n_j})
    \;=\;\lim_{j} m_K(\mathbf{P}_{n_j})
    \;=\;m_K(\mathbf{P}).
\]
Hence $\widetilde{A}$ minimizes $\Phi(\cdot,\mathbf{P})$ over
$\mathcal{A}_K$, and the uniqueness of the population minimizer,
assumption~(ii), gives $\widetilde{A}=\bar A(K)$.

Every subsequence of $(A_n(K))$ therefore has a further subsequence that
converges to $\bar A(K)$. If $A_n(K)$ did not converge to $\bar A(K)$,
some $\varepsilon>0$ and some subsequence would satisfy
$H(A_{n_j}(K),\bar A(K))\ge\varepsilon$ for all $j$, and no further
subsequence of that subsequence could converge to $\bar A(K)$. Hence
$A_n(K)\to\bar A(K)$ in $H$. The outcome was an arbitrary element of a
probability-one event, so
\[
    A_n(K)\;\longrightarrow\;\bar A(K)
    \qquad\text{a.s.\ in the Hausdorff metric.}
\]
Lemma~\ref{lem:uslln} also gives
$\Phi(A_n(K),\mathbf{P}_n)=m_K(\mathbf{P}_n)\to m_K(\mathbf{P})$ a.s.
The argument uses assumptions~(ii) and~(iii) only at level $K$. Both
assumptions hold at every level $k\le K$, so the same argument with $k$
in place of $K$ proves the theorem for every $k=1,\dots,K$.\hfill$\square$
\subsection{Proof of Theorem~\ref{thm:W1-kmedians} and the uniqueness assumption}
\label{app:kmedians-proof}

\paragraph{On the uniqueness assumption.}
In our proof, we assume the population Wasserstein mean and Wasserstein median are unique. 
The uniqueness of the Wasserstein barycenter has been well-studied. For example, it is unique in the case of finitely many measures with at least one of them being absolutely continuous \citep{agueh2011barycenters}. The uniqueness of Wasserstein median remains largely open. We
are not aware of any general condition that guarantees the uniqueness of a Wasserstein median.
%The convergence guarantees for the reweighted algorithm of \citet{you2025wasserstein} do not help here, since they concern the iterates from a given initialization rather than the minimizer itself. 
Here we verify the uniqueness of Wasserstein median in two simple cases. (i) In a location family, the Wasserstein median reduces to the Euclidean median of the family of centers, which is unique unless the centers lie on a line
\citep{kuhn1973note, 10.1214/aos/1176350511}. 
(ii) For a family of nondegenerate Gaussians with diagonal covariance matrices, the $\mathcal{W}_2$ distance between any two such Gaussian distributions coincides with the Euclidean distance between the two centers plus the Euclidean distance between the two sets of diagonals
\citep{gelbrich1990formula, bhatia2019bures}. The Wasserstein median of such a Gaussian family is unique. In these two cases, the Wasserstein median is unique, and hence the reweighted update is exactly Weiszfeld's iteration \citep{kuhn1973note, beck2015weiszfeld}, whose convergence is guaranteed. 

\paragraph{Proof of Theorem~\ref{thm:W1-kmedians}.}
Write $g^{(1)}_A(\mu) := \min_{\nu \in A} \mathcal{W}_2(\mu, \nu)$ and
$D := \mathrm{diam}(\mathcal{P}_2(X))$. Existence of population and
empirical minimizers follows as for $\Phi$: for fixed $\mathbf{Q}$,
$A \mapsto \Phi_1(A, \mathbf{Q})$ is continuous in the Hausdorff
metric (by the Lipschitz bound below) on the compact set of center
sets of cardinality at most $k$.

\emph{Level separation.} Assumption~(iii) with $\Phi_1$ in place of
$\Phi$ is in fact implied by the uniqueness assumption. Suppose
$m^{(1)}_{k-1}(\mathbf{P}) = m^{(1)}_k(\mathbf{P})$ for some $k \ge 2$.
A $(k-1)$-optimal set $A$ is then also $k$-optimal, and so is
$A \cup \{\nu\}$ for any $\nu \notin A$, since adding a center cannot
increase $\Phi_1$. These are two distinct $k$-optimal sets,
contradicting the uniqueness of the population $k$-medians
configuration $A^{(1)}_k$. Hence
$m^{(1)}_{k-1}(\mathbf{P}) > m^{(1)}_k(\mathbf{P})$.

\emph{Lipschitz bound.} For center sets $A, B$ and every $\mu$,
$|g^{(1)}_A(\mu) - g^{(1)}_B(\mu)| \le H(A, B)$: given $\nu \in A$
attaining $g^{(1)}_A(\mu)$, pick $\nu' \in B$ with
$\mathcal{W}_2(\nu, \nu') \le H(A, B)$, so
$g^{(1)}_B(\mu) \le \mathcal{W}_2(\mu, \nu') \le g^{(1)}_A(\mu) + H(A, B)$,
and symmetrically. Consequently
$|\Phi_1(A, \mathbf{Q}) - \Phi_1(B, \mathbf{Q})| \le H(A, B)$ for every
$\mathbf{Q}$, which replaces Lemma~\ref{lem:lip-phi} with the constant
$1$ in place of $2D + H(A, B)$.

\emph{Uniform law of large numbers.} The proof of
Lemma~\ref{lem:uslln} used only the Lipschitz bound, the compactness of
the family of center sets, and the bounded envelope $D^2$ of $g_A$;
with the bound above and the envelope $D$ of $g^{(1)}_A$, the same
finite-net argument gives
$\sup_A |\Phi_1(A, \mathbf{P}_n) - \Phi_1(A, \mathbf{P})| \to 0$ almost
surely and $m^{(1)}_k(\mathbf{P}_n) \to m^{(1)}_k(\mathbf{P})$ almost
surely.

\emph{Steps 1--3.} Step~1 of the proof of Theorem~\ref{thm:W2-kmeans}
goes through with the lower bound $(M - r)^2 p_r$ replaced by
$(M - r)\, p_r$ and the singleton bound $c_0$ replaced by
$\int \mathcal{W}_2(\mu, \mu_0)\, \mathbf{P}(d\mu) \le D$. Step~2 uses
only the triangle inequality, the uniform law of large numbers, and
the level separation established above. Step~3 uses Hausdorff
compactness, the Lipschitz bound, the uniform law of large numbers,
and the uniqueness of $A^{(1)}_k$, exactly as before. The consequences
for Theorem~\ref{thm:rwsd-abstract} and
Corollary~\ref{cor:Kstar-consistency} follow because those results
only require almost-sure Hausdorff consistency of the centers used by
the pipeline. \qed

We now verify the uniqueness of single-cluster medians in the two
families mentioned above. For $\mu \in \mathcal{P}_2(\mathbb{R}^d)$,
write $m_\mu$ for its mean and $\bar\mu$ for its centered version, the
law of $Z - m_\mu$ under $Z \sim \mu$.

\begin{lemma}[Mean--dispersion decomposition]\label{lem:mean-disp}
For $\mu, \nu \in \mathcal{P}_2(\mathbb{R}^d)$,
$\mathcal{W}_2^2(\mu, \nu) = \|m_\mu - m_\nu\|^2 + \mathcal{W}_2^2(\bar\mu, \bar\nu)$.
\end{lemma}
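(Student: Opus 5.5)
The plan is to realize both sides through a common coupling of random vectors and then expand the squared $L^2$ cost. Let $Z\sim\mu$ and $Y\sim\nu$ be an optimal coupling for $\mathcal{W}_2(\mu,\nu)$. Then $Z-m_\mu\sim\bar\mu$ and $Y-m_\nu\sim\bar\nu$. Writing $Z-Y=(m_\mu-m_\nu)+\bigl((Z-m_\mu)-(Y-m_\nu)\bigr)$ and expanding $\mathbb{E}\|Z-Y\|^2$, the cross term is
\[
2\bigl\langle m_\mu-m_\nu,\ \mathbb{E}[(Z-m_\mu)-(Y-m_\nu)]\bigr\rangle ,
\]
and it vanishes because both centered vectors have mean zero. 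Hence
\[
\mathcal{W}_2^2(\mu,\nu)=\|m_\mu-m_\nu\|^2+\mathbb{E}\|(Z-m_\mu)-(Y-m_\nu)\|^2 \;\ge\; \|m_\mu-m_\nu\|^2+\mathcal{W}_2^2(\bar\mu,\bar\nu),
\]
since $(Z-m_\mu,\,Y-m_\nu)$ is a coupling of $\bar\mu$ and $\bar\nu$.

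For the reverse inequality, let $(U,V)$ be an optimal coupling of $\bar\mu$ and $\bar\nu$. Then $(U+m_\mu,\,V+m_\nu)$ is a coupling of $\mu$ and $\nu$. The same expansion, again with a vanishing cross term because $\mathbb{E}U=\mathbb{E}V=0$, gives
\[
\mathcal{W}_2^2(\mu,\nu)\le \|m_\mu-m_\nu\|^2+\mathbb{E}\|U-V\|^2=\|m_\mu-m_\nu\|^2+\mathcal{W}_2^2(\bar\mu,\bar\nu).
\]
Combining the two inequalities gives the identity.

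Optimal couplings exist for measures in $\mathcal{P}_2(\mathbb{R}^d)$, and all second moments are finite, so every expectation above is well defined. I do not expect a real obstacle. The one point to check is that the cross term vanishes for an arbitrary coupling, and this follows from linearity of expectation alone, with no independence needed.
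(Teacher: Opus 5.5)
Your proof is correct and is essentially the paper's argument: expand the cost along a coupling, remove the cross term by linearity of expectation, and use that translation matches couplings of $(\mu,\nu)$ with couplings of $(\bar\mu,\bar\nu)$. The only difference is cosmetic. The paper takes the infimum over all couplings at once through that bijection, which avoids invoking existence of optimal couplings, whereas you prove two inequalities using optimal couplings.
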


\begin{proof}
For any coupling $(Z, Z')$ of $\mu$ and $\nu$,
$\mathbb{E}\|Z - Z'\|^2 = \|m_\mu - m_\nu\|^2 + \mathbb{E}\|(Z - m_\mu) - (Z' - m_\nu)\|^2$,
because the cross term $2\langle m_\mu - m_\nu, \mathbb{E}[(Z - m_\mu) - (Z' - m_\nu)]\rangle$
vanishes. Translation maps couplings of $(\mu, \nu)$ bijectively onto
couplings of $(\bar\mu, \bar\nu)$, so taking the infimum over
couplings on both sides gives the identity.
\end{proof}

\begin{proposition}[Location families]\label{prop:location-median}
Let $\rho \in \mathcal{P}_2(\mathbb{R}^d)$ be centered and let a
cluster have law $\mathbf{Q}$ on its members, supported on the
location family $\{\rho_c := (x \mapsto x + c)_{\#}\rho : c \in \mathbb{R}^d\}$;
write $Q$ for the law of the location $c$. Then every minimizer of
$F(\nu) := \int \mathcal{W}_2(\nu, \gamma)\, \mathbf{Q}(d\gamma)$ over
$\mathcal{P}_2(\mathbb{R}^d)$ is of the form $\rho_{u}$ with $u$ a
geometric median of $Q$, that is, a minimizer of
$u \mapsto \int \|u - c\|\, Q(dc)$. In particular the minimizer is
unique whenever $Q$ is not supported on a line.
\end{proposition}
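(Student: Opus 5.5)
The plan is to reduce the median problem over $\mathcal{P}_2(\mathbb{R}^d)$ to a Euclidean geometric-median problem, using Lemma~\ref{lem:mean-disp} to split off the dispersion part. For any candidate $\nu$ with mean $m_\nu$ and centered version $\bar\nu$, and any member $\rho_c$ of the family, note that $\rho_c$ has mean $c$ (because $\rho$ is centered) and centered version $\rho$. Lemma~\ref{lem:mean-disp} then gives
\[
\mathcal{W}_2(\nu,\rho_c)=\bigl(\|m_\nu-c\|^2+\mathcal{W}_2^2(\bar\nu,\rho)\bigr)^{1/2}.
\]
Writing $s:=\mathcal{W}_2(\bar\nu,\rho)\ge 0$, the objective becomes $F(\nu)=\int\sqrt{\|m_\nu-c\|^2+s^2}\,Q(dc)$. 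This depends on $\nu$ only through the pair $(m_\nu,s)$.

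Next I compare $F(\nu)$ with the Euclidean objective $G(u):=\int\|u-c\|\,Q(dc)$.
\begin{itemize}
\item Since $\sqrt{a^2+s^2}\ge a$, we have $F(\nu)\ge G(m_\nu)\ge \min G$.
\item The candidate $\rho_u$, with $u$ a geometric median of $Q$, has $s=0$ and attains $F(\rho_u)=G(u)=\min G$.
\end{itemize}
So the infimum of $F$ equals $\min G$ and is attained at $\rho_u$. The existence of a minimizer of $G$ is standard: $G$ is convex, continuous and coercive, and it is finite because $Q$ has a finite first moment, which follows from finite second moments.

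Now take any minimizer $\nu$ of $F$. Equality must hold throughout the chain above, which gives two facts.
\begin{itemize}
\item $G(m_\nu)=\min G$, so $m_\nu$ is a geometric median of $Q$.
\item $\int\bigl(\sqrt{\|m_\nu-c\|^2+s^2}-\|m_\nu-c\|\bigr)\,Q(dc)=0$. The integrand is nonnegative, and it is strictly positive wherever $s>0$. Hence $s=0$, i.e.\ $\bar\nu=\rho$.
\end{itemize}
Together these say $\nu=\rho_{m_\nu}$, which is the claimed form.

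For uniqueness, it remains to show that the geometric median of $Q$ is unique when $Q$ is not supported on a line. For $u\neq u'$, the function $t\mapsto\|u+t(u'-u)-c\|$ is strictly convex unless $c$ lies on the line through $u$ and $u'$. If $Q$ is not concentrated on that line, a set of positive $Q$-mass gives strict convexity of $G$ along the segment. Two distinct minimizers would then produce a strictly smaller value at their midpoint, a contradiction. This is the classical result cited via \citep{kuhn1973note}.

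The main subtlety I expect is in the equality step. There I must argue that a positive $s$ strictly increases $F$ even when $Q$ charges the point $m_\nu$. That case is fine, since at $c=m_\nu$ the integrand difference is exactly $s>0$. I also need to make sure integrability holds so that the difference of integrals is well defined, which is guaranteed by the finite first moment of $Q$.
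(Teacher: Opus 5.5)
Your proof is correct and follows the same route as the paper. Lemma~\ref{lem:mean-disp} rewrites $F(\nu)$ as $\int\sqrt{\|m_\nu-c\|^2+s^2}\,Q(dc)$, which dominates the Euclidean objective at $m_\nu$, with equality only when $s=0$, i.e.\ $\nu=\rho_{m_\nu}$. The differences are minor. You prove uniqueness of the geometric median directly, via strict convexity of the norm along segments when $Q$ is not carried by a line, whereas the paper simply cites the literature. You also make explicit the finite-first-moment condition on $Q$ that both arguments need for $F$ to be finite. That condition is not among the stated hypotheses (without it $F\equiv\infty$ and the claim fails), so treat it as a standing assumption rather than deriving it from ``finite second moments''.
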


\begin{proof}
For $\nu \in \mathcal{P}_2(\mathbb{R}^d)$ with mean $u$ set
$\delta := \mathcal{W}_2(\bar\nu, \rho)$. By Lemma~\ref{lem:mean-disp},
$\mathcal{W}_2(\nu, \rho_c) = \sqrt{\|u - c\|^2 + \delta^2}$, so
$F(\nu) = \int \sqrt{\|u - c\|^2 + \delta^2}\, Q(dc) \ge \int \|u - c\|\, Q(dc) = F(\rho_u)$,
with equality if and only if $\delta = 0$, that is, $\nu = \rho_u$.
Hence every minimizer of $F$ lies in the family and its location
minimizes $u \mapsto \int \|u - c\|\, Q(dc)$. The geometric median of
a law on $\mathbb{R}^d$ is unique unless the law is supported on a
line \citep{10.1214/aos/1176350511}; for a finitely supported $Q$
this is the classical statement for Fermat's problem
\citep{kuhn1973note}.
\end{proof}

For a cluster of Gaussians with commuting covariances we may rotate
to a common eigenbasis, so it suffices to treat the family
$\mathcal{G} := \{N(m, \mathrm{diag}(s_1^2, \ldots, s_d^2)) : m \in \mathbb{R}^d,\ s \in [0, \infty)^d\}$.
For $\gamma = N(m, \mathrm{diag}(s^2))$ and
$\gamma' = N(m', \mathrm{diag}(s'^2))$ in $\mathcal{G}$,
$\mathcal{W}_2^2(\gamma, \gamma') = \|m - m'\|^2 + \|s - s'\|^2$
\citep{gelbrich1990formula, bhatia2019bures}, so
$\theta(\gamma) := (m, s) \in \mathbb{R}^{2d}$ is an isometry of
$\mathcal{G}$ onto the convex set $\mathbb{R}^d \times [0, \infty)^d$.
For $\nu \in \mathcal{P}_2(\mathbb{R}^d)$ with marginals $\nu_j$ of mean
$m_j$ and variance $v_j$, define the retraction
$G\nu := N(m, \mathrm{diag}(v_1, \ldots, v_d)) \in \mathcal{G}$.

\begin{lemma}[Gaussian retraction]\label{lem:retraction}
For every $\nu \in \mathcal{P}_2(\mathbb{R}^d)$ and every
$\gamma \in \mathcal{G}$, $\mathcal{W}_2(G\nu, \gamma) \le \mathcal{W}_2(\nu, \gamma)$.
If moreover $\gamma$ is nondegenerate, that is, all its standard
deviations $s_j$ are positive, then equality holds if and only if
$\nu = G\nu$.
\end{lemma}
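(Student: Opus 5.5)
The plan is to compare both sides through one-dimensional marginals. The right-hand side will be bounded below by a sum of coordinatewise costs, and the left-hand side computed exactly. Fix $\nu\in\mathcal{P}_2(\mathbb{R}^d)$ with marginal means $m=(m_1,\ldots,m_d)$ and variances $v_j$, and let $\gamma=N(m',\mathrm{diag}(s^2))\in\mathcal{G}$.

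By the Gaussian formula already quoted for $\mathcal{G}$ \citep{gelbrich1990formula, bhatia2019bures}, since $G\nu\in\mathcal{G}$ has parameters $(m,\sqrt v)$,
\[
\mathcal{W}_2^2(G\nu,\gamma)=\|m-m'\|^2+\textstyle\sum_j(\sqrt{v_j}-s_j)^2 .
\]
So it suffices to show $\mathcal{W}_2^2(\nu,\gamma)\ge \sum_j\big[(m_j-m'_j)^2+(\sqrt{v_j}-s_j)^2\big]$.

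For the lower bound, take an optimal coupling $(X,Y)$ of $(\nu,\gamma)$; it exists by standard OT theory. Then write
\[
\mathbb{E}\|X-Y\|^2=\sum_j\mathbb{E}(X_j-Y_j)^2 .
\]
For each $j$, the cross term vanishes as in Lemma~\ref{lem:mean-disp}, giving
\[
\mathbb{E}(X_j-Y_j)^2=(m_j-m'_j)^2+v_j+s_j^2-2\,\mathrm{Cov}(X_j,Y_j).
\]
Cauchy--Schwarz gives $\mathrm{Cov}(X_j,Y_j)\le\sqrt{v_j}\,s_j$. Hence each term is at least $(m_j-m'_j)^2+(\sqrt{v_j}-s_j)^2$, and summing over $j$ proves the inequality.

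For the equality case, assume all $s_j>0$ and equality holds. The joint cost is exactly the sum of the coordinate costs, with no slack from projecting. Therefore, for the optimal coupling, every Cauchy--Schwarz step must be tight: $\mathrm{Cov}(X_j,Y_j)=\sqrt{v_j}\,s_j$ for each $j$. Since $s_j>0$, this forces $X_j-m_j=(\sqrt{v_j}/s_j)(Y_j-m'_j)$ almost surely. If $v_j=0$ this still holds, because then $X_j=m_j$ a.s. Hence $X=AY+b$ almost surely, where $A=\mathrm{diag}(\sqrt{v_j}/s_j)$ and $b=m-Am'$. It follows that $\nu$ is the law of an affine image of $Y\sim\gamma$, i.e.\ $\nu=N(m,\mathrm{diag}(v))=G\nu$. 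Conversely, if $\nu=G\nu$, equality is trivial.

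The main obstacle is the equality case. One must argue on a single optimal joint coupling, so that tightness in the summed bound forces tightness in every coordinate simultaneously. One also needs nondegeneracy $s_j>0$ to solve the Cauchy--Schwarz equality for $X_j$ as a function of $Y_j$. When some $s_j=0$, $Y_j$ is constant, no such identification is possible, and this explains the nondegeneracy hypothesis.
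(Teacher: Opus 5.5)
Your proof is correct, and it takes a more elementary route than the paper's.

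Both arguments decompose $\mathbb{E}\|X-Y\|^2$ coordinatewise under an optimal coupling, and both use the Gelbrich formula on $\mathcal{G}$ for $\mathcal{W}_2(G\nu,\gamma)$. The paper, however, obtains the per-coordinate bound in Wasserstein terms. It passes to the marginal cost $\mathcal{W}_2^2(\nu_j,\gamma_j)$, separates the means with Lemma~\ref{lem:mean-disp}, and applies the reverse triangle inequality through $\delta_0$ to get $\mathcal{W}_2(\bar\nu_j,\bar\gamma_j)\ge|\sqrt{v_j}-s_j|$.

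Its equality case then needs two further ingredients. The first is the isometry of one-dimensional Wasserstein space with quantile functions in $L^2(0,1)$: equality in the reverse triangle inequality makes the quantile function of $\bar\nu_j$ a nonnegative multiple of that of $\bar\gamma_j$, so each marginal is Gaussian. The second is uniqueness of the monotone optimal coupling of $(\nu_j,\gamma_j)$ when $\gamma_j$ is absolutely continuous. Combined with independence of the coordinates of $Y$, this shows $\nu$ is a product measure.

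You replace all of this with a single second-moment computation. The bound $\mathrm{Cov}(X_j,Y_j)\le\sqrt{v_j}\,s_j$ gives the same per-coordinate inequality directly, and for any coupling, not only an optimal one. Equality in Cauchy--Schwarz then produces the explicit identity $X=AY+b$, which yields Gaussian marginals and the product structure in one step.

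Your version is shorter and self-contained, and it makes the role of $s_j>0$ transparent. The paper's version stays within Wasserstein-geometric language, at the cost of invoking the quantile embedding and the uniqueness theory of one-dimensional optimal transport.
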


\begin{proof}
Let $\gamma = N(\mu, \mathrm{diag}(s^2))$, with one-dimensional
marginals $\gamma_j = N(\mu_j, s_j^2)$. For any coupling $(Z, Z')$ of
$\nu$ and $\gamma$, $\mathbb{E}\|Z - Z'\|^2 = \sum_j \mathbb{E}(Z_j - Z'_j)^2 \ge \sum_j \mathcal{W}_2^2(\nu_j, \gamma_j)$,
since $(Z_j, Z'_j)$ is a coupling of $(\nu_j, \gamma_j)$. For each
$j$, Lemma~\ref{lem:mean-disp} and the triangle inequality with the
point mass at zero give
$\mathcal{W}_2^2(\nu_j, \gamma_j) = (m_j - \mu_j)^2 + \mathcal{W}_2^2(\bar\nu_j, \bar\gamma_j) \ge (m_j - \mu_j)^2 + (\sqrt{v_j} - s_j)^2$,
because $\mathcal{W}_2(\bar\nu_j, \delta_0) = \sqrt{v_j}$ and
$\mathcal{W}_2(\bar\gamma_j, \delta_0) = s_j$. Summing over $j$ and
using the formula for $\mathcal{W}_2$ on $\mathcal{G}$,
$\mathcal{W}_2^2(\nu, \gamma) \ge \sum_j [(m_j - \mu_j)^2 + (\sqrt{v_j} - s_j)^2] = \mathcal{W}_2^2(G\nu, \gamma)$.

Now let all $s_j$ be positive and suppose equality holds; taking an
optimal coupling of $(\nu, \gamma)$ in the display above, every
inequality in the chain is an equality, and since each of the two
chains is a sum of termwise inequalities, equality holds term by
term. Equality in the second chain reads
$\mathcal{W}_2(\bar\nu_j, \bar\gamma_j) = |\mathcal{W}_2(\bar\nu_j, \delta_0) - \mathcal{W}_2(\bar\gamma_j, \delta_0)|$.
The one-dimensional Wasserstein space is isometric, through quantile
functions, to a convex subset of $L^2(0, 1)$, so its geodesics are
unique straight segments, and equality in the reverse triangle
inequality with $\bar\gamma_j \ne \delta_0$ forces the quantile
function of $\bar\nu_j$ to be a nonnegative multiple of that of
$\bar\gamma_j$; hence $\bar\nu_j$ is a centered Gaussian, possibly
degenerate, and $\nu_j$ is Gaussian for every $j$. Equality in the
first chain means that the optimal coupling of $(\nu, \gamma)$ is
simultaneously optimal for every pair of marginals; since $\gamma_j$
is absolutely continuous, the optimal coupling of $(\nu_j, \gamma_j)$
is unique and given by the monotone map $Z_j = T_j(Z'_j)$, and since
the coordinates of $Z' \sim \gamma$ are independent, so are the
coordinates of $Z$. Thus $\nu$ is the product of its Gaussian
marginals, that is, $\nu = G\nu$. The converse is trivial.
\end{proof}

\begin{proposition}[Commuting Gaussian families]\label{prop:gauss-median}
Let a cluster have law $\mathbf{Q}$ supported on the nondegenerate
members of $\mathcal{G}$, and write $Q := \theta_{\#}\mathbf{Q}$ for the
law of its parameters in $\mathbb{R}^d \times (0, \infty)^d$. Then every minimizer of
$F(\nu) := \int \mathcal{W}_2(\nu, \gamma)\, \mathbf{Q}(d\gamma)$ over
$\mathcal{P}_2(\mathbb{R}^d)$ lies in $\mathcal{G}$, and its parameter
is a geometric median of $Q$; the minimizer is unique whenever $Q$ is
not supported on a line.
\end{proposition}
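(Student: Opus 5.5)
The plan mirrors the proof of Proposition~\ref{prop:location-median}, with Lemma~\ref{lem:retraction} playing the role that Lemma~\ref{lem:mean-disp} played there. The argument has three steps.

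\emph{Step 1: reduce to $\mathcal{G}$.} Fix any $\nu\in\mathcal{P}_2(\mathbb{R}^d)$ with $F(\nu)<\infty$ and form its retraction $G\nu\in\mathcal{G}$. By Lemma~\ref{lem:retraction}, $\mathcal{W}_2(G\nu,\gamma)\le\mathcal{W}_2(\nu,\gamma)$ for every $\gamma$ in the support of $\mathbf{Q}$. Integrating against $\mathbf{Q}$ gives $F(G\nu)\le F(\nu)$.

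Now suppose $\nu$ is a minimizer but $\nu\neq G\nu$. The equality part of Lemma~\ref{lem:retraction} applies because $\mathbf{Q}$ charges only nondegenerate Gaussians. So for $\mathbf{Q}$-a.e.\ $\gamma$ the inequality is strict, $\mathcal{W}_2(G\nu,\gamma)<\mathcal{W}_2(\nu,\gamma)$. Integrating a pointwise strict inequality between integrable functions over a probability measure gives a strict inequality of integrals, so $F(G\nu)<F(\nu)$. This contradicts minimality, hence every minimizer equals its retraction and lies in $\mathcal{G}$.

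\emph{Step 2: transfer to $\mathbb{R}^{2d}$.} On $\mathcal{G}$, the map $\theta$ is an isometry onto the convex set $\mathbb{R}^d\times[0,\infty)^d$. Therefore, for $\nu\in\mathcal{G}$,
\[
F(\nu)=\int\|\theta(\nu)-q\|\,Q(dq),
\]
and minimizing $F$ over $\mathcal{G}$ amounts to minimizing the Fermat--Weber function $u\mapsto\int\|u-q\|\,Q(dq)$ over $\mathbb{R}^d\times[0,\infty)^d$.

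We must rule out a constrained minimizer on the boundary that is not the unconstrained geometric median. Since $Q$ is supported in $\mathbb{R}^d\times(0,\infty)^d$, which is contained in the closed convex set $C=\mathbb{R}^d\times[0,\infty)^d$, the Euclidean projection $\Pi_C$ onto $C$ is $1$-Lipschitz and fixes the support of $Q$. Hence $\|\Pi_C u-q\|\le\|u-q\|$ for all $q$, so the unconstrained infimum is attained inside $C$. The constrained minimizers are therefore exactly the geometric medians of $Q$, all of which lie in $C$. This identifies the parameter of every minimizer as a geometric median of $Q$.

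\emph{Step 3: uniqueness.} If $Q$ is not supported on a line, the geometric median of $Q$ on $\mathbb{R}^{2d}$ is unique by \citet{10.1214/aos/1176350511} (Kuhn for finite support). Since $\theta$ is injective, the minimizer $\nu$ is unique.

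The main obstacle is the strictness in Step 1. The equality clause in Lemma~\ref{lem:retraction} is stated pointwise in $\gamma$, so we need strict inequality on a set of positive $\mathbf{Q}$-measure, and $\mathbf{Q}$-a.e.\ strictness supplies this. We also need finiteness of $F$, which the finite second moments ensure. If the paper's lemma only delivered equality for a particular $\gamma$, the argument would still go through, since any single nondegenerate $\gamma$ with equality already forces $\nu=G\nu$.
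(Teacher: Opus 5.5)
Your proof is correct and has the same skeleton as the paper's. The equality clause of Lemma~\ref{lem:retraction} forces every minimizer into $\mathcal{G}$. The isometry $\theta$ then turns $F$ on $\mathcal{G}$ into the Fermat--Weber function of $Q$ restricted to $C=\mathbb{R}^d\times[0,\infty)^d$. Finally, uniqueness of the geometric median off a line gives uniqueness of the minimizer.

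The one genuine difference is how you dispose of the constraint $s\ge0$. The paper evaluates the $s_j$-partial derivative of $q\mapsto\int\|q-q'\|\,Q(dq')$ at a point with $s_j=0$. It finds this equals $-\int s'_j/\|q-q'\|\,Q(dq')<0$ and concludes that every geometric median has all $s_j>0$. You instead project onto $C$, which never increases the objective because $\Pi_C$ is $1$-Lipschitz and fixes $\mathrm{supp}\,Q$. Your route needs no differentiability and works for any closed convex parameter set containing the support. The paper's route gives the extra information that the median Gaussian is nondegenerate.

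Two small corrections.
\begin{itemize}
\item The $1$-Lipschitz bound only gives $\|\Pi_Cu-q\|\le\|u-q\|$. Your claim that all geometric medians lie in $C$ therefore needs the sharper inequality $\|u-q\|^2\ge\|u-\Pi_Cu\|^2+\|\Pi_Cu-q\|^2$ for $q\in C$. The proposition does not actually need that claim: it only needs constrained minimizers to be medians, and equality of the two minimal values already gives this.
\item Finiteness of $F$ does not follow from the individual $\gamma$ having finite second moments. Since $\mathcal{W}_2(\delta_0,\gamma)=\|\theta(\gamma)\|$, the triangle inequality shows $F$ is finite if and only if $\int\|q\|\,Q(dq)<\infty$. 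This is an implicit hypothesis, which the paper also leaves unstated. Without it, $F\equiv\infty$ and every $\nu$ is a minimizer.
\end{itemize}
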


\begin{proof}
By Lemma~\ref{lem:retraction}, $F(G\nu) \le F(\nu)$ for every $\nu$,
with equality only if $\nu = G\nu$; hence every minimizer lies in
$\mathcal{G}$. On $\mathcal{G}$, $F(\gamma) = \int \|\theta(\gamma) - \theta(\gamma')\|\, \mathbf{Q}(d\gamma') = \int \|\theta(\gamma) - q\|\, Q(dq)$,
so the minimizers are the elements of $\mathcal{G}$ whose parameters
are geometric medians of $Q$. A geometric median of $Q$ has all
coordinates $s_j$ positive: at a parameter with $s_j = 0$ the
$j$-th component of the gradient of $q \mapsto \int \|q - q'\|\, Q(dq')$
is $-\int (q'_j / \|q - q'\|)\, Q(dq') < 0$, so the objective
strictly decreases in the direction of increasing $s_j$. It is
unique unless $Q$ is supported on a line
\citep{10.1214/aos/1176350511}.
\end{proof}

\paragraph{From single clusters to configurations.}
In either family, let $\mathbf{P}$ be supported on the family (on its
nondegenerate members in the Gaussian case) and let $Q$ be the law of
the parameters. For a center set $A$ with
parameters $\theta(A)$ after retraction, the two propositions give
$\Phi_1(A, \mathbf{P}) \ge \Phi_1(GA, \mathbf{P}) = \int \min_{a \in \theta(GA)} \|q - a\|\, Q(dq)$,
so $m^{(1)}_k(\mathbf{P})$ equals the Euclidean $K$-medians value of
$Q$ at level $k$. Under level separation every center of a minimizer
attracts positive $\mathbf{P}$-mass, and a center outside the family
could then be replaced by its retraction, which is strictly closer to
every member it attracts; hence every minimizer lies in the family.
The uniqueness assumption of Theorem~\ref{thm:W1-kmedians} for
$\mathbf{P}$ is therefore the uniqueness of the Euclidean $K$-medians
configuration of $Q$, the identification assumption of
\citet{pollard1981strong}, whose consistency theory covers the
criterion $\phi(t) = t$. In the location family the iteratively
reweighted update of Algorithm~\ref{alg:Wmed-IRLS} reduces to
Weiszfeld's iteration on the locations, which converges to the
geometric median from any initialization outside the data
\citep{kuhn1973note, beck2015weiszfeld}.

\subsection{Proof of Theorem~\ref{thm:rwsd-abstract}}
\label{app:rwsd-abstract-proof}

We prove the consistency of empirical RWSD under any Wasserstein $K$-center clustering
method that produces strongly consistent cluster centers. The argument combines three
ingredients: (i) the assumed Hausdorff convergence of empirical cluster centers;
(ii) consistency of the empirical WSD plug-in estimator under two-stage sampling
\citep[Theorem~6.4]{bachoc2024wasserstein}; and (iii) the continuous mapping theorem.

\paragraph{Step 1: Stability of cluster assignments.}
Let $\{A_k^\star\}_{k=1}^K$ denote the unique population centers and
$\{\hat{A}_{n,k}\}_{k=1}^K$ their empirical counterparts, with
$\hat{A}_{n,k} \to A_k^\star$ almost surely in the Hausdorff metric induced by $\mathcal{W}_2$
by hypothesis. For any $\mu \in \mathcal{P}_2(\mathbb{R}^d)$ that has a unique
nearest population center $A_{k(\mu)}^\star$, the continuity of
$A \mapsto \mathcal{W}_2(\mu, A)$ on the space of finite center sets under Hausdorff distance
implies that, almost surely for $n$ large enough, $\mu$ is assigned to the empirical
cluster $\hat{\mathcal{C}}_{k(\mu)}^{(n)}$ corresponding to $\hat{A}_{n, k(\mu)}$.
By the assumption of Theorem~\ref{thm:rwsd-abstract} that
$\mathbf{P}$-almost every $\mu$ has a unique nearest population center,
this holds for $\mathbf{P}$-almost every $\mu$.

\paragraph{Step 2: Convergence of empirical cluster distributions.}
Let $\hat{\mathbf{P}}_{n,m}^{(k)} := \frac{1}{|\hat{\mathcal{C}}_k^{(n)}|}
\sum_{\hat\mu_{i,m} \in \hat{\mathcal{C}}_k^{(n)}} \delta_{\hat\mu_{i,m}}$
be the empirical distribution over distributions in cluster $k$, and
$\mathbf{P}^{(k)}$ its population counterpart restricted to $\mathcal{C}_k$.
By Step 1 and \citet[Lemma~6.3]{bachoc2024wasserstein},
$\hat{\mathbf{P}}_{n,m}^{(k)} \to \mathbf{P}^{(k)}$ in $d_{\mathrm{BL}(2)}$ in probability
as $n, m \to \infty$, for each $k = 1, \ldots, K$.

\paragraph{Step 3: Pointwise consistency of empirical WSD on each cluster.}
Fix $q$ and let $k_q := k(\mu_q)$ denote the cluster index of $\mu_q$.
By \citet[Theorem~6.4]{bachoc2024wasserstein} applied to the cluster-restricted
distribution-of-distributions $\mathbf{P}^{(k)}$, for each $k = 1, \ldots, K$,
\[
\widehat{\mathrm{WSD}}\!\left(\hat\mu_{q,m};\, \hat{\mathbf{P}}_{n,m}^{(k)}\right)
\;\xrightarrow{P}\; \mathrm{WSD}\!\left(\mu_q;\, \mathbf{P}^{(k)}\right)
\quad \text{as } n, m \to \infty,
\]
provided $\mathbf{P}^{(k)}$ is atomless, which is inherited from the atomlessness
of $\mathbf{P}$.

\paragraph{Step 4: Continuous mapping.}
By Steps 1--3 and the continuous mapping theorem applied to the function
$(a, b_1, \ldots, b_{K-1}) \mapsto a - \max_j b_j$,
\[
\widehat{\mathrm{RWSD}}_K(\hat\mu_{q,m})
= \widehat{\mathrm{WSD}}\!\left(\hat\mu_{q,m};\, \hat{\mathbf{P}}_{n,m}^{(k_q)}\right)
- \max_{j \neq k_q} \widehat{\mathrm{WSD}}\!\left(\hat\mu_{q,m};\, \hat{\mathbf{P}}_{n,m}^{(j)}\right)
\xrightarrow{P} \mathrm{RWSD}_K(\mu_q),
\]
which completes the proof. \qed

\subsection{Proof of Corollary~\ref{cor:Kstar-consistency}}
\label{app:Kstar-proof}

We first record the per-object consistency on which the argument
builds. Fix $K \in \mathcal{K}$. By Theorem~\ref{thm:W2-kmeans}, the
empirical $K$-means centers $\widehat{A}_{n,K}$ converge almost surely in
the Hausdorff metric to $A^\star_K$, which is unique up to permutation
by assumption~(ii). Since $\mathbf{P}$ is atomless and
$\mathbf{P}$-almost every $\mu$ has a unique nearest center in
$A^\star_K$, the hypotheses of Theorem~\ref{thm:rwsd-abstract} are
satisfied with $\hat{A}_{n,k}$ taken to be the empirical $K$-means
centers, and therefore
\begin{equation}\label{eq:perobject-K}
\mathrm{RWSD}_K(\hat\mu_{i,m}) \;\xrightarrow{P}\; \mathrm{RWSD}_K(\mu_i)
\qquad (n, m \to \infty)
\end{equation}
for $\mathbf{P}$-almost every $\mu_i$. Because $\mathcal{K}$ is finite,
\eqref{eq:perobject-K} holds simultaneously for all $K \in \mathcal{K}$.
The same argument applies to the $K$-medians pipeline, with
Theorem~\ref{thm:W1-kmedians} in place of Theorem~\ref{thm:W2-kmeans}.

The remainder of the proof has two steps. Step~1 upgrades the
per-object consistency~\eqref{eq:perobject-K} to the averaged criterion
$\widehat{R}_{n,m}(K)$, uniformly over the finite candidate set;
Step~2 is a deterministic margin argument identifying the maximizer
$K^\star$.
\paragraph{Step 1: Uniform consistency of the criterion.}
We claim that
\begin{equation}\label{eq:uniform-criterion}
    \max_{K\in\mathcal{K}}
    \bigl|\widehat{R}_{n,m}(K)-R(K)\bigr|
    \;\xrightarrow{P}\;0
    \qquad\text{as } n,m\to\infty .
\end{equation}
Since $\mathcal{K}$ is finite, it suffices to fix $K\in\mathcal{K}$ and
decompose
\[
    \widehat{R}_{n,m}(K)-R(K)
    =\underbrace{\frac1n\sum_{i=1}^n
    \bigl[\mathrm{RWSD}_K(\hat\mu_{i,m})-\mathrm{RWSD}_K(\mu_i)\bigr]}
    _{(\mathrm{I})}
    \;+\;\underbrace{\frac1n\sum_{i=1}^n\mathrm{RWSD}_K(\mu_i)-R(K)}
    _{(\mathrm{II})},
\]
where $\mathrm{RWSD}_K(\mu_i)$ denotes the population quantity, defined
through the population partition induced by $A^\star_K$.

\emph{Term $(\mathrm{II})$.}
The variables $\mathrm{RWSD}_K(\mu_1),\ldots,\mathrm{RWSD}_K(\mu_n)$ are
i.i.d.\ and bounded, $|\mathrm{RWSD}_K|\le1$, since
$\mathrm{WSD}\in[0,1]$; the strong law of large numbers gives
$(\mathrm{II})\to0$ almost surely.

\emph{Term $(\mathrm{I})$.}
Set
$Z_i:=\bigl|\mathrm{RWSD}_K(\hat\mu_{i,m})-\mathrm{RWSD}_K(\mu_i)\bigr|
\le2$.
The data triples $(\mu_i,X_{i,1},\ldots,X_{i,m})$, $i=1,\ldots,n$, are
i.i.d.\ across $i$, and the clustering-and-assignment pipeline is a
permutation-equivariant function of the sample (the $K$-centers step
returns a global minimizer, with ties broken by a fixed measurable rule
that does not depend on the ordering of the observations); hence the
vector $(Z_1,\ldots,Z_n)$ is exchangeable, and
\[
    \mathbb{E}\bigl|(\mathrm{I})\bigr|
    \;\le\;\mathbb{E}\Bigl[\frac1n\sum_{i=1}^nZ_i\Bigr]
    \;=\;\mathbb{E}[Z_1].
\]
We emphasize that only equality of the expectations
$\mathbb{E}[Z_i]=\mathbb{E}[Z_1]$ is used here; no independence between
$Z_i$ and the data-dependent partition is required.
By \eqref{eq:perobject-K}, $Z_1\to0$ in probability
conditionally on $\mathbf{P}$-almost every realization of $\mu_1$;
integrating over $\mu_1$ by dominated convergence gives the
unconditional convergence
$\mathbb{P}(Z_1>\varepsilon)\to0$ for every $\varepsilon>0$.
Since $0\le Z_1\le2$,
\[
    \mathbb{E}[Z_1]
    \;\le\;\varepsilon+2\,\mathbb{P}(Z_1>\varepsilon)
    \;\longrightarrow\;\varepsilon
    \qquad\text{for every }\varepsilon>0,
\]
so $\mathbb{E}[Z_1]\to0$, and Markov's inequality yields
$(\mathrm{I})\xrightarrow{P}0$.
Combining the two terms and taking the maximum over the finite set
$\mathcal{K}$ proves \eqref{eq:uniform-criterion}.

\paragraph{Step 2: Margin reduction.}
Consider the event
\[
    E_{n,m}:=\Bigl\{\max_{K\in\mathcal{K}}
    \bigl|\widehat{R}_{n,m}(K)-R(K)\bigr|<\delta/2\Bigr\},
\]
whose probability tends to one by \eqref{eq:uniform-criterion} and the
margin condition \eqref{eq:margin}.
On $E_{n,m}$, for every $K\in\mathcal{K}$ with $K\neq K^\star$,
\[
    \widehat{R}_{n,m}(K)
    \;<\;R(K)+\frac\delta2
    \;\le\;R(K^\star)-\delta+\frac\delta2
    \;=\;R(K^\star)-\frac\delta2
    \;<\;\widehat{R}_{n,m}(K^\star),
\]
so the maximizer of $\widehat{R}_{n,m}$ over $\mathcal{K}$ is
unique and equals $K^\star$, i.e.\ $\widehat K_{n,m}=K^\star$.
Hence
$\mathbb{P}\bigl(\widehat K_{n,m}=K^\star\bigr)
\ge\mathbb{P}(E_{n,m})\to1$.
\qed

\subsection{Proof of Proposition~\ref{prop:rwsd-rate}}
\label{app:rate-proof}

We bound the error of the empirical RWSD in three steps. Let $\mathcal{C}_{k_\mu}$ be the population cluster that contains $\mu$.
For each $k$, let $n_k\sim\mathrm{Binomial}(n,\pi_k)$ be the number of
sampled distributions $\mu_i$ in $\mathcal{C}_k$, and let
$\mathcal{C}^k_n$ be the latent cluster formed by the distributions
$\mu_i\in\mathcal{C}_k$ themselves, before each $\mu_i$ is replaced by
$\hat\mu_{i,m}$. If $n_k=0$, both $\mathcal{C}^k_{n,m}$ and
$\mathcal{C}^k_n$ are empty, and we set $\mathrm{WSD}(\mu;\emptyset):=0$.

\paragraph{Step 1: Per-cluster depth errors.}
Since $|\max_jb_j-\max_jb_j'|\le\max_j|b_j-b_j'|$, all real numbers
$a,a',b_j,b_j'$ satisfy
\[
\big|(a-\max_jb_j)-(a'-\max_jb_j')\big|
\;\le\;2\max\big\{|a-a'|,\,\max_j|b_j-b_j'|\big\}.
\]
We apply the inequality with $a$ and $b_j$ the depths of $\mu$ in
$\mathcal{C}_{k_\mu}$ and $\mathcal{C}_j$, $j\ne k_\mu$, and with $a'$
and $b_j'$ the depths of $\mu$ in $\mathcal{C}^{k_\mu}_{n,m}$ and
$\mathcal{C}^j_{n,m}$. By the definition of RWSD,
\begin{equation}\label{eq:lip-reduction}
\big|\widehat{\mathrm{RWSD}}_K(\mu)-\mathrm{RWSD}_K(\mu)\big|
\;\le\;
2\max_{1\le k\le K}
\big|\mathrm{WSD}\big(\mu;\mathcal{C}^k_{n,m}\big)
-\mathrm{WSD}\big(\mu;\mathcal{C}_k\big)\big| .
\end{equation}

\paragraph{Step 2: Two sources of error.}
The latent cluster $\mathcal{C}^k_n$ separates the error from observing
each distribution through $m$ points (second stage) from the error from
sampling finitely many distributions (first stage):
\begin{equation}\label{eq:two-sources}
\big|\mathrm{WSD}\big(\mu;\mathcal{C}^k_{n,m}\big)
-\mathrm{WSD}\big(\mu;\mathcal{C}_k\big)\big|
\;\le\;
\big|\mathrm{WSD}\big(\mu;\mathcal{C}^k_{n,m}\big)
-\mathrm{WSD}\big(\mu;\mathcal{C}^k_{n}\big)\big|
+\big|\mathrm{WSD}\big(\mu;\mathcal{C}^k_{n}\big)
-\mathrm{WSD}\big(\mu;\mathcal{C}_k\big)\big| .
\end{equation}

\paragraph{Step 3a: Second-stage error.}
Conditionally on $n_k$, the distributions in $\mathcal{C}^k_n$ are
i.i.d.\ draws from $\mathbf{P}^{(k)}$, each observed through $m$ i.i.d.\
points, and by assumption, $\mu$ and $\mathbf{P}^{(k)}$ satisfy
Assumption~6.5 of \citet{bachoc2024wasserstein}. The proof of their
Lemma~6.6 averages per-distribution bounds. By their Lemma~F.1, the
$i$-th distribution contributes at most
$2\|T_{\mu,\hat\mu_{i,m}}-T_{\mu,\mu_i}\|_{L^2(\mu)}/\mathcal{W}_2(\mu,\mu_i)$,
and their display~(6.8) with Jensen's inequality bounds the conditional
expectation of the contribution given $\mu_i$ by
$2\sqrt{\alpha(d,m)}/\mathcal{W}_2(\mu,\mu_i)$. The same argument
applies to $\mathcal{C}^k_n$ for every $n_k\ge1$ and gives
\begin{equation}\label{eq:m-term}
\mathbb{E}\Big[
\big|\mathrm{WSD}\big(\mu;\mathcal{C}^k_{n,m}\big)
-\mathrm{WSD}\big(\mu;\mathcal{C}^k_{n}\big)\big|
\,\Big|\,n_k\Big]
\;\le\;
2\sqrt{\alpha(d,m)}\;
\mathbb{E}_{P'\sim\mathbf{P}^{(k)}}
\big[\mathcal{W}_2^{-1}(\mu,P')\big]
\;\le\;2\sqrt{\alpha(d,m)}\,M ;
\end{equation}
for $n_k=0$, the left-hand side is zero. The constant $C'$ in
$\alpha(d,m)$ comes from display~(6.8) of \citet{bachoc2024wasserstein}
and depends only on $d$, the supports, and the density bounds in their
Assumption~6.5.

\paragraph{Step 3b: First-stage error.}
For each $i$ with $\mu_i\in\mathcal{C}_k$, let
$\xi_i:=(I-T_{\mu,\mu_i})/\|I-T_{\mu,\mu_i}\|_{L^2(\mu)}$. The unit
vector $\xi_i\in L^2(\mu)$ is well defined because $M<\infty$ forces
$\mathcal{W}_2(\mu,\mu_i)>0$ almost surely. Conditionally on $n_k$, the
$\xi_i$ are i.i.d.\ with mean
\[
S_k:=\mathbb{E}_{P'\sim\mathbf{P}^{(k)}}
\Big[\frac{I-T_{\mu,P'}}{\|I-T_{\mu,P'}\|_{L^2(\mu)}}\Big],
\]
and with $S_{k,n}:=n_k^{-1}\sum_{i:\,\mu_i\in\mathcal{C}_k}\xi_i$, the
two depths are $\mathrm{WSD}(\mu;\mathcal{C}_k)=1-\|S_k\|_{L^2(\mu)}$ and
$\mathrm{WSD}(\mu;\mathcal{C}^k_n)=1-\|S_{k,n}\|_{L^2(\mu)}$. The
triangle inequality, Jensen's inequality, and
$\mathbb{E}\big[\|S_{k,n}-S_k\|_{L^2(\mu)}^2\,\big|\,n_k\big]
=(1-\|S_k\|_{L^2(\mu)}^2)/n_k$ give
\begin{equation}\label{eq:n-term}
\mathbb{E}\Big[
\big|\mathrm{WSD}\big(\mu;\mathcal{C}^k_{n}\big)
-\mathrm{WSD}\big(\mu;\mathcal{C}_k\big)\big|
\,\Big|\,n_k\Big]
\;\le\;
\mathbb{E}\big[\|S_{k,n}-S_k\|_{L^2(\mu)}\,\big|\,n_k\big]
\;\le\;n_k^{-1/2}.
\end{equation}
Both depths lie in $[0,1]$, so the left-hand side of \eqref{eq:n-term}
is also at most one, which covers $n_k=0$. Since
$n_k\sim\mathrm{Binomial}(n,\pi_k)$, the Chernoff bound
$\mathbb{P}(n_k\le\pi_kn/2)\le e^{-\pi_kn/8}$ gives
\begin{equation}\label{eq:nk-moment}
\mathbb{E}\big[\min\{n_k^{-1/2},1\}\big]
\;\le\;\sqrt{2}\,(\pi_kn)^{-1/2}+e^{-\pi_kn/8}
\;\le\;C_0\,(\pi_kn)^{-1/2},
\end{equation}
where $\min\{n_k^{-1/2},1\}:=1$ when $n_k=0$, and
$C_0:=\sqrt2+\sup_{x>0}\sqrt{x}\,e^{-x/8}$ is an absolute constant.

\paragraph{Conclusion.}
Since a maximum of nonnegative terms is at most their sum, equation
\eqref{eq:lip-reduction} and \eqref{eq:two-sources} are bound
$\mathbb{E}\,\Delta_K(\mu)$ by twice the sum over $k$ of the expected
second-stage and first-stage errors. By \eqref{eq:m-term}, the expected
second-stage error is at most $2\sqrt{\alpha(d,m)}\,M$, and by
\eqref{eq:n-term} and \eqref{eq:nk-moment}, the expected first-stage
error is at most $C_0\,(\pi_kn)^{-1/2}$. Therefore
\[
\mathbb{E}\,\Delta_K(\mu)
\;\le\;
2\sum_{k=1}^{K}\Big[2\sqrt{\alpha(d,m)}\,M+C_0\,(\pi_kn)^{-1/2}\Big]
\;=\;
4K\sqrt{\alpha(d,m)}\,M
+\frac{2C_0}{\sqrt{n}}\sum_{k=1}^{K}\pi_k^{-1/2},
\]
which is the bound of Proposition~\ref{prop:rwsd-rate} with $C:=2C_0$.
In particular, when $K$, the cluster probabilities $\pi_k$, and $M$ are
fixed, the error of the empirical RWSD converges to zero at the rate
\[
\mathbb{E}\,\Delta_K(\mu)
=
\begin{cases}
O\big(m^{-1/2}+n^{-1/2}\big) & \text{if } d=1,\\[2pt]
O\big(\sqrt{(\log m)/m}+n^{-1/2}\big) & \text{if } d=2,\\[2pt]
O\big(m^{-1/d}+n^{-1/2}\big) & \text{if } d>2.
\end{cases}
\]
\qed
\subsection{Proof of Proposition~\ref{prop:influence}}

In this subsection and the next, we write every depth through a kernel
$\xi(Q,P)\in L^2(Q)$ with $\|\xi(Q,P)\|_{L^2(Q)}\le1$:
\[
\mathrm{WSD}(Q;C):=1-\Big\|\,|C|^{-1}\sum_{P\in C}\xi(Q,P)\Big\|_{L^2(Q)} .
\]
The population depth uses $\xi(Q,P):=(I-T_{Q,P})/\mathcal{W}_2(Q,P)$,
which has norm one for $P\ne Q$. The empirical depth of
Section~\ref{sec:propose_RWSD} replaces $T_{Q,P}$ by the barycentric
projection of the optimal plan, which gives norm at most one by
Jensen's inequality. In both cases, $\xi(Q,Q):=0$.

Fix $K$. Let $I_l$ be the index set of the $l$-th cluster $C_l$, let
$n_l:=|I_l|$, and let $k(i)$ be the index of the cluster that contains
$\nu_i$. Let $J\subset\{1,\dots,n\}$, $|J|=s$, be the indices of the
replaced measures, let $s_l:=|J\cap I_l|$, and write $\nu_j'$ for the
measure at index $j$ after the replacement, so that $\nu_j'=\nu_j$ for
$j\notin J$. The criterion is $\overline{R}(K)=n^{-1}\sum_{i=1}^{n}t_i$
with
\[
t_i:=a_i-\max_{l\ne k(i)}b_{i,l},\qquad
a_i:=\mathrm{WSD}(\nu_i;C_{k(i)}),\qquad
b_{i,l}:=\mathrm{WSD}(\nu_i;C_l),
\]
and $t_i'$, $a_i'$, and $b_{i,l}'$ denote the same quantities after the
replacement. Since $\|\xi\|\le1$, every depth lies in $[0,1]$, and every
$t_i$ lies in $[-1,1]$.

\emph{Replaced measures.} For $i\in J$, both $t_i$ and $t_i'$ lie in
$[-1,1]$, so $|t_i'-t_i|\le2$. The $s$ replaced measures therefore
contribute at most $2s/n$ to $|\overline{R}{}'(K)-\overline{R}(K)|$.

\emph{Unreplaced measures.} Fix $i\notin J$. The measure $\nu_i$ itself
is unchanged, so the replacement changes only the cluster averages
\[
S_l(\nu_i):=\frac{1}{n_l}\sum_{j\in I_l}\xi(\nu_i,\nu_j),
\qquad l=1,\dots,K,
\]
which determine $a_i=1-\|S_{k(i)}(\nu_i)\|_{L^2(\nu_i)}$ and
$b_{i,l}=1-\|S_l(\nu_i)\|_{L^2(\nu_i)}$. The replacement changes at most
$s_l$ summands of $S_l(\nu_i)$, each by at most
$\|\xi(\nu_i,\nu_j')-\xi(\nu_i,\nu_j)\|_{L^2(\nu_i)}\le2$, so
$\|S_l'(\nu_i)-S_l(\nu_i)\|_{L^2(\nu_i)}\le2s_l/n_l$. Since
$t\mapsto1-\|t\|$ is $1$-Lipschitz, $|b_{i,l}'-b_{i,l}|\le2s_l/n_l$ and
$|a_i'-a_i|\le2s_{k(i)}/n_{k(i)}$. Together with
$|\max_lb_l-\max_lb_l'|\le\max_l|b_l-b_l'|$, the two bounds give
\[
|t_i'-t_i|
\;\le\;|a_i'-a_i|+\max_{l\ne k(i)}|b_{i,l}'-b_{i,l}|
\;\le\;\frac{2s_{k(i)}}{n_{k(i)}}+\max_{l}\frac{2s_l}{n_l}
\;\le\;4\max_{l}\frac{s_l}{n_l}.
\]

Averaging $|t_i'-t_i|$ over the replaced and the unreplaced measures
gives the first claim of Proposition~\ref{prop:influence}:
\[
|\overline{R}{}'(K)-\overline{R}(K)|
\;\le\;\frac{2s}{n}+4\max_l\frac{s_l}{n_l}
\;\le\;\frac{6s}{n\hat\pi_{\min}},
\]
where the last step uses $s_l\le s$, $n_l\ge n\hat\pi_{\min}$, and
$\hat\pi_{\min}\le1$.

For the second claim, apply the first claim at every $K\in\mathcal{K}$
with the partition for that $K$; $\hat\pi_{\min}$ is the smallest
cluster fraction over the partitions for all $K\in\mathcal{K}$. If
$s<\hat\delta\,\hat\pi_{\min}\,n/12$, then
$6s/(n\hat\pi_{\min})<\hat\delta/2$, so every $\overline{R}(K)$ changes
by less than $\hat\delta/2$. For every $K\ne\widehat{K}$, the margin
gives $\overline{R}(K)\le\overline{R}(\widehat{K})-\hat\delta$, so
\[
\overline{R}{}'(K)
\;<\;\overline{R}(K)+\hat\delta/2
\;\le\;\overline{R}(\widehat{K})-\hat\delta/2
\;<\;\overline{R}{}'(\widehat{K}).
\]
Hence $\widehat{K}$ is the unique maximizer of $\overline{R}{}'$. \qed

Appendix~\ref{app:robust-proofs} gives a population version of
Proposition~\ref{prop:influence}.

\subsection{Population version of Proposition~\ref{prop:influence}}
\label{app:robust-proofs}

\begin{corollary}[Breakdown of the selected $K$, population version]
\label{cor:breakdown}
Under the assumptions of Corollary~\ref{cor:Kstar-consistency}, let
$\pi^{(K)}_l$ denote the mass of the $l$-th population cluster at level
$K$, assume that all of them are positive, and set
$\pi_{\min} := \min_{K \in \mathcal{K}} \min_{l \le K} \pi^{(K)}_l$.
Then, with probability tending to one as $n, m \to \infty$, the
selection remains $K^\star$ under every contamination, in the sense of
Proposition~\ref{prop:influence}, of fewer than
$\delta\,\pi_{\min}\,n / 32$ objects.
\end{corollary}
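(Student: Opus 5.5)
The plan is to combine the deterministic robustness bound of Proposition~\ref{prop:influence} with the uniform consistency of the criterion from the proof of Corollary~\ref{cor:Kstar-consistency}. The population margin $\delta$ and minimal mass $\pi_{\min}$ are turned into empirical ones holding with probability tending to one. Proposition~\ref{prop:influence} is then applied on that event. The factor $32$ (versus $12$) comes from halving both the empirical margin and the empirical minimal cluster fraction.

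\emph{Step 1 (empirical margin).} By \eqref{eq:uniform-criterion}, the event $E_{n,m}' := \{\max_{K\in\mathcal{K}} |\widehat R_{n,m}(K) - R(K)| < \delta/4\}$ has probability tending to one. On it, $\widehat K_{n,m}=K^\star$, and for every $K \neq K^\star$,
\[
\widehat R_{n,m}(K^\star)-\widehat R_{n,m}(K) > \delta - \delta/2 = \delta/2 .
\]
So the empirical criterion has margin $\hat\delta \ge \delta/2$.

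\emph{Step 2 (empirical cluster fractions).} Show that, with probability tending to one, every empirical cluster at every $K\in\mathcal{K}$ has fraction at least $\pi_{\min}/2$. Step~1 of the proof of Theorem~\ref{thm:rwsd-abstract} gives assignment stability: $\mathbf{P}$-a.e.\ $\mu$ is eventually assigned to the empirical cluster matching its population cluster. Hence the empirical fraction $\hat\pi^{(K)}_l$ differs from the sample proportion of population cluster $l$ by the fraction of misassigned objects. That fraction has expectation $\to 0$ by the same exchangeability and dominated-convergence argument used for term (I) in Corollary~\ref{cor:Kstar-consistency}, and so tends to $0$ in probability by Markov. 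The sample proportion tends to $\pi^{(K)}_l$ by the SLLN. Since $\mathcal{K}$ and $l\le K$ range over finitely many cases, $\hat\pi_{\min} \ge \pi_{\min}/2$ with probability tending to one. I expect this to be the main obstacle. One subtlety is that the empirical partition is computed from $\hat\mu_{i,m}$ rather than $\mu_i$, so the assignment-stability step has to be read for the pipeline's actual assignments, with a margin around the unique-nearest-center condition. I would handle this exactly as in Theorem~\ref{thm:rwsd-abstract}, reusing its Step~1 conclusion as a black box.

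\emph{Step 3 (apply Proposition~\ref{prop:influence}).} On the intersection of the two events, which still has probability tending to one, Proposition~\ref{prop:influence} applies with the fixed empirical partitions, margin $\hat\delta\ge\delta/2$, and $\hat\pi_{\min}\ge \pi_{\min}/2$. Its condition $s < \hat\delta\,\hat\pi_{\min} n/12$ is implied by
\[
s < \frac{(\delta/2)(\pi_{\min}/2)\,n}{12} = \frac{\delta\,\pi_{\min}\,n}{48}.
\]
To obtain the stated constant $32$, I would sharpen Step~1. Take the event with $\delta/4$ replaced by a small $\eta$, so $\hat\delta > \delta - 2\eta$, and likewise require $\hat\pi_{\min} > (1-\eta')\pi_{\min}$. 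Any constant strictly above $12$ is then achievable, in particular $32$. Contamination "in the sense of Proposition~\ref{prop:influence}" keeps the partition fixed, so the conclusion is that $\widehat K_{n,m}=K^\star$ remains the maximizer of the contaminated criterion, which proves the corollary.
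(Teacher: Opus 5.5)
Your proposal is correct and follows the paper's proof. Like the paper, you intersect the uniform-consistency event \eqref{eq:uniform-criterion} with $\{\hat\pi_{\min}\ge\pi_{\min}/2\}$, justified by the SLLN plus a vanishing, exchangeability-controlled misassignment fraction, and then apply Proposition~\ref{prop:influence} on that event. The only difference is bookkeeping: halving both quantities gives $48$, not the $32$ claimed in your opening sentence. Instead of your $\eta,\eta'$ refinement, the paper takes the consistency event at tolerance $\delta/8$, so $\hat\delta\ge 3\delta/4$ and $(3\delta/4)(\pi_{\min}/2)/12=\delta\pi_{\min}/32$ exactly.
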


\begin{proof}
Let $\overline{R}$ be the criterion of the uncontaminated sample with
the clusters produced by Algorithm~\ref{alg:selectK}, so that
$\overline{R} = \widehat{R}_{n,m}$, and consider the event
$F_n := \{\sup_{K \in \mathcal{K}} |\overline{R}(K) - R(K)| < \delta / 8\}
\cap \{\hat\pi_{\min} \ge \pi_{\min} / 2\}$. On $F_n$ the empirical
margin of $K^\star$ satisfies
$\hat\delta \ge \delta - 2(\delta / 8) = 3\delta / 4$ and
$\widehat{K} = K^\star$, so by Proposition~\ref{prop:influence} any
contamination of size
$s < (3\delta / 4)(\pi_{\min} / 2)\, n / 12 = \delta\,\pi_{\min}\, n / 32$
leaves the selection at $K^\star$; thus $F_n$ is contained in the
event of the statement, and it remains to show that
$\mathbb{P}(F_n) \to 1$. The first event has probability tending to
one by the uniform consistency \eqref{eq:uniform-criterion} of the
empirical criterion. For the second, fix $K$ and $l$: the empirical
cluster fraction dominates the fraction of objects whose
\emph{population} cluster is $l$ minus the fraction of misassigned
objects; the former tends to $\pi^{(K)}_l$ almost surely by the strong
law of large numbers, and the latter vanishes in probability because,
under the hypotheses of Theorem~\ref{thm:rwsd-abstract}, each object is
assigned to its population cluster with probability tending to one and
the objects are exchangeable. Since $\mathcal{K}$ is finite, the
intersection over $K$ and $l$ retains probability tending to one.
\end{proof}

\subsection{Sensitivity of the silhouette score and DBI}
\label{app:contrast}

\begin{lemma}\label{lem:contrast}
Let $q \ge 4$, $n = 2q$, and hold two clusters of size $q$ fixed. There
are configurations in which replacing a single distribution lowers the
average silhouette score by at least $1/2$, and configurations in which
it raises DBI by an arbitrarily large amount, whether the DBI centers
are held fixed or recomputed as barycenters or medians.
\end{lemma}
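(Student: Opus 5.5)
We prove both parts by explicit constructions with point masses, or their location-family translates, in $\mathcal{P}_2(\mathbb{R}^d)$. On the family $\{\delta_x\}$ the map $x\mapsto\delta_x$ is an isometry, $\mathcal{W}_2(\delta_x,\delta_y)=\|x-y\|$. Barycenters and Wasserstein medians of such families can be taken as Dirac masses at the Euclidean mean and the geometric median; Proposition~\ref{prop:location-median} applies with $\rho=\delta_0$. So it suffices to build Euclidean configurations on the real line and check the silhouette formula \eqref{eq:silhouette} and the DBI formula \eqref{eq:dbi} directly. Throughout, the partition is held fixed: two clusters $C_1,C_2$ of size $q\ge4$.

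\emph{Silhouette.} Put all $q$ members of $C_1$ at $\delta_0$ and all $q$ members of $C_2$ at $\delta_L$ with $L>0$. Then $a_i=0$ and $b_i=L$ for every $i$, so $s_i=1$ and the average is $1$. Now replace one member of $C_1$ by $\delta_y$ with $y\to\infty$; it stays in $C_1$ by assumption.

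For the $q-1$ unreplaced members of $C_1$, $a_i=(y)/(q-1)$ and $b_i=L$. Once $y>(q-1)L$, each such $s_i=L/a_i-1\to-1$ as $y\to\infty$. For the replaced point, $a=y$ and $b=y-L$, so $s\to0$. Members of $C_2$ keep $a=0$ and have $b=\min(\text{avg distance to }C_1)$, so their $s$ stays $1$.

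The new average is then at most about $\bigl((q)\cdot1-(q-1)\bigr)/(2q)\to 1/(2q)$. The drop is therefore about $1-1/(2q)\ge 1/2$ for $q\ge4$; indeed it is close to $1$. We only need to check that a finite $y$ already gives a drop of at least $1/2$, which a simple limiting argument provides.

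\emph{DBI.} We use the same two-cluster configuration but with a small spread, so that $R_k>0$ is not needed. The ratio is $(R_1+R_2)/\mathcal{W}_2(\bar\mu_1,\bar\mu_2)$. Two routes make it arbitrarily large.

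(a) With centers held fixed, replacing one member of $C_1$ by $\delta_y$ makes $R_1\ge |y|/q\to\infty$ while the denominator is unchanged.

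(b) With recomputed barycenters, take $C_1$ at $0$ and $C_2$ at $L$. Replacing one point of $C_1$ by $\delta_y$ with $y=qL$ moves the mean of $C_1$ to $L$, so the denominator $\mathcal{W}_2(\bar\mu_1,\bar\mu_2)\to0$ while $R_1$ stays positive. The ratio can then be made arbitrarily large; choosing $y$ slightly off the exact value keeps the denominator positive.

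(c) With recomputed medians, a single point cannot move the median, so we use route (a) instead. The median of $C_1$ stays near $0$, while $R_1\ge (y-\text{median})/q\to\infty$.

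The main obstacle is bookkeeping in the silhouette case. We must verify that $b_i$ for the $C_2$ points, and for unreplaced $C_1$ points, is computed correctly after the replacement, including the $1/(|C|-1)$ normalization. We must also confirm that the drop bound holds uniformly for every $q\ge4$ and not only in the limit. I would handle this by explicitly evaluating the average at a concrete $y$, for example $y=4qL$.
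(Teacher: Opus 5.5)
Your proposal is correct and uses essentially the same construction as the paper. Both place Dirac masses at two points on the line and move one member of the first cluster far away. For the silhouette score, the remaining members' within-cluster average then exceeds the between-cluster distance. For DBI, the scatter blows up when the centers are held fixed or taken as medians, and the barycenter is dragged onto the other cluster's center. The only differences are cosmetic: the paper puts the second cluster at $1$ and uses the explicit threshold $L\ge 2q$ to get a drop of more than $3(q-1)/(4q)\ge 9/16$, where you pass to the limit $y\to\infty$; and your ``small spread'' is unnecessary, since exact point masses already work.
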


\begin{proof}
Realize point configurations as Dirac measures, for which
$\mathcal{W}_2$ between two Diracs is the distance between their
locations; the construction is thus one-dimensional. Cluster $A$
holds $q$ objects at $0$ and cluster $B$ holds $q$ objects at $1$,
and one member of $A$ is replaced by a Dirac at location $L > 0$.

\emph{Silhouette score.} Initially every object has within-distance
$a(i) = 0$ and between-distance $b(i) = 1$, so every silhouette value
is $s(i) = (b - a)/\max(a, b) = 1$ and the average is $1$. Take
$L \ge 2q$. For each of the $q - 1$ remaining members of $A$, the
average distance to the other members of its own cluster becomes
$a(i) = L/(q - 1) > 2 > 1 = b(i)$, hence
$s(i) = (b - a)/a = 1/a - 1 < -1/2$, a decrease of more than $3/2$.
The silhouettes of the $q$ members of $B$ are unchanged, since
$a(i) = 0$ still gives $s(i) = 1$, and the replaced object's
silhouette cannot increase, since it was $1$. The average therefore
decreases by more than $(3/2)(q - 1)/(2q) = 3(q - 1)/(4q)$, which is
at least $9/16 > 1/2$ for every $q \ge 4$.

\emph{DBI.} With two clusters, DBI equals $(S_A + S_B)/d(c_A, c_B)$,
where $S_A$ and $S_B$ are the mean distances of the members to their
centers $c_A$ and $c_B$. Initially $S_A = S_B = 0$, so the index is
$0$; after the replacement still $S_B = 0$ and $c_B$ is the Dirac at
$1$. If the centers are held fixed or recomputed as medians, $c_A$ is
the Dirac at $0$: writing $x_i$ for the locations of the members of
$A$, Jensen's inequality gives, for every $\nu$,
$\sum_i \mathcal{W}_2(\nu, \delta_{x_i}) \ge \int \sum_i |y - x_i|\,\nu(dy) \ge \sum_i |x_i|$,
with equality only at $\nu = \delta_0$. Hence $S_A = L/q$,
$d(c_A, c_B) = 1$, and the index equals $L/q$, which is arbitrarily
large as $L \to \infty$. If the centers are recomputed as barycenters,
$c_A$ becomes the Dirac at $L/q$, so that $S_A = 2(q - 1)L/q^2$ and
$d(c_A, c_B) = |L/q - 1|$, and the index equals
\[
\frac{2(q - 1)L}{q\,|L - q|},
\]
which is arbitrarily large as $L \to q$, where the replaced member
drags the barycenter of $A$ onto that of $B$.
\end{proof}

\section{Additional simulation results}\label{app:regular-sims}
Section~\ref{sec:simulations} showed that RWSD selects the true number
of clusters in stressed settings where the silhouette score, DBI, or
both fail. Each of those settings distorts the within- or
between-cluster distances on which the two classical criteria are
built, while the center--outward ordering used by RWSD is largely
unaffected. This appendix completes the picture in two directions.
Appendix~\ref{app:regular} reports \emph{regular} settings without
such distortions. There all three criteria recover the true $K^\star$, so
RWSD loses nothing in benign geometry, and the competing criteria work
at full strength, which confirms that their failures in
Section~\ref{sec:simulations} are caused by the distortions.
Appendix~\ref{app:failure} reports two settings in which RWSD fails
because the clusters are hollow and have no depth core; there the
silhouette score and DBI perform better. Taken together, the
experiments show agreement in regular settings, an advantage for RWSD
when distances are distorted, and a failure of RWSD when its premise
of centrality does not hold. Every configuration is replicated $20$
times, with fewer objects per cluster than in
Section~\ref{sec:simulations} so that the full sweeps remain
tractable; each subsection states its $n_k$. All tables report means
over replications with $\pm$ one standard deviation, and boldface
marks the optimum of each criterion.

\subsection{Regular settings}\label{app:regular}

\paragraph{Gaussian distributions.}
Each cluster consists of Gaussian distributions with a common
covariance, whose means are scattered around a fixed center:
\begin{equation*}
\text{Cluster } k
= \Big\{\, N\big(\boldsymbol{c}_k,\ 0.15^2\, \bI_{d\times d}\big),\
\boldsymbol{c}_k \sim N\big(\boldsymbol{a}_k,\ 0.25^2\, \bI_{d\times d}\big)
\,\Big\}, \quad k = 1, \ldots, K^\star, \ldots .
\end{equation*}
We vary the number of clusters $K^\star$, the dimension $d$, and the
number $m$ of points per distribution, one at a time, which gives nine
combinations:
\begin{itemize}
    \item with $d = 2$ and $m = 50$, the centers form a segment
    ($K^\star = 2$, $\boldsymbol{a}_{1,2} = (\pm 1.5,\, 0)^\top$), an
    equilateral triangle with side $3.0$ ($K^\star = 3$), a square with
    side $3.0$ ($K^\star = 4$), or five equally spaced points on a circle
    of radius $2.5$ ($K^\star = 5$), so that adjacent centers are
    $2 \cdot 2.5 \sin(\pi/5)$ apart;
    \item with $K^\star = 3$ and $m = 50$, the dimension is
    $d \in \{3, 5, 10\}$, and the side of the triangle is scaled to
    $3.0\sqrt{d/2}$ so that the ratio of between- to within-cluster
    Wasserstein distances does not depend on $d$;
    \item with $K^\star = 3$ and $d = 2$, the number of points per
    distribution is $m \in \{50, 100, 150\}$.
\end{itemize}
We cluster with free-support Wasserstein $K$-means, except in the
circle configuration, where we use Wasserstein $K$-medians. Both
follow the pipeline and hyperparameters of
Section~\ref{sec:simulations}, and the candidate sets are listed in
Table~\ref{tab:regular-values}. Each combination uses $n_k = 50$
distributions per cluster and is replicated $20$ times, and all runs
can be reproduced exactly from the accompanying code.

All three criteria select $K^\star$ in every combination and in
each of the $20$ replications. Table~\ref{tab:regular-values} shows
the full criterion profiles. Every mean profile has a single optimum
at $K^\star$, clearly separated from the runner-up; for RWSD in the
triangle configuration, for instance, the value is $.379$ at $K = 3$
and $.361$ at $K = 2$. The optimal RWSD value is $.379$ in all four
center geometries, because the four designs place neighboring centers
at comparable distances. As the dimension grows, the optimal RWSD
value decreases slightly, from $.379$ at $d = 2$ to $.355$, $.334$,
and $.320$ at $d = 3$, $5$, and $10$. This is expected, since
Wasserstein quantities are harder to estimate from $m = 50$ points in
higher dimension. The decrease changes the values but not the
selection: all three criteria still select $K^\star = 3$ in every
dimension and replication, with a clear margin even at $d = 10$
($.320$ at $K = 3$ against $.307$ at $K = 2$). As $m$ grows, the
optimal values barely change: $.379$, $.381$, and $.382$ for RWSD,
$.844$, $.845$, and $.846$ for the silhouette score, and $.216$,
$.214$, and $.213$ for DBI. The selection is therefore already stable
at $m = 50$, and larger $m$ affects only the third decimal.

\begin{table}[!t]
    \centering
    \caption{Criterion profiles over the candidate sets for the nine Gaussian combinations: means over $20$ replications with $\pm$ one standard deviation. Boldface marks the optimum of each criterion, which lies at $K^\star$ in every combination and in each of the $20$ replications.}
    \label{tab:regular-values}
    {\small
    \setlength{\tabcolsep}{5pt}
    \begin{tabular}{ccc|cccc}
        \toprule
        $d$ & $m$ & $K^\star$ & $K$ & Silhouette & RWSD & DBI \\
        \midrule
        \multirow{3}{*}{2} & \multirow{3}{*}{50} & \multirow{3}{*}{2} & 2 & \textbf{.850 ($\pm$.009)} & \textbf{.379 ($\pm$.002)} & \textbf{.210 ($\pm$.013)} \\
         & &  & 3 & .591 ($\pm$.019) & .318 ($\pm$.008) & .818 ($\pm$.060) \\
         & &  & 4 & .431 ($\pm$.128) & .281 ($\pm$.032) & .980 ($\pm$.191) \\
        \midrule
        \multirow{3}{*}{2} & \multirow{3}{*}{50} & \multirow{3}{*}{3} & 2 & .564 ($\pm$.010) & .361 ($\pm$.003) & .707 ($\pm$.022) \\
         & &  & 3 & \textbf{.844 ($\pm$.008)} & \textbf{.379 ($\pm$.002)} & \textbf{.216 ($\pm$.012)} \\
         & &  & 4 & .667 ($\pm$.014) & .339 ($\pm$.007) & .683 ($\pm$.056) \\
        \midrule
        \multirow{5}{*}{2} & \multirow{5}{*}{50} & \multirow{5}{*}{4} & 2 & .470 ($\pm$.059) & .342 ($\pm$.008) & .933 ($\pm$.099) \\
         & &  & 3 & .635 ($\pm$.007) & .363 ($\pm$.002) & .546 ($\pm$.010) \\
         & &  & 4 & \textbf{.841 ($\pm$.006)} & \textbf{.379 ($\pm$.001)} & \textbf{.217 ($\pm$.009)} \\
         & &  & 5 & .712 ($\pm$.010) & .349 ($\pm$.005) & .575 ($\pm$.038) \\
         & &  & 6 & .584 ($\pm$.009) & .319 ($\pm$.004) & .823 ($\pm$.031) \\
        \midrule
        \multirow{5}{*}{2} & \multirow{5}{*}{50} & \multirow{5}{*}{5} & 2 & .455 ($\pm$.017) & .331 ($\pm$.008) & .897 ($\pm$.010) \\
         & &  & 3 & .547 ($\pm$.038) & .362 ($\pm$.005) & .690 ($\pm$.027) \\
         & &  & 4 & .667 ($\pm$.021) & .366 ($\pm$.004) & .468 ($\pm$.007) \\
         & &  & 5 & \textbf{.837 ($\pm$.005)} & \textbf{.379 ($\pm$.001)} & \textbf{.220 ($\pm$.008)} \\
         & &  & 6 & .736 ($\pm$.009) & .356 ($\pm$.004) & .517 ($\pm$.050) \\
        \midrule
        \multirow{3}{*}{3} & \multirow{3}{*}{50} & \multirow{3}{*}{3} & 2 & .561 ($\pm$.005) & .340 ($\pm$.003) & .713 ($\pm$.012) \\
         & &  & 3 & \textbf{.841 ($\pm$.005)} & \textbf{.355 ($\pm$.002)} & \textbf{.222 ($\pm$.007)} \\
         & &  & 4 & .640 ($\pm$.011) & .301 ($\pm$.006) & .835 ($\pm$.062) \\
        \midrule
        \multirow{3}{*}{5} & \multirow{3}{*}{50} & \multirow{3}{*}{3} & 2 & .556 ($\pm$.005) & .320 ($\pm$.003) & .724 ($\pm$.011) \\
         & &  & 3 & \textbf{.835 ($\pm$.005)} & \textbf{.334 ($\pm$.003)} & \textbf{.235 ($\pm$.006)} \\
         & &  & 4 & .607 ($\pm$.007) & .267 ($\pm$.006) & 1.094 ($\pm$.075) \\
        \midrule
        \multirow{3}{*}{10} & \multirow{3}{*}{50} & \multirow{3}{*}{3} & 2 & .551 ($\pm$.004) & .307 ($\pm$.002) & .738 ($\pm$.010) \\
         & &  & 3 & \textbf{.828 ($\pm$.003)} & \textbf{.320 ($\pm$.002)} & \textbf{.254 ($\pm$.005)} \\
         & &  & 4 & .583 ($\pm$.009) & .244 ($\pm$.014) & 1.431 ($\pm$.230) \\
        \midrule
        \multirow{3}{*}{2} & \multirow{3}{*}{100} & \multirow{3}{*}{3} & 2 & .565 ($\pm$.010) & .363 ($\pm$.003) & .705 ($\pm$.019) \\
         & &  & 3 & \textbf{.845 ($\pm$.009)} & \textbf{.381 ($\pm$.002)} & \textbf{.214 ($\pm$.013)} \\
         & &  & 4 & .670 ($\pm$.017) & .341 ($\pm$.007) & .668 ($\pm$.058) \\
        \midrule
        \multirow{3}{*}{2} & \multirow{3}{*}{150} & \multirow{3}{*}{3} & 2 & .565 ($\pm$.010) & .364 ($\pm$.003) & .705 ($\pm$.020) \\
         & &  & 3 & \textbf{.846 ($\pm$.009)} & \textbf{.382 ($\pm$.002)} & \textbf{.213 ($\pm$.013)} \\
         & &  & 4 & .672 ($\pm$.014) & .343 ($\pm$.008) & .660 ($\pm$.059) \\
        \bottomrule
    \end{tabular}
    }
\end{table}

\paragraph{Beyond the Gaussian family.}
When all members of a configuration share a common shape and differ
only in location, the Wasserstein distance between two members equals
the Euclidean distance between their means
(Lemma~\ref{lem:mean-disp}), and the optimal transport map between
them is a translation. The distances used by the silhouette score and
DBI and the transport directions used by RWSD therefore depend only on
the means, whatever the common shape is. To check that the agreement
observed above does not rely on Gaussianity, we replace the Gaussian
shape by three alternatives and run each of them through the same
nine combinations of $K^\star$, $d$, and $m$. The first alternative has
heavy tails within each object: each object is a multivariate
Student-$t$ distribution,
\begin{equation*}
\text{Cluster } k
= \Big\{\, t_5\big(\boldsymbol{c}_k,\ \tfrac{3}{5}\cdot 0.15^2\,
\bI_{d\times d}\big),\
\boldsymbol{c}_k \sim N\big(\boldsymbol{a}_k,\ 0.25^2\, \bI_{d\times d}\big)
\,\Big\}, \qquad k = 1, \ldots, K^\star, \ldots,
\end{equation*}
where $t_5(\boldsymbol{\mu}, \Sigma)$ is the multivariate $t$
distribution with five degrees of freedom, location
$\boldsymbol{\mu}$, and scale matrix $\Sigma$; the scale is chosen so
that each coordinate has standard deviation $0.15$. The second
alternative is skewed: each object is its mean plus a centered
exponential displacement,
\begin{equation*}
\text{Cluster } k
= \Big\{\, \boldsymbol{c}_k + 0.15\,(\boldsymbol{E} - \boldsymbol{1}),\
E_1, \ldots, E_d \overset{iid}{\sim} \mathrm{Exp}(1),\
\boldsymbol{c}_k \sim N\big(\boldsymbol{a}_k,\ 0.25^2\, \bI_{d\times d}\big)
\,\Big\}, \qquad k = 1, \ldots, K^\star, \ldots,
\end{equation*}
again with coordinate standard deviation $0.15$. The third
alternative is bimodal: each object is a two-component Gaussian
mixture,
\begin{equation*}
\text{Cluster } k
= \Big\{\, \tfrac12\, N\big(\boldsymbol{c}_k - \delta\boldsymbol{e}_1,\
\sigma_0^2\, \bI_{d\times d}\big)
+ \tfrac12\, N\big(\boldsymbol{c}_k + \delta\boldsymbol{e}_1,\
\sigma_0^2\, \bI_{d\times d}\big),\
\boldsymbol{c}_k \sim N\big(\boldsymbol{a}_k,\ 0.25^2\, \bI_{d\times d}\big)
\,\Big\}, \qquad k = 1, \ldots, K^\star, \ldots,
\end{equation*}
with $(\delta, \sigma_0) = (0.15,\ 0.08)$. All configurations use
$n_k = 40$ objects per cluster and $20$ replications, with the same
center geometries, candidate sets, and clustering methods as in the
Gaussian combinations. As Table~\ref{tab:shapes-values} shows, all
three criteria select $K^\star$ in all twenty-seven
combinations.

The Gaussian results repeat in every family. Within a family, the
optimal RWSD value is nearly the same across the four center
geometries, for instance $.372$, $.371$, $.371$, and $.372$ for the
segment, triangle, square, and circle in the Student-$t_5$ family, and
it stays close to the Gaussian value of $.379$ in
Table~\ref{tab:regular-values}. In the dimension sweeps the optimal
values decrease slightly, as in the Gaussian case, without changing
any selection, and in the sweeps over $m$ they barely move. Heavy
tails, skewness, and bimodality inside the objects therefore leave the
selection unchanged. This differs from the heavy-tailed setting of
Section~\ref{sec:simulations}, where the heavy tails are in the
positions of the objects rather than inside them.

\begin{table}[!t]
    \centering
    \caption{Criterion profiles over the candidate sets for the nine combinations of each non-Gaussian shape family: means over $20$ replications with $\pm$ one standard deviation. Boldface marks the optimum of each criterion, which lies at $K^\star$ in every combination, every family, and each of the $20$ replications.}
    \label{tab:shapes-values}
    \resizebox{\textwidth}{!}{%
    \setlength{\tabcolsep}{3pt}
    \begin{tabular}{ccc|cccccccccc}
        \toprule
        & & & & \multicolumn{3}{c}{Student-$t_5$}
        & \multicolumn{3}{c}{Centered exp.}
        & \multicolumn{3}{c}{Gaussian mixture} \\
        \cmidrule(lr){5-7}\cmidrule(lr){8-10}\cmidrule(lr){11-13}
        $d$ & $m$ & $K^\star$ & $K$
        & Silh. & RWSD & DBI & Silh. & RWSD & DBI & Silh. & RWSD & DBI \\
        \midrule
        \multirow{3}{*}{2} & \multirow{3}{*}{50} & \multirow{3}{*}{2} & 2 & \textbf{.847 ($\pm$.009)} & \textbf{.372 ($\pm$.003)} & \textbf{.214 ($\pm$.012)} & \textbf{.850 ($\pm$.005)} & \textbf{.371 ($\pm$.002)} & \textbf{.210 ($\pm$.007)} & \textbf{.852 ($\pm$.009)} & \textbf{.378 ($\pm$.003)} & \textbf{.207 ($\pm$.013)} \\
         & &  & 3 & .588 ($\pm$.030) & .310 ($\pm$.010) & .809 ($\pm$.094) & .596 ($\pm$.025) & .311 ($\pm$.009) & .790 ($\pm$.092) & .600 ($\pm$.022) & .322 ($\pm$.009) & .770 ($\pm$.083) \\
         & &  & 4 & .390 ($\pm$.114) & .262 ($\pm$.029) & 1.039 ($\pm$.179) & .372 ($\pm$.087) & .258 ($\pm$.022) & 1.042 ($\pm$.144) & .364 ($\pm$.058) & .267 ($\pm$.016) & 1.047 ($\pm$.104) \\
        \midrule
        \multirow{3}{*}{2} & \multirow{3}{*}{50} & \multirow{3}{*}{3} & 2 & .561 ($\pm$.010) & .357 ($\pm$.003) & .705 ($\pm$.021) & .558 ($\pm$.034) & .353 ($\pm$.012) & .724 ($\pm$.110) & .564 ($\pm$.008) & .361 ($\pm$.004) & .704 ($\pm$.020) \\
         & &  & 3 & \textbf{.841 ($\pm$.008)} & \textbf{.371 ($\pm$.003)} & \textbf{.220 ($\pm$.011)} & \textbf{.844 ($\pm$.006)} & \textbf{.371 ($\pm$.002)} & \textbf{.215 ($\pm$.009)} & \textbf{.848 ($\pm$.006)} & \textbf{.378 ($\pm$.003)} & \textbf{.211 ($\pm$.010)} \\
         & &  & 4 & .672 ($\pm$.018) & .332 ($\pm$.008) & .642 ($\pm$.071) & .681 ($\pm$.015) & .332 ($\pm$.006) & .629 ($\pm$.066) & .682 ($\pm$.018) & .341 ($\pm$.008) & .625 ($\pm$.082) \\
        \midrule
        \multirow{5}{*}{2} & \multirow{5}{*}{50} & \multirow{5}{*}{4} & 2 & .501 ($\pm$.043) & .340 ($\pm$.004) & .982 ($\pm$.066) & .481 ($\pm$.054) & .340 ($\pm$.003) & .940 ($\pm$.096) & .474 ($\pm$.061) & .342 ($\pm$.006) & .944 ($\pm$.082) \\
         & &  & 3 & .634 ($\pm$.007) & .357 ($\pm$.002) & .547 ($\pm$.010) & .635 ($\pm$.005) & .357 ($\pm$.003) & .546 ($\pm$.009) & .639 ($\pm$.007) & .362 ($\pm$.003) & .542 ($\pm$.009) \\
         & &  & 4 & \textbf{.837 ($\pm$.007)} & \textbf{.371 ($\pm$.002)} & \textbf{.222 ($\pm$.010)} & \textbf{.842 ($\pm$.003)} & \textbf{.371 ($\pm$.002)} & \textbf{.215 ($\pm$.006)} & \textbf{.847 ($\pm$.006)} & \textbf{.378 ($\pm$.002)} & \textbf{.210 ($\pm$.008)} \\
         & &  & 5 & .707 ($\pm$.008) & .342 ($\pm$.004) & .571 ($\pm$.029) & .718 ($\pm$.010) & .344 ($\pm$.008) & .542 ($\pm$.078) & .720 ($\pm$.015) & .354 ($\pm$.008) & .533 ($\pm$.064) \\
         & &  & 6 & .587 ($\pm$.030) & .314 ($\pm$.008) & .793 ($\pm$.075) & .591 ($\pm$.019) & .314 ($\pm$.012) & .782 ($\pm$.091) & .599 ($\pm$.031) & .326 ($\pm$.009) & .761 ($\pm$.066) \\
        \midrule
        \multirow{5}{*}{2} & \multirow{5}{*}{50} & \multirow{5}{*}{5} & 2 & .459 ($\pm$.012) & .331 ($\pm$.006) & .894 ($\pm$.013) & .451 ($\pm$.020) & .326 ($\pm$.009) & .900 ($\pm$.012) & .455 ($\pm$.017) & .331 ($\pm$.007) & .898 ($\pm$.012) \\
         & &  & 3 & .553 ($\pm$.024) & .359 ($\pm$.004) & .684 ($\pm$.020) & .562 ($\pm$.013) & .360 ($\pm$.002) & .680 ($\pm$.010) & .543 ($\pm$.047) & .361 ($\pm$.009) & .683 ($\pm$.019) \\
         & &  & 4 & .661 ($\pm$.026) & .360 ($\pm$.005) & .470 ($\pm$.014) & .654 ($\pm$.031) & .357 ($\pm$.006) & .471 ($\pm$.010) & .668 ($\pm$.025) & .365 ($\pm$.005) & .465 ($\pm$.011) \\
         & &  & 5 & \textbf{.833 ($\pm$.006)} & \textbf{.372 ($\pm$.002)} & \textbf{.226 ($\pm$.008)} & \textbf{.838 ($\pm$.004)} & \textbf{.371 ($\pm$.002)} & \textbf{.219 ($\pm$.007)} & \textbf{.841 ($\pm$.005)} & \textbf{.378 ($\pm$.002)} & \textbf{.216 ($\pm$.008)} \\
         & &  & 6 & .731 ($\pm$.007) & .347 ($\pm$.004) & .531 ($\pm$.039) & .732 ($\pm$.008) & .345 ($\pm$.004) & .531 ($\pm$.040) & .741 ($\pm$.010) & .356 ($\pm$.005) & .506 ($\pm$.040) \\
        \midrule
        \multirow{3}{*}{3} & \multirow{3}{*}{50} & \multirow{3}{*}{3} & 2 & .559 ($\pm$.006) & .336 ($\pm$.003) & .714 ($\pm$.014) & .558 ($\pm$.006) & .334 ($\pm$.003) & .716 ($\pm$.013) & .564 ($\pm$.006) & .337 ($\pm$.003) & .704 ($\pm$.013) \\
         & &  & 3 & \textbf{.840 ($\pm$.006)} & \textbf{.349 ($\pm$.002)} & \textbf{.226 ($\pm$.008)} & \textbf{.840 ($\pm$.005)} & \textbf{.347 ($\pm$.002)} & \textbf{.223 ($\pm$.009)} & \textbf{.843 ($\pm$.007)} & \textbf{.348 ($\pm$.003)} & \textbf{.220 ($\pm$.010)} \\
         & &  & 4 & .634 ($\pm$.014) & .293 ($\pm$.008) & .864 ($\pm$.098) & .638 ($\pm$.015) & .294 ($\pm$.007) & .826 ($\pm$.070) & .648 ($\pm$.014) & .302 ($\pm$.014) & .763 ($\pm$.122) \\
        \midrule
        \multirow{3}{*}{5} & \multirow{3}{*}{50} & \multirow{3}{*}{3} & 2 & .556 ($\pm$.007) & .320 ($\pm$.003) & .722 ($\pm$.015) & .556 ($\pm$.006) & .317 ($\pm$.003) & .720 ($\pm$.014) & .557 ($\pm$.005) & .315 ($\pm$.003) & .720 ($\pm$.010) \\
         & &  & 3 & \textbf{.833 ($\pm$.006)} & \textbf{.331 ($\pm$.002)} & \textbf{.237 ($\pm$.008)} & \textbf{.835 ($\pm$.005)} & \textbf{.327 ($\pm$.002)} & \textbf{.233 ($\pm$.008)} & \textbf{.838 ($\pm$.005)} & \textbf{.323 ($\pm$.003)} & \textbf{.228 ($\pm$.007)} \\
         & &  & 4 & .611 ($\pm$.012) & .271 ($\pm$.023) & .994 ($\pm$.212) & .609 ($\pm$.011) & .260 ($\pm$.007) & 1.068 ($\pm$.080) & .618 ($\pm$.016) & .266 ($\pm$.016) & .962 ($\pm$.173) \\
        \midrule
        \multirow{3}{*}{10} & \multirow{3}{*}{50} & \multirow{3}{*}{3} & 2 & .549 ($\pm$.004) & .304 ($\pm$.001) & .739 ($\pm$.011) & .551 ($\pm$.003) & .304 ($\pm$.001) & .733 ($\pm$.008) & .555 ($\pm$.003) & .299 ($\pm$.001) & .724 ($\pm$.008) \\
         & &  & 3 & \textbf{.826 ($\pm$.003)} & \textbf{.316 ($\pm$.001)} & \textbf{.254 ($\pm$.004)} & \textbf{.828 ($\pm$.002)} & \textbf{.313 ($\pm$.001)} & \textbf{.246 ($\pm$.004)} & \textbf{.836 ($\pm$.003)} & \textbf{.304 ($\pm$.001)} & \textbf{.232 ($\pm$.005)} \\
         & &  & 4 & .581 ($\pm$.008) & .242 ($\pm$.013) & 1.354 ($\pm$.202) & .583 ($\pm$.009) & .240 ($\pm$.010) & 1.323 ($\pm$.159) & .590 ($\pm$.012) & .233 ($\pm$.010) & 1.290 ($\pm$.138) \\
        \midrule
        \multirow{3}{*}{2} & \multirow{3}{*}{100} & \multirow{3}{*}{3} & 2 & .565 ($\pm$.007) & .359 ($\pm$.004) & .704 ($\pm$.017) & .564 ($\pm$.010) & .357 ($\pm$.003) & .701 ($\pm$.020) & .564 ($\pm$.008) & .363 ($\pm$.004) & .703 ($\pm$.017) \\
         & &  & 3 & \textbf{.847 ($\pm$.010)} & \textbf{.374 ($\pm$.002)} & \textbf{.213 ($\pm$.014)} & \textbf{.845 ($\pm$.008)} & \textbf{.373 ($\pm$.003)} & \textbf{.215 ($\pm$.010)} & \textbf{.845 ($\pm$.007)} & \textbf{.380 ($\pm$.002)} & \textbf{.213 ($\pm$.010)} \\
         & &  & 4 & .678 ($\pm$.015) & .336 ($\pm$.007) & .636 ($\pm$.068) & .674 ($\pm$.018) & .336 ($\pm$.007) & .656 ($\pm$.067) & .674 ($\pm$.014) & .342 ($\pm$.009) & .645 ($\pm$.070) \\
        \midrule
        \multirow{3}{*}{2} & \multirow{3}{*}{150} & \multirow{3}{*}{3} & 2 & .565 ($\pm$.010) & .358 ($\pm$.004) & .703 ($\pm$.022) & .564 ($\pm$.010) & .360 ($\pm$.003) & .702 ($\pm$.023) & .563 ($\pm$.007) & .363 ($\pm$.003) & .705 ($\pm$.015) \\
         & &  & 3 & \textbf{.847 ($\pm$.007)} & \textbf{.374 ($\pm$.003)} & \textbf{.213 ($\pm$.009)} & \textbf{.846 ($\pm$.006)} & \textbf{.375 ($\pm$.002)} & \textbf{.214 ($\pm$.009)} & \textbf{.845 ($\pm$.006)} & \textbf{.381 ($\pm$.003)} & \textbf{.214 ($\pm$.009)} \\
         & &  & 4 & .671 ($\pm$.012) & .335 ($\pm$.004) & .670 ($\pm$.041) & .672 ($\pm$.016) & .338 ($\pm$.007) & .653 ($\pm$.063) & .676 ($\pm$.017) & .346 ($\pm$.007) & .637 ($\pm$.073) \\
        \bottomrule
    \end{tabular}}
\end{table}

\subsection{Failure settings}\label{app:failure}
The machinery of WSD is to detect a center-outward ordering in Wasserstein geometry (geodesics). If a cluster is not geodesically convex, WSD may fail. Therefore, RWSD does not work well in the case where a cluster is geodesically nonconvex. We give two case where RWSD fails.  Setting I gives a location family where the centers are distributed as a hollow. Setting II gives a centered Gaussian family with covariance matrices distributed as a ``hollow''. 
%RWSD rewards partitions in which members lie deep inside their own clusters, so it relies on each cluster having a depth core, a group of members near the center of the cluster. The two settings below examine what happens when this core is missing. In both, the parameters of the members of each cluster lie on a circle, so every member sits on the boundary of its cluster and none lies near its center; we call such clusters \emph{hollow}. The circle is placed in the space of means in Setting~I and in the space of covariance matrices in Setting~II. The clusters remain balanced, equally shaped, and well separated, which is the situation in which the silhouette score and DBI work well. 
In both settings we use $K^\star = 3$, $n_k = 30$, and $m = 50$. We employ free-support
Wasserstein $K$-means over $K \in \{2, \ldots, 6\}$ with the same pipeline and hyperparameters as in Section~\ref{sec:simulations}, and run $20$ independent replica in each setting.

\smallskip
\noindent\textit{Setting I: hollow rings in location space.}
The true clusters are
\begin{equation*}
\text{Cluster } k
= \Big\{\, N\big(\boldsymbol{c},\ 0.08^2\, \bI_{2\times 2}\big),\
\boldsymbol{c} = \boldsymbol{a}_k
+ r\,(\cos\theta,\ \sin\theta)^\top,\
\theta \sim \mathrm{Unif}[0, 2\pi), \ r = 1.4 \,\Big\}
\end{equation*}
for $k = 1, 2, 3$, where $\boldsymbol{a}_1, \boldsymbol{a}_2,
\boldsymbol{a}_3$ are the vertices of an equilateral triangle with side $L = 4$. The means of the members of cluster $k$ are therefore drawn uniformly from the circle of radius $1.4$ centered at
$\boldsymbol{a}_k$.

%The resulting depths can be computed exactly. All members have the same shape and differ only in their means, so the optimal transport map between two members is a translation, and each unit transport direction is the unit vector between their means. The depth of a member is therefore the spatial depth of its mean among the means of the cluster. For a point on a circle, the unit vectors to points spread uniformly over the circle average to a vector of length $2/\pi$, so every member has own-cluster depth $1 - 2/\pi$. In a cluster with a core, by contrast, the central members have depth close to one. At the true partition RWSD therefore has no central members to reward, and its value $.307$ at $K = 3$ is roughly this depth minus the members' largest depth in another cluster, which is small. Splitting a ring into short arcs changes the picture: each arc is close to a line segment, and its middle members are deep within the arc, so partitions with more clusters tend to score higher. 
Since the clusters separate well, and silhouette score and DBI depend largely on distances. Therefore they perform well in these two settings, as shown in  Table~\ref{tab:failure-rings}. 
The RWSD values, in contrast, vary a little across different $K$, and pick the wrong $K$.

\begin{table}[H]
    \centering
    \caption{Failure setting I (hollow rings in location space,
    $K^\star = 3$): mean criterion values over $20$ replications with
    $\pm$ one standard deviation; boldface marks the optimum of each criterion. The silhouette score and DBI recover $K^\star = 3$, while RWSD selects $K^\star = 6$.}
    \label{tab:failure-rings}
    {\small
    \setlength{\tabcolsep}{3pt}
    \begin{tabular}{lccc}
        \toprule
        $K$ & Silhouette & RWSD & DBI \\
        \midrule
        2 & .383 ($\pm$.032) & .304 ($\pm$.018) & 1.039 ($\pm$.120) \\
        3 & \textbf{.510 ($\pm$.024)} & .307 ($\pm$.007) & \textbf{.701 ($\pm$.033)} \\
        4 & .446 ($\pm$.035) & .301 ($\pm$.018) & .810 ($\pm$.071) \\
        5 & .437 ($\pm$.029) & .313 ($\pm$.010) & .835 ($\pm$.060) \\
        6 & .453 ($\pm$.027) & \textbf{.323 ($\pm$.010)} & .799 ($\pm$.080) \\
        \midrule
        $K^\star$ & 3 & 6 & 3 \\
        \bottomrule
    \end{tabular}
    }
\end{table}

% \smallskip
% \noindent\textit{Setting II: hollow rings in covariance space.}
% Here each member is a centered Gaussian with diagonal covariance, and
% the standard deviations $(a, b)$ of the members of a cluster lie on a
% circle:
% \begin{equation*}
% \text{Cluster } k
% = \Big\{\, N\big(\boldsymbol{0},\ \Sigma(a, b)\big),\
% (a, b) = \boldsymbol{\rho}_k + r\,(\cos\varphi,\ \sin\varphi)^\top,\
% \varphi \sim \mathrm{Unif}[0, 2\pi),\ r = 1.4 \,\Big\}
% \end{equation*}
% for $k = 1, 2, 3$, where $\Sigma(a, b) := \mathrm{diag}(a^2, b^2)$, and
% $\boldsymbol{\rho}_1 = (3.2,\, 3.2)$,
% $\boldsymbol{\rho}_2 = (7.2,\, 3.2)$, and
% $\boldsymbol{\rho}_3 = (5.2,\, 3.2 + 2\sqrt{3})$ form an equilateral
% triangle with side $L = 4$ in the $(a, b)$ plane. For such Gaussians
% the optimal transport map between two members is linear and diagonal,
% and the Wasserstein distance equals the Euclidean distance between
% their $(a, b)$ parameters \citep{gelbrich1990formula}. A short
% computation with these maps shows that the depth of a member equals
% the spatial depth of its $(a, b)$ parameters among those of the
% cluster, so the computation of Setting~I applies unchanged: each
% cluster is again a hollow ring, now in covariance space, and RWSD has
% no central members to reward. Table~\ref{tab:failure-cov} shows the
% outcome. The silhouette score and DBI recover $K^\star = 3$, while the
% RWSD profile is nearly flat between $K = 2$ and $K = 3$ and peaks at
% $K^\star = 2$, which merges two clusters.

\smallskip
\noindent\textit{Setting II: hollow rings in covariance space.}
The true clusters are
\begin{equation*}
\text{Cluster } k
= \Big\{\, N\big(\boldsymbol{0},\ \Sigma(a, b)\big),\
(a, b) = \boldsymbol{\rho}_k + r\,(\cos\varphi,\ \sin\varphi)^\top,\
\varphi \sim \mathrm{Unif}[0, 2\pi),\ r = 1.4 \,\Big\}
\end{equation*}
for $k = 1, 2, 3$, where $\Sigma(a, b) := \mathrm{diag}(a^2, b^2)$, and
$\boldsymbol{\rho}_1 = (3.2,\, 3.2)$, $\boldsymbol{\rho}_2 = (7.2,\, 3.2)$,
and $\boldsymbol{\rho}_3 = (5.2,\, 3.2 + 2\sqrt{3})$ are the vertices of
an equilateral triangle with side $L = 4$ in the $(a, b)$ plane. The
standard deviations $(a, b)$ of the members of cluster $k$ are
therefore drawn uniformly from the circle of radius $1.4$ centered at
$\boldsymbol{\rho}_k$. For centered Gaussians, the Wasserstein distance is the
Bures--Wasserstein distance between the covariance matrices
\citep{gelbrich1990formula, bhatia2019bures}:
$
\mathcal{W}_2^2\big(N(\boldsymbol{0}, \Sigma_1),\, N(\boldsymbol{0}, \Sigma_2)\big)
= \operatorname{tr}(\Sigma_1) + \operatorname{tr}(\Sigma_2)
- 2\operatorname{tr}\big( (\Sigma_1^{1/2} \Sigma_2 \Sigma_1^{1/2})^{1/2} \big),
$
In our case: $\Sigma_1 = \Sigma(a, b)$ and $\Sigma_2 = \Sigma(a', b')$
reduces to
\[
\mathcal{W}_2^2\big(N(\boldsymbol{0}, \Sigma(a, b)),\, N(\boldsymbol{0}, \Sigma(a', b'))\big)
= (a - a')^2 + (b - b')^2 .
\]
Each cluster therefore forms a hollow ring in Wasserstein space, as in
Setting~I.

As shown in Table~\ref{tab:failure-cov}, the silhouette score and DBI
again recover $K^\star = 3$. The RWSD values, in contrast, barely
change between $K = 2$ and $K = 3$, and RWSD picks the wrong $K = 2$,
which merges two clusters.

\begin{table}[H]
    \centering
    \caption{Failure setting II (hollow rings in covariance space,
    $K^\star = 3$): mean criterion values over $20$ replications with
    $\pm$ one standard deviation; boldface marks the optimum of each criterion. The silhouette score and DBI recover $K^\star = 3$, while RWSD selects $K^\star = 2$.}
    \label{tab:failure-cov}
    {\small
    \setlength{\tabcolsep}{3pt}
    \begin{tabular}{lccc}
        \toprule
        $K$ & Silhouette & RWSD & DBI \\
        \midrule
        2 & .229 ($\pm$.026) & \textbf{.171 ($\pm$.014)} & 1.548 ($\pm$.093) \\
        3 & \textbf{.262 ($\pm$.031)} & .170 ($\pm$.015) & \textbf{1.295 ($\pm$.100)} \\
        4 & .206 ($\pm$.027) & .126 ($\pm$.019) & 1.587 ($\pm$.087) \\
        5 & .164 ($\pm$.023) & .106 ($\pm$.015) & 1.626 ($\pm$.085) \\
        6 & .147 ($\pm$.020) & .088 ($\pm$.012) & 1.706 ($\pm$.099) \\
        \midrule
        $K^\star$ & 3 & 2 & 3 \\
        \bottomrule
    \end{tabular}
    }
\end{table}

\smallskip
\noindent In both settings, no distribution lies near the center of
its own cluster, so every distribution has a small depth within its
own cluster. When implementing RWSD, a practitioner is encouraged to check whether each distribution has a small WSD value within its own cluster. If every depth is small, the $K$ selected by RWSD should be treated with caution, and a distance-based criterion such as the silhouette score may be more reliable.

\section{Additional details in real data analysis}\label{app:experiment-details}
This section collects supplementary details for the real data analysis in Section~\ref{sec:real-data}, including an illustration of the MNIST distributional embedding and the full preprocessing 
pipeline for the AML flow cytometry cohort.

\subsection{Illustration of MNIST distributional embedding}\label{app:mnist-illustration}

Each grayscale MNIST handwritten digit image of size $28\times28$ is treated as a discrete empirical measure over pixel coordinates with intensities normalized to sum to one,
\[
\mu_i = \sum_{p=1}^{784} w_{ip}\,\delta_{x_p},
\qquad
w_{ip} = \frac{I_{ip}}{\sum_{q=1}^{784} I_{iq}},
\]
where $x_p \in [0,1]^2$ is the normalized pixel position 
and $I_{ip}$ the grayscale intensity at pixel $p$ in image $i$.  
This transforms each image into a probability distribution 
$\mu_i \in \mathcal{P}_2([0,1]^2)$. 

\subsection{AML cytometry preprocessing details}\label{app:aml-preprocessing}

We provide the full preprocessing pipeline for the AML cytometry experiment in Section~\ref{subsec:aml-cytof}.
The P1 panel of eight markers (CD7, CD56, CD34, CD117, CD33, HLA-DR, CD13, CD45) are included in our analysis, which capture progenitor signatures, myeloid lineage, and AML-aberrant antigen expression. 
Each BLAST110 sample contains hundreds of thousands of single-cell 
flow cytometry profiles. %Because the dataset distributes pre-cleaned P1 files (with compensation and basic gating performed in the official workflow), our preprocessing incorporated only marker-level transformations. 
Fluorescence intensities were first transformed using 
the hyperbolic arcsine with cofactor $c = 5$,
\[
x^{\mathrm{trans}} = \mathrm{arcsinh}(x / c),
\]
followed by per-marker $z$-score normalization computed across the pooled cells of all samples. This stabilizes variance across fluorescence channels and ensures comparability among samples.
To embed each sample into Wasserstein space, we represented its single-cell data as an empirical distribution,
\[
\mu_i = \frac{1}{n_i} \sum_{j=1}^{n_i} \delta_{x_j}.
\]
%Direct computation of $\mathcal{W}_2$ on full-resolution data is computationally prohibitive due to the large number of cells per sample. 
To make the computations feasible, we randomly selected $m = 1000$ cells from each sample without replacement and 
constructed the empirical distribution
\[
\mu_i^{(m)} = \frac{1}{m} \sum_{j=1}^{m} \delta_{x_j^{(i)}}.
\]
Subsampling at this scale has been shown to preserve major phenotypic structures in 
high-dimensional cytometry data while greatly reducing the computational cost of OT step \citep{samusik2016automated,levine2015data}.

\section{Choices of comparison}
\label{app:baseline-choice}

We compare RWSD with the silhouette score and the Davies and Bouldin index because both are standard internal validation criteria and can be directly computed from the Wasserstein clustering results or pairwise Wasserstein distance matrix. 
This facilitates fair comparisons because the three indices are computed from the same Wasserstein clustering results. 
Other criteria are either computational expensive in Wasserstein space, or not directly comparable to our method. 
The gap statistic requires a reference null distribution, which is difficult to choose in Wasserstein space.
Stability based criteria (such as prediction strength) require
repeated sample splitting, re-clustering, and recomputation of OT distances, which is prohibitive for large set of distributions. 
We therefore focus on baselines that can be implemented without introducing additional parameters or complicated clustering pipeline.

%\section{Computational complexity, resources, and asset licenses}
\section{Implementation Details}
\label{app:compute_license}

\subsection{Computational complexity}
\label{app:complexity}

We summarize the computational cost of the main steps of our method.
Throughout, $n$ is the number of empirical distributions to be
clustered, each supported on $m$ points in $\mathbb{R}^d$; $K$ is the
number of clusters; and $|\mathcal{K}|$ is the number of candidate
values of $K$ in Algorithm~\ref{alg:selectK}. For each candidate $K$,
Algorithm~\ref{alg:selectK} first runs Wasserstein $K$-centers, which
we call the clustering stage, and then computes RWSD for the resulting
partition, which we call the validation stage. Our implementation uses
two kinds of optimal transport (OT) computations. The empirical WSD and
RWSD use exact OT, computed with \texttt{ot.emd} from the Python
Optimal Transport library \citep{flamary2021pot}, which solves the
discrete transport linear program by the network simplex method. For
speed, the pairwise distance matrix, the PAM warm start, the
Wasserstein $K$-centers, and the DBI use entropic OT or the Sinkhorn
divergence instead \citep{cuturi2013sinkhorn,cuturi2014fast}.

\smallskip
\noindent\textit{Exact OT for the empirical WSD and RWSD.}
Let $C_{\mathrm{EMD}}(m)$ be the cost of solving an exact OT problem
between two empirical measures with $m$ support points each. For each
pair $(\hat\mu_{q,m},\hat\mu_{i,m})$, the empirical WSD first computes
the optimal plan $\pi_{q,i}$ with \texttt{ot.emd} and then forms the
transport directions from $\pi_{q,i}$, at cost $\mathcal{O}(m^2d)$ per
pair. Over all pairs of the $n$ distributions, computing the transport
directions therefore costs, in the worst case,
\begin{equation*}
    \mathcal{O}\!\left(n^2 C_{\mathrm{EMD}}(m) + n^2m^2d\right).
\end{equation*}

\smallskip
\noindent\textit{Entropic OT for the pairwise distance matrix.}
Let $C_{\mathrm{Sink}}(m,\varepsilon)$ be the cost of computing the
entropic OT value between two empirical measures with $m$ support
points each at regularization level $\varepsilon$. Following the
complexity bounds of \citet{dvurechensky2018computational}, we take
\begin{equation*}
    C_{\mathrm{Sink}}(m,\varepsilon)
    = \mathcal{O}\!\left(\frac{m^2}{\varepsilon^2}\right),
\end{equation*}
up to constants that depend on the target accuracy. The silhouette
score, the MDS visualization, and the PAM warm start use the same
pairwise distance matrix, whose $(q,i)$ entry is the square root of the
Sinkhorn divergence between $\hat\mu_{q,m}$ and $\hat\mu_{i,m}$. Each
Sinkhorn divergence requires three entropic OT computations, so the
full matrix costs
\begin{equation*}
    \mathcal{O}\!\left(n^2 C_{\mathrm{Sink}}(m,\varepsilon)\right).
\end{equation*}
The distance matrix depends only on the data and on $\varepsilon$, not
on $K$, so the distance matrix is computed once for all candidate
values of $K$.

\smallskip
\noindent\textit{RWSD for a fixed partition.}
For a fixed partition into $K$ clusters, RWSD computes the WSD of each
empirical distribution with respect to its own cluster and to every
other cluster. In the worst case, the WSD values require exact OT
between all ordered pairs of distributions, which costs
\begin{equation*}
    \mathcal{O}\!\left(n^2 C_{\mathrm{EMD}}(m) + n^2m^2d\right).
\end{equation*}
Comparing the depth of each distribution in its own cluster with its
largest depth in the other clusters adds only $\mathcal{O}(nK)$
operations.

\smallskip
\noindent\textit{Storing exact OT plans across candidate values of $K$.}
The optimal plan $\pi_{q,i}$ depends only on the two distributions
$\hat\mu_{q,m}$ and $\hat\mu_{i,m}$, not on $K$ or on the cluster
labels. Moreover, $\pi_{i,q}=\pi_{q,i}^{\top}$, so one exact solve gives
the plans for both $(q,i)$ and $(i,q)$. The validation stage visits all
$n(n-1)$ ordered pairs for every candidate $K$, therefore recomputing the
plans requires $|\mathcal{K}|\,n(n-1)$ exact solves in total. Solving
each unordered pair once and storing the plan reduces the number of
solves to $n(n-1)/2$, a reduction by a factor of $2|\mathcal{K}|$.
Every later depth evaluation then rebuilds the transport directions
from the stored plan at cost $\mathcal{O}(md)$ per ordered pair, so the
cost of exact OT in the validation stage becomes
\begin{equation*}
    \mathcal{O}\!\left(n^2 C_{\mathrm{EMD}}(m)+n^2m^2d\right)
    \;+\;|\mathcal{K}|\,\mathcal{O}\!\left(n^2md\right)
\end{equation*}
instead of $|\mathcal{K}|\,\mathcal{O}(n^2 C_{\mathrm{EMD}}(m)+n^2m^2d)$.
The dominant term is paid once rather than $|\mathcal{K}|$ times, and
the remaining cost per candidate $K$ is smaller by a factor of $m$.
Because the stored plans are the exact solutions, the RWSD values with
and without storage agree up to floating-point rounding, so the
selected $K$ is the same. Storing the plans helps only the validation
stage: the clustering stage solves transport problems between
distributions and centers, and the centers change across iterations and
across candidate values of $K$.

Each stored plan needs only $\mathcal{O}(m)$ memory, although the plan
is an $m\times m$ matrix. When all atoms have the same weight $1/m$, as
in our simulations and the cytometry data, the feasible transport plans
form $m^{-1}$ times the Birkhoff polytope, whose vertices are the
permutation matrices. The network simplex method returns a vertex, so
$\pi_{q,i}=m^{-1}P_{\sigma_{q,i}}$ for a permutation $\sigma_{q,i}$,
where $P_{\sigma}$ denotes the permutation matrix of $\sigma$. Each plan
is therefore a one-to-one matching of the atoms and is stored exactly
as $m$ integers. We verified that every plan in the simulations and the
cytometry data is a scaled permutation matrix. When the weights are
unequal, as for the MNIST images, the exact plan is still a vertex of
the transportation polytope and has at most $2m-1$ nonzero entries, so
the plan is stored as a sparse matrix with $\mathcal{O}(m)$ entries. In
both cases, the stored plans take $\mathcal{O}(n^2m)$ memory in total,
compared with $\mathcal{O}(n^2)$ for the distance matrix. The memory
needed while solving a single OT problem is the same with and without
storing the plans.

\smallskip
\noindent\textit{Measured effect and feasibility.}
We measured the effect of storing the plans on a regular design with
four clusters, $n=120$, $m=100$, $d=2$, and
$\mathcal{K}=\{2,\dots,6\}$, on ten CPU cores. Storing the plans
reduces the number of exact solves from $71\,400$ to $7\,140$, exactly
the factor $2|\mathcal{K}|=10$, and the time of the validation stage
from $17.1$ to $2.1$ seconds: $1.7$ seconds to compute the plans once
and $0.4$ seconds for the five passes that reuse the stored plans. The
time per solve is nearly the same in the two runs ($0.240$ and $0.244$
milliseconds), so the speed-up comes from the smaller number of solves
and not from differences in implementation. The two RWSD curves agree
to within $2\times10^{-10}$, and both select $K=4$. The memory grows
from $112$ KB for the distance matrix to $2.7$ MB for the stored
permutations.
We also measured the time to compute all plans once at the sizes of
our real data (Table~\ref{tab:ot_cache_scale}). For the MNIST images on
the common $28\times28$ grid with intensity weights ($n=90$, $m=784$),
the $4\,005$ exact solves take $50$ seconds on ten cores, at $100$
milliseconds per solve, and the sparse plans take about $100$ MB. For
the cytometry samples ($n=50$, $m=1000$, $d=8$, equal weights), the
$1\,225$ solves take $20$ seconds, at $162$ milliseconds per solve;
every plan is a permutation, and the stored plans take $4.7$ MB. For
the MNIST setting with $n=150$, scaling the measurements at $n=90$ by
the number of pairs gives about $140$ seconds and $280$ MB. In every
case, computing the plans once costs far less than recomputing the
plans for each candidate $K$.

Storing the plans has two limits. The first limit is memory. With a
memory budget of $B$ bytes, the stored plans fit when $2n^2m\le B$ with
equal weights (permutations stored with four bytes per entry), and
roughly when $16n^2m\le B$ with unequal weights (sparse plans stored
with sixteen bytes per nonzero entry). With the $5$ GB of memory per
CPU in Table~\ref{tab:computational_cost}, the two conditions allow
$n\lesssim1850$ with equal weights and $n\lesssim650$ with unequal
weights at $m=784$, and $n\lesssim1640$ with equal weights at $m=1000$.
The second limit is time. The measured time per solve grows roughly
like $m^{2.8}$ across our settings, from $0.24$ milliseconds at $m=100$
to $100$--$162$ milliseconds at $m=784$--$1000$, in line with the
superquadratic complexity of the network simplex method. For very large
$m$, computing the plans even once becomes the dominant cost, and
entropic OT is then preferable. Both limits are comfortably met in all
our experiments. Storing the exact plans is therefore a useful optional
speed-up of the validation stage. By default, we recompute the plans
and keep only the distance matrix, which needs less memory.

\begin{table}[t]
\caption{Memory of the stored exact OT plans and time to compute all plans once on ten CPU cores, at the sizes of our simulations and real data. All entries are measured except the row with $n=150$, which is scaled from the measurements at $n=90$.}
\label{tab:ot_cache_scale}
\centering
\begin{tabular}{|l|c|c|c|c|c|c|}
\hline
setting & $n$ & $m$ & $d$ & stored plans & memory & time \\ \hline
simulation      & $120$ & $100$  & $2$ & permutations & $2.7$ MB  & $1.7$ s \\ \hline
MNIST, 3 digits & $90$  & $784$  & $2$ & sparse plans & $100$ MB  & $50$ s \\ \hline
MNIST, 5 digits & $150$ & $784$  & $2$ & sparse plans & $280$ MB  & $140$ s \\ \hline
flow cytometry  & $50$  & $1000$ & $8$ & permutations & $4.7$ MB  & $20$ s \\ \hline
\end{tabular}
\end{table}

\smallskip
\noindent\textit{Wasserstein $K$-centers.}
Our Wasserstein $K$-means and $K$-medians use semi-free support centers
with a moderate number $s$ of support points, typically $s\ll m$. Each
assignment step computes the entropic OT cost between each of the $n$
empirical distributions and each of the $K$ current centers, which
costs
\begin{equation*}
    \mathcal{O}\!\left(nK\,C_{\mathrm{Sink}}(s,m,\varepsilon)\right),
\end{equation*}
where $C_{\mathrm{Sink}}(s,m,\varepsilon)$ is the Sinkhorn cost between
a center with $s$ support points and an empirical distribution with $m$
support points. Each center update runs a semi-free support entropic
barycenter step \citep{cuturi2014fast}, followed, for $K$-medians, by
IRLS iterations for the Wasserstein median
(Appendix~\ref{app:kmedian}). Let $T_{\mathrm{bar}}$ be the number of
inner barycenter iterations and $T_{\mathrm{IRLS}}$ the number of outer
IRLS iterations. Since the cluster sizes sum to $n$, one update of all
$K$ centers costs
\[
\mathrm{Update}_{K\text{-median}}
= \mathcal{O}\!\left(T_{\mathrm{IRLS}}\, T_{\mathrm{bar}}\, n\,
C_{\mathrm{Sink}}(s,m,\varepsilon)\right),
\]
and one update for $K$-means costs the same with $T_{\mathrm{IRLS}}=1$.
When used, the PAM warm start reuses the precomputed distance matrix
and adds $\mathcal{O}(n^2K)$ operations for the greedy medoid search.

\smallskip
\noindent\textit{Silhouette score and DBI.}
The silhouette score uses the precomputed distance matrix and needs no
further OT computation. The DBI uses the centers from the clustering
stage as cluster prototypes: barycenters for $K$-means and medians for
$K$-medians. The within-cluster scatter of the DBI costs
\begin{equation*}
    \mathcal{O}\!\left(n\, C_{\mathrm{Sink}}(s,m,\varepsilon)\right),
\end{equation*}
and the distances between the $K$ prototypes cost
\begin{equation*}
    \mathcal{O}\!\left(K^2 C_{\mathrm{Sink}}(s,s,\varepsilon)\right).
\end{equation*}

\smallskip
\noindent\textit{Total cost of Algorithm~\ref{alg:selectK}.}
Let $T$ be the average number of clustering iterations until
convergence. In the worst case, the total cost over all candidate
values of $K$ is
\begin{equation*}
\begin{aligned}
&
\underbrace{
\mathcal{O}\!\left(n^2 C_{\mathrm{Sink}}(m,\varepsilon)\right)
}_{\text{pairwise distance matrix}}
\\
&\quad
+
|\mathcal{K}|\,T\,
\underbrace{
\mathcal{O}\!\left(nK C_{\mathrm{Sink}}(s,m,\varepsilon)
+
\mathrm{Update}_{K\text{-median}}\right)
}_{\text{clustering with semi-free support centers}}
\\
&\quad
+
\underbrace{
\begin{cases}
|\mathcal{K}|\,
\mathcal{O}\!\left(n^2 C_{\mathrm{EMD}}(m)+n^2m^2d\right),
& \text{recomputing the exact plans,}\\[4pt]
\mathcal{O}\!\left(n^2 C_{\mathrm{EMD}}(m)+n^2m^2d\right)
+|\mathcal{K}|\,\mathcal{O}\!\left(n^2md\right),
& \text{storing the exact plans,}
\end{cases}
}_{\text{exact OT and transport directions for RWSD}}
\end{aligned}
\end{equation*}
and the memory kept across candidate values of $K$ is
$\mathcal{O}(n^2)$ when the exact plans are recomputed and
$\mathcal{O}(n^2m)$ when the exact plans are stored.

In practice, the most expensive steps are the pairwise Wasserstein
distances, the barycenter updates, and the median updates. To reduce
the cost of the three steps, we limit the barycenters and medians to a
moderate number of support points, use entropic OT, and compute the
medians by the IRLS procedure of Appendix~\ref{app:kmedian}. The cost
grows with the number of distributions, the number of support points
per distribution, and the number of candidate values of $K$.
Table~\ref{tab:computational_cost} reports the average runtime and
memory of our experiments. The experiments ran in parallel on $24$
CPUs, and each reported runtime is the wall-clock time of one
experiment.
\begin{table}[t]
\caption{Average runtime and memory of our experiments. Runtimes are wall-clock times with parallel computation on $24$ CPUs.}
\label{tab:computational_cost}
\centering
\begin{tabular}{|l|l|l|}
\hline
                          & simulation & real data analysis \\ \hline
runtime (per experiment)  & 1 hour     & 4 hours            \\ \hline
memory (per CPU)          & 5 GB       & 5 GB               \\ \hline
\end{tabular}
\end{table}

\subsection{Computational resources}\label{app:resources}
All simulations and real data analysis in this paper can be reproduced on CPUs only, and no GPU or TPU acceleration is required.

These simulations can be run on a common laptop (\textit{e.g.} Macbook Air). For reproducing the full set of synthetic experiments, we recommend a machine with $10$ CPUs and at least $16$ GB of memory. %More memory may be useful when computing and storing pairwise Wasserstein distance matrices for large $K$.

The MNIST experiments represent each $28\times 28$ grayscale image as a discrete probability distribution over pixel coordinates. Since each empirical distribution has support size at most $784$, the MNIST experiments can also be reproduced on CPU only. We recommend at least $16$ GB of memory for reproducing the MNIST clustering and validation results.

The AML flow cytometry experiment uses only the BLAST110 cohort from the open flow cytometry dataset of \citet{mocking2024computational}. %We do not use the LAIP29 cohort. Reproducing the AML experiment from raw files requires more storage than the synthetic and MNIST experiments. 
The raw BLAST110 archive is approximately $14.9$ GB. We recommend at least $25$ GB of free disk space for downloading and unpacking the raw data. We also recommend at least $64$ GB of memory if the raw FCS files are loaded or preprocessed in 
large batches.

\subsection{Software and data licenses}
\label{app:assets}

The implementation uses standard open source scientific computing packages. The analysis in this paper are conducted using Python.
 
No proprietary software is required to reproduce the reported experiments. Table~\ref{tab:asset_licenses} summarizes the existing datasets and software used in our experiments, together with the corresponding license or access terms.

\begin{table}[h]
\caption{Existing datasets and software used in the experiments. } %We credit the original creators and follow the corresponding license or access terms.
\label{tab:asset_licenses}
%\vspace{2mm}
\centering
\small
\begin{tabular}{p{0.22\linewidth}p{0.36\linewidth}p{0.32\linewidth}}
\toprule
Asset & Use in this paper & License or terms \\
\midrule
Synthetic data & Generated by the authors for controlled simulation studies. & No external data license applies. \\
MNIST handwritten digit database & Used as a real distribution-valued image dataset by normalizing each image into a discrete probability distribution over pixel coordinates. & Publicly available benchmark dataset used for research evaluation. We credit the original dataset creators and follow the dataset access terms. \\
BLAST110 AML flow cytometry cohort & Used as the real flow cytometry dataset for the AML versus healthy control experiment in Section~\ref{subsec:aml-cytof}. We use only BLAST110 and do not use LAIP29. & Open dataset associated with \citet{mocking2024computational}. The Zenodo record releases the data under the Creative Commons Attribution 4.0 International license. \\
NumPy and SciPy & Numerical linear algebra, array operations, and scientific computing utilities. & BSD license. \\
scikit-learn & Baseline utilities and clustering related routines where applicable. & BSD 3-Clause license. \\
POT, Python Optimal Transport & Optimal transport solvers, including entropic optimal transport routines where applicable. & MIT license. \\
Matplotlib and pandas & Visualization, table processing, and experiment result organization. & Matplotlib uses a BSD-compatible license, and pandas uses the BSD 3-Clause license. \\
\bottomrule
\end{tabular}
\end{table}

\clearpage

\end{document}